\pdfoutput=1
 \documentclass[a4paper, 12pt]{article}
 \usepackage[T1]{fontenc}
 \usepackage[utf8]{inputenc}

 \usepackage[tmargin=1.2in,bmargin=1.2in,lmargin=1.2in,rmargin=1.2in]{geometry}
 \usepackage{setspace} % before hyperref
 \usepackage[
 activate={true,nocompatibility},
 factor=1000,
 expansion=false,           
 protrusion=true,          
 tracking=true,            
 spacing=true,             
 shrink=5,                
 stretch=5,               
 kerning=true,             
 final                      
 ]{microtype}
 \microtypecontext{spacing=nonfrench}
 
 \usepackage{mathtools, amsthm, amssymb, bm}
 \mathtoolsset{centercolon}
 \usepackage[caption=false]{subfig}
 \usepackage{graphicx, pgfplots, tikz, xcolor}
 \pgfplotsset{compat=1.16}
 \usepackage[skip = 10pt]{caption}
 \allowdisplaybreaks[1]
 \allowdisplaybreaks[1]

 \definecolor{blue}{RGB}{43,147,206}
 \definecolor{GmailBlue}{RGB}{42, 93, 176}
 
 \definecolor{red}{RGB}{221,126,0}
 \definecolor{orange}{RGB}{208,106,11} 
 \definecolor{green}{RGB}{0,158,115} 
 \definecolor{gray}{RGB}{73, 73, 73}
 \definecolor{yellow}{RGB}{230,168,0}
 \definecolor{pink}{RGB}{211,0,214}

 \usepackage[authoryear]{natbib}

\usepackage[
bookmarks=true,
bookmarksnumbered=true,
bookmarksopen=true,
pdfborder={0 0 0},
breaklinks=true,
colorlinks=true,
linkcolor=GmailBlue,
citecolor=GmailBlue,
filecolor=GmailBlue,
urlcolor=GmailBlue,
hypertexnames=false,
plainpages=false,
]{hyperref}

\usepackage{apptools}
\AtAppendix{\counterwithin{lemma}{section}}
\AtAppendix{\counterwithin{proposition}{section}}
\AtAppendix{\counterwithin{theorem}{section}}
\AtAppendix{\counterwithin{corollary}{section}}
\AtAppendix{\counterwithin{remark}{section}}
\AtAppendix{\counterwithin{definition}{section}}

\theoremstyle{plain}

\newtheorem{proposition}{Proposition}
\newtheorem{lemma}{Lemma}

\newtheorem{prop}{\protect\propositionname}
\newtheorem{lem}{\protect\lemmaname}
\newtheorem{cor}{\protect\corollaryname}

\theoremstyle{definition}
\newtheorem{definition}{Definition}

\newtheorem{rem}{\protect\remarkname}
\newtheorem{remark}{Remark}

\providecommand{\corollaryname}{Corollary}
\providecommand{\definitionname}{Definition}
\providecommand{\lemmaname}{Lemma}
\providecommand{\propositionname}{Proposition}
\providecommand{\remarkname}{Remark}

  \newcommand*\diff{\mathop{}\!\mathrm{d}}
 \DeclareMathOperator*{\argmin}{argmin}
 \DeclareMathOperator*{\argmax}{argmax}

 \DeclareMathOperator*{\supp}{supp}
 \DeclareMathOperator*{\vex}{conv}

  \makeatletter
  \let\originalleft\left
  \let\originalright\right
  \renewcommand{\left}{\mathopen{}\mathclose\bgroup\originalleft}
  \renewcommand{\right}{\aftergroup\egroup\originalright}
  \renewcommand*{\NAT@spacechar}{\ }
  \makeatother 
  \usepackage{datetime}
  \newdateformat{specialdate}{\THEDAY~\monthname[\THEMONTH] \THEYEAR}
  \date{\specialdate\today}
 
\begin{document}
	\title{\textbf{Screening in digital monopolies}\thanks{This paper is based on the first chapter of Elia Sartori's PhD dissertation at Princeton University. Elia Sartori wishes to thank Stephen Morris for invaluable guidance and support. We thank Alberto Bennardo, Francis Bloch, Roberto Corrao, Vincenzo Denicolò, Tangren Feng, Joel Flynn, Daniel Gottlieb, Faruk Gul, Paul Heidhues, Andreas Kleiner, Jan Knoepfle, Sophie Kreutzkamp, Franz Ostrizek, Marco Pagnozzi, Eduardo Perez-Richet, Wolfgang Pesendorfer, Giorgio Saponaro, Alex Smolin, and Tymon Tatur for comments and discussions. We also thank Lorenzo Cuomo for invaluable research assistance. Financial support from Unicredit and Universities Foundation is gratefully acknowledged. An earlier version of this paper bore the title ``Competitive provision of digital goods.''}}

	\author{Pietro Dall'Ara\thanks{CSEF and University of Naples Federico II; contact: \href{mailto:pietro.dallara@unina.it}{pietro.dallara@unina.it}.} \and Elia Sartori\thanks{CSEF and University of Naples Federico II; contact: \href{mailto:elia.sartori@unina.it}{elia.sartori@unina.it}.}} 
	\maketitle
 	\begin{abstract}
A defining feature of digital goods is that replication and degradation are costless: once a high-quality good is produced, low-quality versions can be created and distributed at no additional cost. This paper studies quality-based screening in markets for digital goods. Production costs depend only on the highest quality supplied, unlike in standard screening models. The monopolist allocation exhibits two interdependent inefficiencies. First, a productive inefficiency: the monopolist underinvests in the highest quality relative to the efficiency benchmark. Second, due to a distributional inefficiency, certain buyers receive degraded versions of the produced good. Competition exacerbates productive inefficiency, but improves distributional efficiency.
 	\end{abstract}
 	\noindent Keywords: Screening, Monopoly, Digital Goods, Mechanism Design.\\ \noindent JEL codes: D42, D82, L12.
 	\clearpage 
 %	\tableofcontents
 	
 	\section{Introduction}\label{sec:Introduction}
 	Quality-based screening is pervasive across markets: sellers routinely offer multiple versions of the same product at different prices, effectively discriminating among consumers with different tastes for quality. For example, car companies distinguish between models with and without alloy wheels, and tech companies sell premium and basic versions of the same software. The canonical setting in which to study quality-based screening is the Mussa–Rosen model. In this model, the seller optimally allocates to each consumer a quality level such that the marginal cost of providing that quality equals the consumer’s virtual value. The difference between a consumer’s type (her taste for quality) and her virtual value captures the rents that her consumption generates for other types.
 	
 	This ``marginal cost equals adjusted marginal utility'' condition reveals the core force in quality screening: the quality of a type is under-provided in order to reduce the information rents of other types. This equality is due to an important feature of the model: all interdependence across types can be handled by transforming the revenues generated by each type. In particular, the model assumes no production interdependencies: total production cost is simply the sum of the cost of each good produced. We refer to this assumption, implicit in the Mussa--Rosen framework, as ``cost separability.''
 
 	However, cost separability is not realistic in every industry. It is arguably a compelling description of the car industry, in which the cost of supplying a vehicle depends on the version of that specific car, e.g., with or without alloy wheels. A software developer, instead, incurs an upfront cost to create a product of a given quality, after which the product can be replicated---and degraded, if needed---at negligible marginal cost. Hence, the cost of allocating a software product is not well approximated as depending only on the version delivered to each buyer. A better approximation is that the production cost of the entire menu is driven by the development of its top tier.
 	
 	In this paper, we develop a model of screening under an ``invest, then damage'' technology. The seller first chooses a \emph{top-quality} and pays the corresponding production cost. It can then replicate that product and supply lower qualities by degrading the cap-quality product at zero cost. In particular, producing below the cap is costless. The demand side is standard: there is a continuum of buyers with one-dimensional private information (type) and increasing-differences utility.

	Markets for digital goods provide the most natural application of our cost structure. A defining feature of digital production is near-zero replication cost \citep{goldfarb_digital_2019}, and in many settings offering additional tiers amounts to restricting features, limiting performance, or gating access---i.e., creating ``damaged'' versions at negligible incremental cost.\footnote{Section \ref{subsec:interpretation_digital} details the implications of our screening model in the context of the digital economy.} Information markets provide an equally compelling example \citep{bergemann_information_2021}. Producing a rating, or a financial report, largely requires upfront investment, e.g., data collection, analysts, consultants. Once an assessment is produced, it can be distributed to many clients  in both full and intentionally coarsened forms, e.g., redacting sections, or aggregating detail. In both industries, versioning and quality screening are pervasive: software is routinely sold in tiered editions, and information intermediaries such as Equifax offer products that differ in depth and access. Yet, the Mussa--Rosen model---and its core marginal-cost-equals-virtual-value logic---relies on separable costs, limiting its ability to speak to a first-order phenomenon in some of today’s most important markets. These observations motivate the following questions.
	Are the canonical screening insights robust to nonseparable, \emph{top-quality} costs? How does this technology reshape the familiar inefficiencies from asymmetric information? How do predictions change when the seller cannot degrade quality, or when multiple sellers compete?
	 	
 	Two benchmarks organize the analysis. First, a planner that maximizes total surplus allocates the same undamaged quality to all buyer types. The efficient allocation is a constant function of the type, because giving the highest quality of any allocation to all types increases total utility leaving costs unchanged. Second, the efficient quality equates the marginal cost of production to the average marginal utility in the population of buyers.
 	
 	The monopolist allocation features two kinds of inefficiencies. The seller chooses a cap $q^{M}$, and allocates to each type either the cap or a damaged---i.e., lower---quality. We show that a \emph{productive inefficiency} arises: the monopolist underinvests in the highest quality. Specifically, $q^{M}$ is lower than the efficient quality. The productive inefficiency interacts with the \emph{distributional inefficiency} that arises from information rents and occurs via damaging. 
 	
 	The characterization of the monopolist allocation uses a decomposition of the monopolist problem into a family of ``separable'' problems. For a fixed cap $q$, we define the cap-constrained problem as a canonical zero-cost virtual-surplus problem subject to the upper bound $q$ on the feasible quality allocations. The optimal allocation for this auxiliary problem maximizes the virtual surplus truncated at $q$. Thus, a region of high types are \emph{bunched} because they all receive the cap quality, and every low type receives the damaged version of the cap that maximizes virtual surplus conditional on her type.
 	
 	The seller jointly solves the cap-constrained problem and chooses the cap by equating the marginal production costs with the marginal revenue from relaxing the cap. The marginal revenue equals the marginal utility of the lowest type that is bunched, the \emph{marginally bunched} type, weighted by the mass of the bunching region. Intuitively, the extra revenue given by a higher cap comes from charging a higher price for the good purchased by all bunched types, because the allocation of low types is unchanged.
 	
 	The marginally bunched type has a special role: she determines the price of the top quality increment and splits the market into two regions. Above the marginally bunched type, there is distributional efficiency because everyone gets the undamaged quality, but productive inefficiency is active because the top quality is lower than the efficient quality. Below this type, there is downgrading, but no additional productive loss relative to a monopolist who replicates and damages an efficient-quality good.
 	
 	In certain settings, regulatory or technological constraints make versioning infeasible. A ban on screening increases the share of buyers receiving an undamaged good. The marginal revenues in this case account for both the price increment and the inframarginal types, similarly to the screening case. The key difference lies in how the relevant cutoff type is determined. Without screening, the cutoff type makes the seller indifferent between serving her or not; with screening, the cutoff type---marginally bunched---makes the seller indifferent between damaging her quality or not. As a result, the no-screening seller underinvests in quality with respect to the screening case, so productive inefficiency is strengthened.
 	
 	We introduce competition in a stylized framework by considering multiple firms that commit to a cap before a pricing stage. The familiar Bertrand-pricing force drives the price of the second-highest quality produced to zero, so only one firm emerges as a monopolist. Effectively, the equilibrium adds a ``competitive'' constraint in the problem of the endogenous monopolist, stating that: any quality feasible for at least one competitor must be offered for free. The emergence of a monopoly induces a war-of-attrition feature in the production stage, so the production strategies are mixed in equilibrium. We show that every equilibrium induces a greater productive inefficiency and a more prevalent distributional efficiency than the monopolist allocation.
 	
 	\paragraph{Outline} The following paragraph summarizes the related literature. The main model and results are in Section \ref{sec:model}. Section \ref{sec:interpretation} offers interpretations and applications; in Section \ref{sec:extensions} we introduce a no-screening constraint and competition. The Appendix contains a more general setup and all proofs.

 	\paragraph{Related literature}
 	The idea that a seller may want to damage its products for screening purposes was introduced by \citet*{deneckere_damaged_1996}, who consider a seller paying damaging costs, possibly zero, to create inferior versions of an available product. We study a monopolist engaged in quality-based screening \citep*{mussa_monopoly_1978,maskin_monopoly_1984,wilson_nonlinear_1993}, who uses damaging following the logic of \citeauthor*{deneckere_damaged_1996}. The screening models in \citet*{oren_capacity_1985}, \citet*{grubb_selling_2009}, and \citet*{corrao_nonlinear_2023} feature consumers who damage goods through underutilization. The seller in \citet*{hahn_damaged_2006} and \citet*{inderst_durable_2008} offers damaged goods over time to mitigate the Coasean commitment problem. Our model introduces a production cost that depends only on the highest quality supplied. The cost structure is not separable in the sense that the total cost cannot be expressed as the sum of costs incurred by interacting separately with every type, contrary to the cited work.\footnote{A kind of nonseparability arises with inventory costs, in which any distribution covered by the inventory has zero costs and other distributions are infeasible \citep*{loertscher_monopoly_2022,bergemann_screening_2025}. Our costs are sensitive to the highest quality supplied, and so are not of the inventory type; however, the distribution stage of the monopolist problem can be interpreted as featuring inventory cost (Lemma \ref{lem:decomposition}). Capacity costs can be interpreted as not separable across consumers. For instance, the capacity cost in \citet*{boiteux_peak_1960} is not separable across time, in a model without asymmetric information, and the monopoly airline in \citet*{gale_advance_1993} pays capacity costs, over-investing compared to the efficient level.} This model captures a distinctive property of digital goods.\footnote{Research on digital economics spans from the study of platforms \citep*{rochet_platform_2003} to the pricing of cloud computing and language models \citep*{bergemann_economics_2025,bergemann_robust_2025}.}
 	
 	Information goods are freely replicable \citep*{bergemann_information_2021} and can be versioned cheaply \citep*{shapiro_versioning_1998}. In the typical approach to modeling information markets, a seller replicates and garbles Blackwell experiments at zero cost, but information goods exhibit other important properties, such as non-excludability \citep*{admati_monopolistic_1986}, payoff externalities \citep*{ichiharshi_economics_2021,bonatti_selling_2024,rodriguez_strategic_2024}, and multi-dimensionality (\citealp*{bergemann_design_2018}, in contrast with pure vertical differentiation.) Our model complements the literature by considering the production stage and identifying a source of monopoly inefficiency---the productive inefficiency and its interaction with the distributive properties of the monopolist allocation---although abstracting from other properties of information. The literature on excludable public goods studies goods that are freely replicable and excludable through a price system (e.g., \citealp*{moulin_1994_serial, meisner_monetizing_2025}.)
 	
 	Our costs can be viewed as generating a production externality, because the cost of allocating a quality to a type depends on the quality allocated to other types. The literature on screening with externalities considers the demand externalities that arise if the allocation of a buyer affects the utility that can be extracted from other buyers \citep*{segal_contracting_1999,jehiel_multidimensional_1999,segal_robust_2003}. Recently, \citet*{halac_pricing_2024} study the profit guarantee with network goods.
 	
 	In the single-agent interpretation of the model, replication is irrelevant, and our cost structure represents a seller who produces before eliciting the type of the buyer and damaging the produced good. Hence, the comparison with the workhorse screening model identifies the role of the timing of production, because the seller produces after eliciting the type in \citet*{mussa_monopoly_1978}.
 	
 	The game with which we introduce competition has a natural timing given the two-stage interpretation of our monopolist problem. The two-stage game resembles the price-and-quantity competition of \citet*{kreps_quantity_1983}, in which every firm commits to a capacity before competing in prices, given a rationing rule and absent buyer private information. In our model, buyers have single-unit demand and private information, so the nature of competition is starkly different. We discuss alternative models of competition in Section \ref{subsec:competition}.

 	\section{Model and results}\label{sec:model}
 	
 	\subsection{Model}\label{subsec:model}
 	
 	A seller (she) faces a unit mass of buyers. Each buyer (he) has a
 	type $\theta$ that is distributed according to the distribution function
 	$F$; $F$ is twice differentiable, has support $\Theta\coloneqq[0,1]$,
 	and has increasing virtual value $\varphi\colon\theta\mapsto\theta-(1-F(\theta))/F'(\theta)$.
 	The payoff of type $\theta$ from quality $q\in Q\coloneqq\mathbb{R}_{+}$
 	and transfer $t\in\mathbb{R}$ is $u(q,\theta)-t$, in which the utility
 	from quality is $u(q,\theta)=g(q)+\theta q$, for a twice-differentiable,
 	increasing, and strictly concave $g\colon Q\to\mathbb{R}$
 	with $g(0)=0$. An \emph{allocation} is a measurable $\bm{q}\colon\Theta\to Q$,
 	and the cost of producing and distributing according to $\bm{q}$
 	is $c(\sup_{\theta\in\Theta}\bm{q}(\theta))$, for an increasing,
 	strictly convex, and differentiable $c\colon Q\to\mathbb{R}_{+}$;
 	we denote $\sup_{\theta\in\Theta}\bm{q}(\theta)$ by $\sup\bm{q}$.
 	We assume that $g$ and $c$ satisfy the Inada conditions $\lim_{q\to\infty}g'(q)=c'(0)=0$
 	and $\lim_{q\to\infty}c'(q)=g'(0)=\infty$.
 	
 	\subsection{Efficiency}
 	
 	An allocation $\bm q$ is \emph{efficient} if it maximizes total welfare, that is, $\bm q$ maximizes \allowbreak $\int_\Theta u(\bm q (\theta) , \theta ) \diff F(\theta)%
 	-%
 	c(\sup \bm q)$.
 	\begin{prop}
 		\label{prop:efficient}Let $q^{\star}$ be the quality $q$ solving
 		$g'(q)+\int_{\Theta}\theta\diff F(\theta)=c'(q)$. The efficient allocation
 		is given by $\bm{q}^{\star}(\theta)=q^{\star}$ for all $\theta$.
 	\end{prop}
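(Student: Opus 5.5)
The proof reduces the infinite-dimensional welfare problem to a one-dimensional problem in the cap $q=\sup\bm q$. First, I will show that for any allocation $\bm q$ with $\sup\bm q=\bar q<\infty$, the constant allocation $\bar{\bm q}(\theta)=\bar q$ weakly improves welfare. Costs are identical, since $\sup\bar{\bm q}=\bar q$. Because $u(\cdot,\theta)=g+\theta\,\cdot$ is strictly increasing in $q$ for every $\theta\ge 0$ ($g'>0$ and $\theta\ge0$), we have $u(\bar q,\theta)\ge u(\bm q(\theta),\theta)$ pointwise, and integrating against $F$ gives the improvement. The inequality is strict unless $\bm q=\bar q$ $F$-almost everywhere. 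Allocations with $\sup\bm q=\infty$ have cost $c(\infty)=\infty$: strict convexity of $c$ plus the Inada condition $c'\to\infty$ give $c(q)\to\infty$. They can therefore be excluded, provided welfare is well defined. At worst I will note that such allocations cannot be optimal, because $g(q)+\bar\theta q-c(q)\to-\infty$ dominates. Here $\bar\theta\coloneqq\int_\Theta\theta\diff F(\theta)$.

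Second, restricting to constant allocations, welfare becomes $W(q)=g(q)+\bar\theta q-c(q)$. This function is strictly concave, since $g$ is strictly concave and $c$ is strictly convex, and it is differentiable. Its derivative $W'(q)=g'(q)+\bar\theta-c'(q)$ tends to $+\infty$ as $q\to0$, because $g'(0)=\infty$ and $c'(0)=0$. It tends to $-\infty$ as $q\to\infty$, because $g'\to0$ and $c'\to\infty$. As $W'$ is strictly decreasing, by the intermediate value theorem there is a unique root $q^\star$, which is the unique maximizer of $W$.

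Combining the two steps, for any $\bm q$ we have $\text{Welfare}(\bm q)\le W(\sup\bm q)\le W(q^\star)$, so $\bm q^\star\equiv q^\star$ is efficient. It is also unique up to $F$-null sets: equality in the first step forces $\bm q$ to be constant a.e. at its sup, and equality in the second step forces that constant to be $q^\star$.

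The only delicate point is the Inada behavior at the boundary. Interpreting $g'(0)=\infty$ as a limit, $W'$ is defined on $(0,\infty)$ with the limits above, so the root is interior, and $q=0$ is excluded because $W$ is increasing near $0$. The rest of the argument is routine.
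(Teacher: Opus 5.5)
Your proof is correct and follows essentially the paper's route. The appendix splits the welfare problem into cap-constrained problems $\mathcal{P}^{\star}(q)$ with value $U(q)$ (Lemma \ref{app:lem:eff:dec}) and solves each by capping the surplus maximizer $\bm{\alpha}$ at $q$, which here gives the constant allocation $q$ because $u$ is increasing in quality. It then obtains $q^{\star}$ from the first-order condition of the concave problem $\max_q U(q)-c(q)$, with $U(q)=g(q)+q\int_\Theta\theta\diff F(\theta)$ in the main-text specification. The paper works in a more general setting (compact $Q$, possibly interior $\bm{\alpha}$, a separable cost $k$); your dominance-plus-concavity argument is exactly that argument specialized to the main model.
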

 	The efficiency benchmark in markets for digital goods is characterized
 	by two features. First, the allocation is constant, which follows
 	from the fact that any non-constant allocation can be improved upon
 	by giving the maximum quality of the allocation to every type. Total
 	costs are unchanged, and total utility increases. This feature is
 	a departure from traditional markets in which the efficient allocation
 	is increasing. Second, the quality $q^{\star}$ is characterized by
 	equating the marginal cost of production with its marginal social
 	value: the average marginal utility in the population.
 	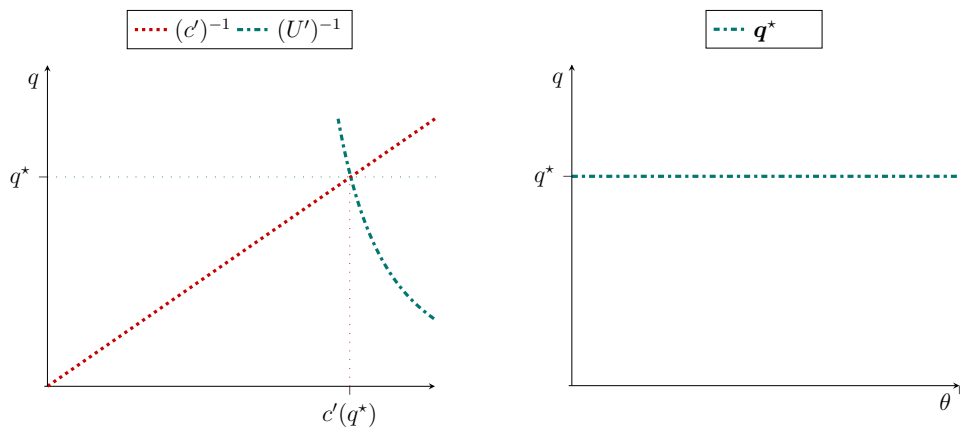
\begin{figure}[t]
 		\centering%
 		\begin{minipage}[t]{0.5\columnwidth}%
 			\centering\subfloat[The quality $q^\star$ equates marginal production cost to average marginal utility. The marginal social value of quality is $U'\colon q\mapsto g'(q)+\int_{\Theta}\theta\diff F(\theta)$.]{
 			\begin{tikzpicture}[scale=0.75]
 				\begin{axis}[
 					y axis line style = {opacity=0, black},
 					axis lines=middle,
 					tick align = outside,
 					ylabel={$q$},
 					ylabel style = {below  left},
 					domain=0.001:4,
 					samples=300,
 					ymin=-0.015, ymax=4.8,
 					xmin=-0.008, xmax=1,
 					restrict y to domain=0:4,
 					xtick={0, 0.78},
 					xticklabels={$0$, $c'(q^\star)$}, 
 					ytick={3.13},
 					yticklabels={$q^\star$},
 					legend style={at={(0.5,1.05)}, anchor=south, legend columns=4},
 					]
 					\addplot[blue, ultra thick, dotted] ({0.25*x}, x);
 					\addlegendentry{$(c')^{-1}$}
 					\addplot[green, ultra thick, dashdotted] ({1/(2*sqrt(x)) + 0.5}, x);
 					\addlegendentry{$(U')^{-1}$}
 					%(U')^{-1}
 					\addplot[loosely dotted, thin, green, domain=0:2] {3.13};
 					\draw[green, loosely dotted] (axis cs:0.78,0) -- (axis cs:0.78,3.13); 
 					%Y AXIS
 					\draw[] (axis cs:0,0) -- (axis cs:0,4.79); 
 					\draw[-{stealth}] (axis cs:0,4.8);
 				\end{axis}
 		\end{tikzpicture}}%
 		\end{minipage}%
 		\begin{minipage}[t]{0.5\columnwidth}%
 			\centering\subfloat[The efficient allocation $\bm q^\star$ gives quality $q^\star$ to every type.]{
 			\begin{tikzpicture}[scale=0.75]
 				\begin{axis}[
 					y axis line style = {opacity=0, black},
 					axis lines=middle,
 					tick align = outside,
 					xlabel={$\theta$},
 					xlabel style={below left},
 					ylabel={$q$},
 					ylabel style={below left},
 					domain=0:1,
 					samples=600,
 					ymin=-0.015, ymax=4.8,
 					xmin=-0.008, xmax=1,
 					restrict y to domain=0:4,
 					xtick={1},
 					xticklabels={$\textcolor{white}{c'(q^\star)}$}, 
 					ytick={3.13},
 					yticklabels={ $q^\star$},
 					legend style={at={(0.5,1.05)}, anchor=south, legend columns=4},
 					]
 					\addplot[green, ultra thick, dashdotted, domain=0:1] {3.13}; \addlegendentry{$\bm q^\star \textcolor{white}{)^-1}$};
 					%Y AXIS
 					\draw[] (axis cs:0,0) -- (axis cs:0,4.79); 
 					\draw[-{stealth}] (axis cs:0,4.8);
 				\end{axis}
 		\end{tikzpicture}
 	}%
 		\end{minipage}\caption{Panel (a) illustrates the efficient quality $q^{\star}$; Panel (b)
 			illustrates the efficient allocation. For these graphs and the following
 			ones, $\theta$ is uniformly distributed, $c(q)=\frac{1}{2}q^{2}$,
 			and $g(q)=\sqrt{{q}}$, unless specified otherwise.}\label{fig:efficiency}
 	\end{figure}
 	Figure \ref{fig:efficiency}
 	illustrates the efficient allocation. We benchmark the monopolist allocation both against the efficient
 	quality $q^{\star}$ and the property that no type receives a damaged
 	good under the efficient allocation.
 	\begin{rem}
 		If the seller observes the type of the buyers, then she can charge
 		price $u(q^{\star},\theta)$ for quality $q^{\star}$ to every type
 		$\theta$. In this way, the seller implements the efficient allocation
 		and extracts all the surplus. In the rest of the paper, we assume
 		that the type of a buyer is her private information.
 	\end{rem}
 	
 	\subsection{The monopolist allocation}
 	
 	The seller maximizes profits by choice of a direct mechanism: a pair
 	of an allocation $\bm{q}$ and a transfer function $t\colon\Theta\to\mathbb{R}$.
 	The seller's problem $\mathcal{P}^{M}$ is
 	\begin{align*}
 		& \sup_{\bm{q},\,t}\int_{\Theta}t(\theta)\diff F(\theta)-c(\sup\bm{q})\ \text{subject to:}\\
 		& u(\bm{q}(\theta),\theta)-t(\theta)\ge u(\bm{q}(\theta'),\theta)-t(\theta')\ \text{{and}}\ u(\bm{q}(\theta),\theta)\ge0,\ \text{for all}\;(\theta,\theta')\in\Theta^{2}.
 	\end{align*}
 	The allocation in a solution to $\mathcal{P}^{M}$ determines the
 	associated transfers by standard arguments, so we identify the solution
 	to the monopolist problem with its allocation, denoted by $\bm{q}^{M}$.
 	The characterization of $\bm{q}^{M}$ does not follow directly from
 	known arguments because we cannot rely on ``pointwise'' analysis.
 	We address this non-separability by transforming $\mathcal{P}^{M}$
 	into a family of separable problems, $(\mathcal{P}(q))_{q\ge0}$,
 	whose values can be weighted by their associated costs and compared
 	across problems.
 	\begin{lem}
 		\label{lem:decomposition}For quality $q$, let the $q$-constrained
 		problem $\mathcal{P}(q)$ be
 		\begin{align*}
 			V(q)=\sup_{\bm{q}}\int_{\Theta}g(\bm q(\theta))+\varphi(\theta)\bm q(\theta)\diff F(\theta)\ \text{subject to:}\\
 			\bm{q}\:\text{is nondecreasing},\;\bm{q}(\theta)\le q,\;\text{for all}\;\theta\in\Theta.
 		\end{align*}
 		The allocation $\bm{q}$ solves $\mathcal{P}^{M}$ if and only if:
 		$\bm{q}$ solves $\mathcal{P}(q^{M})$, where $q^{M}$ solves $\max_{\hat{q}}V(\hat{q})-c(\hat{q})$.
 	\end{lem}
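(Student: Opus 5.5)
The plan is to reduce $\mathcal{P}^{M}$ to a virtual-surplus problem with the usual Myersonian arguments, and then split the optimization into an outer choice of the cap $\sup\bm q$ and an inner choice of the allocation below it.

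\emph{Step 1: reduce to virtual surplus.} Since $u(q,\theta)=g(q)+\theta q$ has increasing differences, incentive compatibility holds if and only if two conditions hold. First, $\bm q$ is nondecreasing. Second, the indirect utility $U(\theta)=u(\bm q(\theta),\theta)-t(\theta)$ satisfies the envelope formula $U(\theta)=U(0)+\int_0^\theta \bm q(s)\diff s$. Participation then reduces to $U(0)\ge0$, and optimality requires $U(0)=0$. Integrating by parts, revenue equals $\int_\Theta [g(\bm q(\theta))+\varphi(\theta)\bm q(\theta)]\diff F(\theta)$. The cost term depends only on $\bm q$, so $\mathcal{P}^{M}$ becomes
\[
\sup_{\bm q \text{ nondecreasing}} \int_\Theta \bigl[g(\bm q)+\varphi\bm q\bigr]\diff F - c(\sup \bm q).
\]
Transfers are pinned down by the allocation, which justifies identifying a solution with its allocation.

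\emph{Step 2: split by the cap.} Write the reduced problem as $\sup_{\hat q\ge 0}\sup\{W(\bm q)-c(\hat q):\bm q\text{ nondecreasing},\ \sup\bm q=\hat q\}$, where $W$ denotes the virtual surplus. I then need the inner value with the equality constraint $\sup\bm q=\hat q$ to coincide, in the outer maximization, with $V(\hat q)-c(\hat q)$, where $V$ uses the inequality constraint $\bm q\le\hat q$.

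One direction is immediate: any $\bm q$ with $\sup\bm q=\hat q$ is feasible for $\mathcal{P}(\hat q)$. Conversely, a feasible $\bm q$ for $\mathcal{P}(\hat q)$ has $\sup\bm q=q'\le\hat q$. Since $c$ is increasing, $W(\bm q)-c(q')\ge W(\bm q)-c(\hat q)$. Hence
\[
\sup_{\bm q}\bigl[W(\bm q)-c(\sup\bm q)\bigr]=\sup_{\hat q}\bigl[V(\hat q)-c(\hat q)\bigr].
\]
The \emph{if} direction follows. Suppose $q^M$ maximizes $V-c$ and $\bm q$ solves $\mathcal{P}(q^M)$. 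Then $W(\bm q)-c(\sup\bm q)\ge V(q^M)-c(q^M)$, which equals the optimum. So $\bm q$ solves the reduced problem, and hence $\mathcal{P}^{M}$.

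For the \emph{only if} direction, let $\bm q$ solve $\mathcal{P}^{M}$ and set $\hat q=\sup\bm q$. Then $W(\bm q)-c(\hat q)=\max(V-c)\ge V(\hat q)-c(\hat q)\ge W(\bm q)-c(\hat q)$. All of these are therefore equalities, so $\hat q$ maximizes $V-c$ and $\bm q$ solves $\mathcal{P}(\hat q)$.

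\emph{Step 3: existence.} I need a maximizer of $V-c$ to exist (and, if the statement reads ``the'' $q^{M}$, to be unique) so that the sup in $\mathcal{P}^{M}$ is attained. The approach is as follows.

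\begin{itemize}
\item \emph{Concavity.} $V$ is concave, because the feasible sets are convex and nested linearly: a convex combination of feasible allocations for caps $q_1,q_2$ is feasible for the combined cap, and the objective is concave in $\bm q$. Then $V-c$ is strictly concave, since $c$ is strictly convex.
\item \emph{Boundedness.} $V(q)\le \sup_{x\ge 0}\bigl(g(x)+\max\varphi\, x\bigr)$, which is finite: $g$ is concave with $g'\to0$, so this is bounded (if $\max\varphi\le0$), or grows at most linearly. In either case $c$, which is superlinear by $c'\to\infty$, dominates. So $V-c\to-\infty$.
\item \emph{Continuity.} $V$ is continuous, as a concave finite function on $(0,\infty)$, with monotone continuity at $0$.
\item \emph{Existence for $\mathcal{P}(q)$.} A maximizer exists by Helly's selection theorem applied to nondecreasing functions bounded by $q$, together with upper semicontinuity of the integrand, using dominated convergence.
\end{itemize}

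The main obstacle I expect is Step 3, specifically making existence and attainment rigorous: Helly compactness for $\mathcal{P}(q)$ and the growth argument for $V-c$. The decomposition in Step 2 itself is elementary once monotonicity of $c$ is used to relax the equality cap constraint to an inequality.
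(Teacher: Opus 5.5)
Your proposal is correct and follows essentially the paper's route: a Myersonian reduction to the monotone virtual-surplus problem (which the paper cites as standard), then splitting the supremum by the cap and using that $c$ is increasing to pass between $\sup\bm q=\hat q$ and $\bm q\le\hat q$. This is exactly the paper's $v_1=v_2$ argument, and your handling of the equality-versus-inequality cap constraint is cleaner than the paper's. Your Step 3 is not needed for the equivalence itself, since both sides are vacuous if no maximizer exists; the paper proves existence and uniqueness of $q^M$ in a separate lemma. If you keep Step 3, the boundedness step has a slip: $\max\varphi=\varphi(1)=1>0$, so $\sup_{x\ge0}\bigl(g(x)+\max\varphi\,x\bigr)$ is infinite, and the bound you want is $V(q)\le g(q)+q$, which the superlinear $c$ dominates.
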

 	The value of the $q$-constrained problem is the maximal revenue that
 	a monopolist obtains by damaging and replicating a quality-$q$ good
 	at no cost. The characterization of $V(q)$ uses the standard approach
 	of replacing the incentive constraints with a monotonicity constraint
 	and obtaining the virtual surplus in the objective. Hence, the ``pointwise''
 	problem $\mathcal{P}(q)$ can be solved using traditional techniques.
 	The cost of doing so is that we need to solve a collection of constrained
 	problems, and compare their values with the quality acquisition costs,
 	which requires to characterize the returns from relaxing the upper-bound
 	constraint.
 	
 	Lemma \ref{lem:decomposition} is important for two reasons. First,
 	the decomposition uncovers a timing in the monopoly provision of digital
 	goods: the seller produces a single good of some quality $q$, and,
 	then, she damages and replicates the quality-$q$ good at no extra
 	costs. Second, since every $\mathcal{P}(q)$ is a separable problem
 	and the acquisition step reduces to a unidimensional maximization,
 	Lemma \ref{lem:decomposition} provides a fundamental simplification
 	of the monopolist problem.
 	\begin{figure}[t]
 		\centering%
 		\begin{minipage}[t]{0.5\columnwidth}%
 			\centering \subfloat[If $q>\bm \beta (0)$, then the marginally bunched type is interior; i.e., $b(q)\in(0, 1)$. Every type $\theta>b(q)$ receives $q$, and type $\theta<b(q)$ receives a damaged quality.]{
 			\begin{tikzpicture}[scale=0.75]
 				\begin{axis}[
 					%axis y line=none,
 					y axis line style = {opacity=0, black},
 					axis lines=middle,
 					tick align = outside,
 					xlabel={ $\theta$},
 					xlabel style={below left},
 					ylabel={},
 					domain=0:1,
 					samples=600,
 					ymin=-0.015, ymax=4.8,
 					xmin=-0.008, xmax=1,
 					restrict y to domain=0:4,
 					xtick={0, 0.295, 0.5},
 					xticklabels={$0$, $\textcolor{black}{b(q)}$, $\varphi^{-1}(0)$}, 
 					ytick={0.25, 1.49,  4},
 					yticklabels={$\bm \beta(0)$, $\textcolor{black}{q}$, $\infty$},
 					legend style={at={(0.5,1.05)}, anchor=south, legend columns=4},
 					]
 					\addplot[black, very thick, domain=0:0.37] {1/(4*(1 - 2*x)^2)}; \addlegendentry{$\bm \beta$};
 					% beta MON	
 					\addplot[red, ultra thick, densely dashed, domain=0.295:1] {1.49};  \addlegendentry{$\bm q$};
 					\addplot[red, loosely dotted, domain=0:0.295] {1.49};
 					\draw[red, loosely dotted] (axis cs:0.295,0) -- (axis cs:0.295,1.49); 
 					\addplot[red,  ultra thick, densely dashed, domain=0:0.295] {min(1/(4*(1 - 2*x)^2), 1.49)};
 					% beta
 					\addplot[black, very thick, domain=0.5:1]{4};
 					\draw[black, loosely dotted] (axis cs:0.5,0) -- (axis cs:0.5,4); 
 					%Y AXIS
 					\draw[densely dotted] (axis cs:0,3.75) -- (axis cs:0,4); 
 					\draw[] (axis cs:0,0) -- (axis cs:0,3.75); 
 					\draw[] (axis cs:0,4) -- (axis cs:0,4.79); 
 					\draw[-{stealth}] (axis cs:0,4.8);
 				\end{axis}
 		\end{tikzpicture}
 	}%
 		\end{minipage}\hfill{}%
 		\begin{minipage}[t]{0.5\columnwidth}%
 			\centering\subfloat[If $q\le\bm \beta (0)$, then full bunching occurs: all types receive $q$. The marginally bunched type is $b(q)=0$.]{\begin{tikzpicture}[scale=0.75]
 					\begin{axis}[
 						%axis y line=none,
 						y axis line style = {opacity=0, black},
 						axis lines=middle,
 						tick align = outside,
 						xlabel={ $\theta$},
 						xlabel style={below left},
 						ylabel={},
 						domain=0:1,
 						samples=600,
 						ymin=-0.015, ymax=4.8,
 						xmin=-0.008, xmax=1,
 						restrict y to domain=0:4,
 						xtick={0, 0.5},
 						xticklabels={$0$, $\varphi^{-1}(0)$}, 
 						ytick={0.17,  4},
 						yticklabels={$q$, $\infty$},
 						legend style={at={(0.5,1.05)}, anchor=south, legend columns=4},
 						]
 						\addplot[black, very thick, domain=0:0.37] {1/(4*(1 - 2*x)^2)}; \addlegendentry{$\bm \beta$};
 						% beta MON	
 						\addplot[red, ultra thick, densely dashed, domain=0:1] {0.17};  \addlegendentry{$\bm q$};
 						% beta
 						\addplot[black, very thick, domain=0.5:1]{4};
 						\draw[black, loosely dotted] (axis cs:0.5,0) -- (axis cs:0.5,4); 
 						%Y AXIS
 						\draw[densely dotted] (axis cs:0,3.75) -- (axis cs:0,4); 
 						\draw[] (axis cs:0,0) -- (axis cs:0,3.75); 
 						\draw[] (axis cs:0,4) -- (axis cs:0,4.79); 
 						\draw[-{stealth}] (axis cs:0,4.8);
 					\end{axis}
 			\end{tikzpicture}
 		}%
 		\end{minipage}\caption{The solution $\bm{q}$ to $\mathcal{P}(q)$ caps the virtual-surplus
 			maximizer $\bm{\beta}$ at $q$. Panel (a) illustrates the solution
 			for $q>\bm{\beta}(0)$; Panel (b) illustrates the solution for $q\le\bm{\beta}(0)$.}
 		\label{fig:beta}
 	\end{figure}
 	To characterize the monopolist allocation,
 	we define the maximizer of the virtual surplus of a zero-cost monopolist,
 	$\bm{\beta}$, and its right-continuous inverse $b$ (Figure \ref{fig:beta}).
 	
 	\begin{definition}
 		The virtual-surplus maximizer is $\bm{\beta}\colon\theta\mapsto\arg\max_{q}g(q)+\varphi(\theta)q$,
 		with $\bm{\beta}(\theta)=\infty$ if $\varphi(\theta)\ge0$; the generalized
 		inverse of the virtual-surplus maximizer is: $b\colon q\mapsto\inf\{\theta\mid\bm{\beta}(\theta)\ge q\}$. 
 	\end{definition}
 	
 	The relationship between the allocation $\bm{\beta}$ and $b$ is
 	governed by
 	\begin{align}
 		g'(q)+\varphi(\theta)=0.\label{eq:b:equation}
 	\end{align}
 	Specifically, $b(q)$ is the type $\theta$ solving the above equation,
 	if $g'(q)\le-\varphi(0)$; and $\bm{\beta}(\theta)$ is the quality
 	$q$ solving the above equation, if $\varphi(\theta)<0$. These objects
 	are derived solely from the preference primitives and fully characterize
 	the monopolist allocation via the decomposition in Lemma \ref{lem:decomposition}. 
 	\begin{prop}
 		\label{prop:monopolist}Let $q^{M}$ be the quality $q$ solving $\left(1-F\left(b(q)\right)\right)\left(b(q)+g'(q)\right)=c'(q)$.
 		The monopolist allocation is given by $\bm{q}^{M}(\theta)=\min\{\bm{\beta}(\theta),q^{M}\}$
 		for all $\theta$. Moreover, it holds that $q^{M}<q^{\star}$. 
 	\end{prop}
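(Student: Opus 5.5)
Using Lemma \ref{lem:decomposition}, I split the proof into three steps: solve $\mathcal{P}(q)$ for each fixed cap $q$, compute $V'(q)$, and maximize $V(q)-c(q)$.

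\emph{Step 1: solving $\mathcal{P}(q)$.} I claim $\bm{q}_q(\theta)\coloneqq\min\{\bm{\beta}(\theta),q\}$ solves $\mathcal{P}(q)$. Pointwise, $g(x)+\varphi(\theta)x$ is strictly concave in $x$ and maximized at $\bm{\beta}(\theta)$; when $\varphi(\theta)\ge0$ it is increasing on all of $Q$. So the maximizer over $[0,q]$ is $\min\{\bm{\beta}(\theta),q\}$. Since $\varphi$ is increasing, $\bm{\beta}$ is nondecreasing, and so is $\bm{q}_q$. Thus the pointwise maximizer satisfies the monotonicity constraint, and it solves $\mathcal{P}(q)$, uniquely up to null sets. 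By the definition of $b$, types $\theta\ge b(q)$ receive the cap $q$ and types $\theta<b(q)$ receive $\bm{\beta}(\theta)<q$.

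\emph{Step 2: the derivative of $V$.} We have
\[
V(q)=\int_0^{b(q)}\bigl[g(\bm{\beta}(\theta))+\varphi(\theta)\bm{\beta}(\theta)\bigr]\diff F(\theta)+\int_{b(q)}^1\bigl[g(q)+\varphi(\theta)q\bigr]\diff F(\theta).
\]
I differentiate in $q$. The boundary terms at $b(q)$ cancel: the integrand is continuous there, because $\bm{\beta}(b(q))=q$ when $b(q)>0$, and $b$ is locally constant at $0$ when $q<\bm{\beta}(0)$. This gives
\[
V'(q)=(1-F(b(q)))g'(q)+\int_{b(q)}^1\varphi\diff F.
\]
Integrating the virtual value, $\int_{x}^1\varphi\diff F=x(1-F(x))$, so
\[
V'(q)=(1-F(b(q)))\bigl(b(q)+g'(q)\bigr).
\]
An envelope-theorem argument in the style of Milgrom--Segal is an alternative route. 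The main obstacle is rigor here: $b$ is continuous, since $\bm{\beta}$ is strictly increasing where $\varphi<0$ and $g$ is strictly concave, but it is only piecewise differentiable, with a kink at $q=\bm{\beta}(0)$. I will therefore work with one-sided derivatives, or show directly that $V$ is $C^1$ because the boundary terms cancel.

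\emph{Step 3: the optimal cap.} I show $V$ is concave, which makes the first-order condition $V'(q)=c'(q)$ sufficient and, with $c$ strictly convex, gives a unique solution. Since the constraint sets of $\mathcal{P}(q)$ are convex in $q$ and the objective is concave in the allocation, $V$ is concave: mixing the optimal allocations for $q_1$ and $q_2$ is feasible for the mixed cap. Existence of a solution follows from the Inada conditions. As $q\to0$, $b(q)=0$ and $V'(q)=g'(q)+\int\theta\diff F\to\infty$ while $c'(0)=0$. As $q\to\infty$, $V'(q)\le g'(q)+1$ stays bounded while $c'(q)\to\infty$. Then $\bm{q}^M=\min\{\bm{\beta},q^M\}$ follows from Lemma \ref{lem:decomposition} together with Step 1.

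\emph{Step 4: $q^M<q^\star$.} Let $W'(q)=g'(q)+\int\theta\diff F$ denote the marginal social value. For any $x\in[0,1]$,
\[
\int\theta\diff F-(1-F(x))x\ \ge\ \int_x^1\theta\diff F-(1-F(x))x\ \ge\ 0.
\]
Also $F(x)g'(q)\ge0$, with strict inequality whenever $x>0$. If $x=0$, the first inequality is strict whenever $F$ puts mass on $(0,1]$, which holds since the support is $[0,1]$. Applying this with $x=b(q)$ gives $V'(q)<W'(q)$ for all $q$. Since $c'$ is increasing, $W'-c'$ is strictly decreasing and vanishes at $q^\star$. Evaluating at $q^M$, $W'(q^M)-c'(q^M)>V'(q^M)-c'(q^M)=0$, and hence $q^M<q^\star$.
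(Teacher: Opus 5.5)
Your proof is correct and follows the paper's route. You use the decomposition, show that the truncated virtual-surplus maximizer solves $\mathcal{P}(q)$, derive $V'(q)=\int_{[b(q),1]}(g'(q)+\varphi(\theta))\diff F(\theta)=(1-F(b(q)))(b(q)+g'(q))$, establish concavity of $V$, and compare $V'<g'+\int\theta\diff F$; your joint-convexity argument for concavity is a tidy substitute for the paper's check that $V'$ is nonincreasing. Two harmless slips: for $q<\bm{\beta}(0)$ you have $b(q)=0$ and hence $V'(q)=g'(q)$ (the expected virtual value is zero), not $g'(q)+\int\theta\diff F$, though it still diverges as $q\to0$; and at $x=0$ in Step 4 it is the second inequality, $\int\theta\diff F>0$, that is strict.
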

 	Here is the intuition for the result. The solution to the $q$-constrained
 	problem is determined by capping the virtual-surplus maximizer at
 	the quality $q$.
 	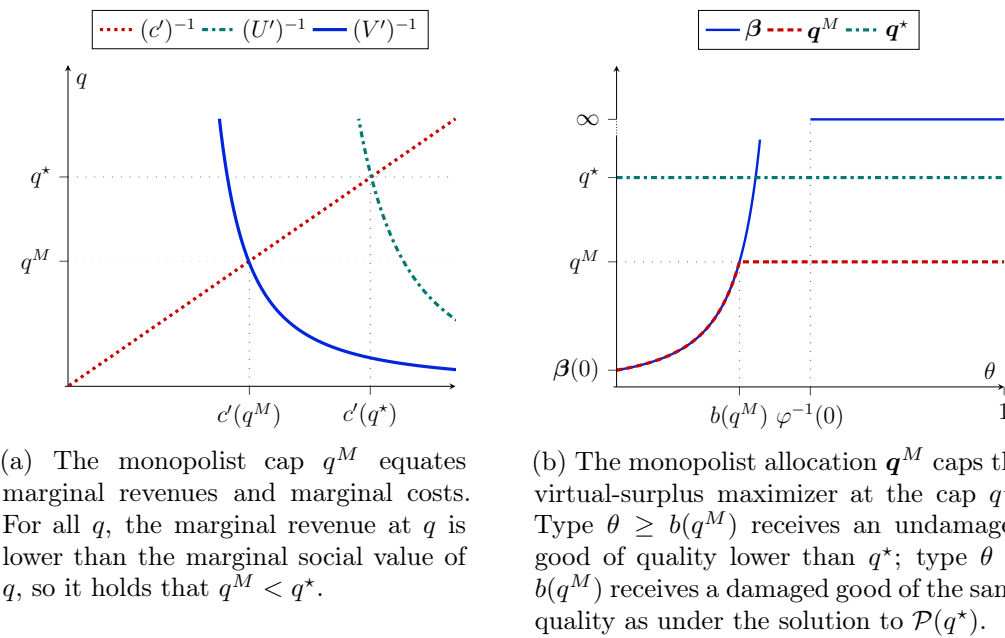
\begin{figure}[t]
 		\centering%
 		\begin{minipage}[t]{0.5\columnwidth}%
 			\centering \subfloat[The monopolist cap $q^M$ equates marginal revenues and marginal costs. For all $q$, the marginal revenue at $q$ is lower than the marginal social value of $q$, so it holds that $q^M<q^\star$.]{
 			\begin{tikzpicture}[scale=0.75]
 				\begin{axis}[
 					y axis line style = {opacity=0, black},
 					axis lines=middle,
 					tick align = outside,
 					ylabel={$q$},
 					domain=0.01:4,
 					samples=300,
 					ymin=-0.015, ymax=4.8,
 					xmin=-0.008, xmax=1,
 					restrict y to domain=0:4,
 					xtick={0, 0.468, 0.78},
 					xticklabels={$0$, $c'(q^M)$, $c'(q^\star)$}, 
 					ytick={1.87, 3.13},
 					yticklabels={$q^M$, $q^\star$},
 					legend style={at={(0.5,1.05)}, anchor=south, legend columns=4},
 					]
 					\addplot[blue, ultra thick, dotted] ({0.25*x}, x);
 					\addlegendentry{$(c')^{-1}$}
 					\addplot[green, ultra thick, dashdotted] ({1/(2*sqrt(x)) + 0.5}, x);
 					\addlegendentry{$(U')^{-1}$}
 					\addplot[red,  ultra thick] ({
 						(1 - 0.5*(1 - 1/(2*sqrt(x)))) * (1/(2*sqrt(x)) + 0.5*(1 - 1/(2*sqrt(x))))
 					}, x);
 					\addlegendentry{$(V')^{-1}$}
 					%qualities
 					\addplot[loosely dotted, ultra thin, red, domain=0:2] {1.87};
 					\addplot[loosely dotted, thin, green, domain=0:2] {3.13};
 					% c prime's 
 					\draw[red, loosely dotted] (axis cs:0.468,0) -- (axis cs:0.468,1.87); 
 					\draw[green, loosely dotted] (axis cs:0.78,0) -- (axis cs:0.78,3.13);
 					%Y AXIS
 					\draw[] (axis cs:0,0) -- (axis cs:0,4.79); 
 					\draw[-{stealth}] (axis cs:0,4.8);
 				\end{axis}
 		\end{tikzpicture}
 		}%
 		\end{minipage}\hfill{}%
 		\begin{minipage}[t]{0.5\columnwidth}%
 			\centering\subfloat[The monopolist allocation $\bm q^M$ caps the virtual-surplus maximizer at the cap $q^M$. Type $\theta\ge b(q^M)$ receives an undamaged good of quality lower than $q^\star$; type $\theta< b(q^M)$ receives a damaged good of the same quality as under the solution to $\mathcal P(q^\star)$.]{
 			\begin{tikzpicture}[scale=0.75]
 				\begin{axis}[
 					%axis y line=none,
 					y axis line style = {opacity=0, black},
 					axis lines=middle,
 					tick align = outside,
 					xlabel={$\theta$},
 					ylabel={},
 					domain=0:1,
 					samples=600,
 					ymin=-0.015, ymax=4.8,
 					xmin=-0.008, xmax=1,
 					restrict y to domain=0:4,
 					xtick={0, 0.317, 0.5, 1},
 					xticklabels={$0$, $b( q^M)$, $\varphi^{-1}(0)$, $1$}, 
 					ytick={0.25, 1.87, 3.13, 4},
 					yticklabels={$\bm \beta(0)$, $  q^M$, $q^\star$, $\infty$},
 					legend style={at={(0.5,1.05)}, anchor=south, legend columns=4},
 					]
 					\addplot[black, very thick, domain=0:0.37] {1/(4*(1 - 2*x)^2)}; \addlegendentry{$\bm \beta$};
 					\addplot[red,  ultra thick, densely dashed, domain=0:0.4] {min(1/(4*(1 - 2*x)^2), 1.87)}; \addlegendentry{$\bm q^M$};
 					\addplot[green,  ultra thick, dashdotted, domain=0:1] {3.13}; \addlegendentry{$\bm q^\star$};
 					% beta
 					\addplot[black, very thick, solid, domain=0.5:1]{4};
 					\draw[black, loosely dotted] (axis cs:0.5,0) -- (axis cs:0.5,4); 
 					% beta MON
 					\addplot[red, ultra thick, densely dashed, domain=0.4:1] {1.87};
 					\addplot[red, loosely dotted, domain=0:0.317] {1.87};
 					\draw[red, loosely dotted] (axis cs:0.317,0) -- (axis cs:0.317,1.87); 
 					%Y AXIS
 					\draw[densely dotted] (axis cs:0,3.75) -- (axis cs:0,4); 
 					\draw[] (axis cs:0,0) -- (axis cs:0,3.75); 
 					\draw[] (axis cs:0,4) -- (axis cs:0,4.79); 
 					\draw[-{stealth}] (axis cs:0,4.8);
 				\end{axis}
 		\end{tikzpicture}
 	}%
 		\end{minipage}\caption{Panel (a) illustrates the monopolist quality $q^{M}$; Panel (b) illustrates
 			the monopolist allocation.}
 		\label{fig:monopoly}
 	\end{figure}
 	This operation implies that $\bm{q}^{M}$ takes the form of $\theta\mapsto\min\{\bm{\beta}(\theta),q^{M}\}$
 	for a cap $q^{M}$ that trades off the revenues $V(q^{M})$ with the
 	acquisition cost $c(q^{M})$, i.e., the value of $\mathcal{P}(q^{M})$
 	with the cost of a relaxation of the associated upper-bound constraint
 	(Figure \ref{fig:monopoly}). The rest of Proposition \ref{prop:monopolist}
 	follows from the characterization of the marginal value $V'$ from
 	a quality investment that we describe in what follows.
 	
	Certain features of the monopolist allocation follow directly from properties of the function $\bm{\beta}$. First, the truncation construction is incentive compatible because $\bm{\beta}$ is increasing on $[0,\varphi^{-1}(0))$. Second, because $\bm{\beta}(0)>0$, the quality $q^{M}$ can be lower than $\bm \beta(0)$, so that the optimal allocation exhibits \emph{full bunching}: all types receive the undamaged quality $q^{M}$.\footnote{Section \ref{subsec:discussion} provides sufficient conditions for full bunching in terms of the primitives of the model, e.g., steep marginal costs.} Finally, because $\bm{\beta}(\theta)=\infty$  for $\theta\ge\varphi^{-1}(0)$ (see Figure \ref{fig:beta}), a positive measure of types in $[\varphi^{-1}(0),1]$ receive the undamaged quality, regardless of the precise level of $q^{M}$. By contrast, lower types receive a damaged version if $q^{M}>\bm{\beta}(0)$; in this case the monopolist assigns damaged quality to types below $b(q^{M})$ and bunches all types above $b(q^{M})$. Therefore, we refer to $b(q^{M})$ as the \emph{marginally bunched type}.
 	
 	The key to understand the expression for $V'(q)$ is the observation
 	that, following an increment in the cap $q$, the seller makes the
 	same revenues from selling all qualities below $q$. The reason is
 	that qualities below $q$ are allocated to the same types and at the
 	same price. The fact that low qualities go to the same types follows
 	from the solution $\theta\mapsto\min\{\bm (\theta),q\}$
 	to $\mathcal{P}(q)$: if the cap is not binding, an increase in the
 	cap is immaterial. Low qualities sell at the same price because rents
 	accumulate from below by incentive compatibility: if the allocation
 	of types below $\theta$ does not change, then the transfer $t(\theta)$
 	does not change. Hence, the extra revenues come from serving types
 	in the bunching region: the increment in $q$ is distributed undamaged
 	to all types above $b(q)$, and with an extra price equal to the marginal
 	utility of type $b(q)$. Therefore, marginal revenues are
 	\begin{align}
 		V'(q)=(1-F(b(q)))(g'(q)+b(q)),\label{eq:marginalrevenues}
 	\end{align}
 	in which the marginal utility of $b(q)$ is weighted by the mass of
 	the bunching region. The function $b$ is crucial. First, $b$ determines
 	production because it shapes the marginal revenues in Equation \ref{eq:marginalrevenues}.
 	Second, $b$ impacts the distribution because, effectively, the monopolist
 	allocation $\bm{q}^{M}$ is defined by the cap $q^{M}$ and two distributional
 	conditions: every quality $q$ below the cap $q^{M}$ goes to its
 	``natural'' type $b(q)$, and the cap goes to all types above
 	$b(q^{M})$.
 	
 	For the ``moreover'' part, we argue that the marginal revenues at
 	$q$ are lower than the average marginal utility from quality $q$.
 	By Markov's inequality, we have $b(q)(1-F(b(q)))<\mathbb{E}\{\theta\}$,
 	so we conclude that $V'(q)<g'(q)+\mathbb{\mathbb{E}}\{\theta\}$.
 	Hence marginal revenues are uniformly below the marginal social value
 	of a quality investment, i.e., $V'$ crosses the marginal cost below
 	$q^{\star}$.
 	
 	Proposition \ref{prop:monopolist} uses the fact that $V$ is concave
 	and differentiable (Proposition \ref{app:prop:mon:alloc}). Under
 	regularity, full bunching does not preclude differentiability of $V$,
 	because $V'$ is continuously pasted at $\bm{\beta}(0)$. The revenue
 	function $V$ is not differentiable if $F$ is not regular. Intuitively,
 	suppose that types in $(\theta',\theta'')$ are bunched ``at'' $q>0$
 	by a monopolist who solves $\mathcal{P}(\overline{q})$ for $\overline{q}>q$,
 	and the monopolist allocation is increasing for all other types. In
 	this case, the extra revenues of adding $\varepsilon>0$ to $q$ come
 	from bunching at the top only types that strictly exceed $\theta''$,
 	whereas the revenues lost from degrading $q$ by $\varepsilon$ come from including types lower than $\theta'$ in the bunching region.
 	We conclude that $\lim_{\varepsilon\to0^{+}}V'(q-\varepsilon)-V'(q+\varepsilon)>0$.
 	In Appendix \ref{app:sec:ironing}, we characterize the monopolist allocation without regularity. The shape of the solution to $\mathcal{P}(q)$ and the fact that $q^{M}<q^{\star}$ hold with more general $F$ and increasing-differences $u$, see Proposition \ref{app:prop:mon:alloc} and \ref{app:prop:mon:ironing}, respectively.

	\begin{remark}
		\label{rem:tariff}Let's argue that the type $b(q)$ maximizes $\theta\mapsto(1-F(\theta))(\theta+g'(q))$.
		The $q$-constrained problem can be equivalently stated as a choice
		of a tariff $T\colon[0,q]\to\mathbb{R}$ determining the price of
		every quality \citep*{guesnerie_complete_1984}. The tariff $T$ is
		optimal only if the \emph{price increment at $q$}, $T'(q)$, solves
		$\max_{p}p\operatorname{Pr}_{}(\{\theta:u_{q}(q,\theta)\ge p\})$,
		by known arguments under regularity conditions \citep*{wilson_nonlinear_1993}.
		Intuitively, every quality increment is priced as if it constitutes
		a separate market. From the optimal tariff corresponding to the solution
		$(\bm{q},t)$ to $\mathcal{P}(q)$---i.e., the mapping $T\colon q\mapsto t(b(q))$---we compute the price increment at $q$, that is $T'(q)=g'(q)+b(q)$.
		Heuristically, if type $\theta$ purchases an interior quality, then the first-order condition holds for the given tariff. Hence, the price increment $g'(q)+b(q)$
		solves $\max_{p}p(1-F(p-g'(q)))$. Equivalently, type $b(q)$ solves
		$\max_{\theta}(1-F(\theta))(\theta+g'(q))$. 
	\end{remark}

 	\paragraph{Inefficiencies}\label{subsec:Inefficiencies}
 	
 	For high types, distributional efficiency holds, because all types
 	above $b(q^{M})$ are allocated an undamaged quality. However, productive
 	efficiency does not hold, because the undamaged quality $q^{M}$ is
 	lower than $q^{\star}$. Under separable costs, the top type does
 	not impose information rents to other types, so his quality is efficient
 	because his information rent is traded-off with the cost of producing
 	only his quality. For digital goods, instead, the monopolist accounts
 	for the productive externality that the top-type quality has for the
 	quality of the other types. Hence, the highest quality determines
 	the information rents of types in the bunching region. As a result,
 	the standard ``efficiency at the top'' observation is weakened to
 	distributional efficiency. Moreover, the distributional efficiency
 	is more ``prevalent'' than with separable costs, in which case efficiency
 	holds only for the top type, because all types above $b(q^{M})$ receive
 	the undamaged quality. 
 	
 	The types who are subject to the two inefficiencies can be read off
 	the graph in Figure \ref{fig:monopoly}. The marginally bunched type
 	partitions the type space into two regions. Each region is associated
 	with an auxiliary economy in which one inefficiency is shut down:
 	(i) the seller produces $q^{M}$ and distributes $q^{M}$ to everyone,
 	and (ii) the seller produces $q^{\star}$ and distributes damaged
 	qualities to maximize revenues, i.e., as in the solution to $\mathcal{P}(q^{\star})$.
 	Types below $b(q^{M})$ receive a damaged good, so they are subject
 	to distributional inefficiency. Nonetheless, they are unaffected by
 	productive inefficiency, because they receive the same quality by
 	a monopolist producing the efficient quality as under $\bm{q}^{M}$.
 	No type above $b(q^{M})$ is subject to distributional inefficiency,
 	because he receives an undamaged good; however, he is subject to productive
 	inefficiency, because he receives a quality exceeding $q^{M}$ by
 	a seller that produces efficiently.
 	\begin{rem}
 		\label{rem:mon:linear}With linear utility, i.e., $u(q,\theta)=\theta q$,
 		the marginally bunched type is $b(q)=\varphi^{-1}(0)$; and we have:
 		$\bm{\beta}(\theta)=0$ for all $\theta<\varphi^{-1}(0)$ (Figure
 		\ref{fig:linear}).
 		\begin{figure}[t]
 			\centering%
 			\noindent\begin{minipage}[t]{1\columnwidth}%
 				\centering \subfloat{
 				\begin{tikzpicture}[scale=0.75]
 					\begin{axis}[
 						%axis y line=none,
 						y axis line style = {opacity=0, black},
 						axis lines=middle,
 						tick align = outside,
 						xlabel={ $\theta$},
 						ylabel={},
 						domain=0:1,
 						samples=200,
 						ymin=-0.015, ymax=4.8,
 						xmin=-0.008, xmax=1,
 						restrict y to domain=0:4,
 						xtick={0, 0.5, 1},
 						xticklabels={$0$, $\varphi^{-1}(0)$, $1$}, 
 						ytick={1, 4},
 						yticklabels={$q^M$, $\infty$},
 						legend style={at={(0.5,1.05)}, anchor=south, legend columns=4},
 						]
 						\addplot[black, very thick, domain=0:0.5] {0}; \addlegendentry{$\bm \beta$};
 						\addplot[red,  ultra thick, densely dashed, domain=0:0.5] {0}; \addlegendentry{$\bm q^M$};
 						% beta
 						\addplot[black, very thick, domain=0.5:1]{4};
 						\draw[black, loosely dotted] (axis cs:0.5,1) -- (axis cs:0.5,4); 
 						% beta MON
 						\addplot[red, ultra thick, densely dashed,domain=0.5:1] {1};
 						\addplot[red, loosely dotted, domain=0:0.5] {1};
 						\draw[red, loosely dotted] (axis cs:0.5,0) -- (axis cs:0.5,1); 
 						%Y AXIS
 						\draw[densely dotted] (axis cs:0,3.75) -- (axis cs:0,4); 
 						\draw[] (axis cs:0,0) -- (axis cs:0,3.75); 
 						\draw[] (axis cs:0,4) -- (axis cs:0,4.79); 
 						\draw[-{stealth}] (axis cs:0,4.8);
 					\end{axis}
 			\end{tikzpicture}}%
 			\end{minipage}\hfill{}\caption{With linear utility, the problem $\mathcal{P}(q)$ is solved by excluding
 				types lower than the zero of the virtual value $\varphi$, and allocating
 				$q$ to higher types.}
 			\label{fig:linear}
 		\end{figure}
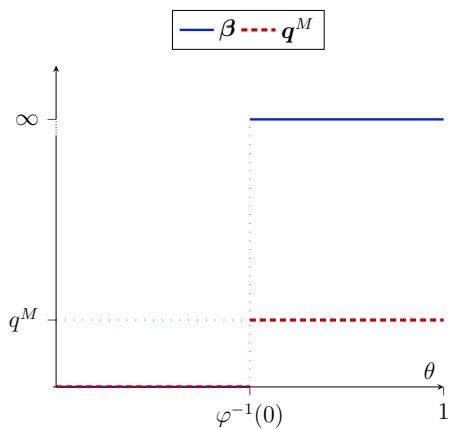
 		Damaging takes the trivial form of excluding low types, because $\bm{q}^{M}(\theta)=q^{M}$
 		if $\theta>\varphi^{-1}(0)$, and $\bm{q}^{M}(\theta)=0$ if $\theta<\varphi^{-1}(0)$. 
 	\end{rem}
 	
 	\subsection{Discussion of the model primitives}\label{subsec:discussion}
 	
 	In markets with our cost structure, richness of versions is driven
 	by preferences alone. In particular, the monopolist allocation with
 	linear preferences (Remark \ref{rem:mon:linear}) can be implemented
 	by offering a single quality at a posted price. The reason is that
 	the curvature of the cost function does not directly impact the distributive
 	properties of the monopolist allocation (Lemma \ref{lem:decomposition}),
 	and only determines the highest quality. Hence, the rich variety of
 	versions observed in markets for digital goods \citep*{bergemann_markets_2019}
 	should be attributed to extra curvature in the utility from quality
 	than that allowed by linear preferences.
 	
 	The type-independent utility shifter given by $g$ is a parsimonious
 	addition to the multiplicative preferences in \citet*{mussa_monopoly_1978}
 	that permits our screening model to be applicable to digital markets.
 	The reason is that the seller finds it more profitable to discriminate
 	across types as we climb the quality ladder, for reasons that are
 	not related to costs: the ratio of the marginal utility of type $\theta$
 	to the marginal utility of a lower type $\theta'$ is increasing in
 	quality, i.e., $\frac{g'(q)+\theta}{g'(q)+\theta'}$ increases in
 	$q$. This pattern aligns with the way digital goods are employed
 	in basic and advanced tasks, whenever buyers are more heterogeneous
 	in advanced tasks than basic ones. In this interpretation, the heterogeneity
 	in relative marginal utilities vanishes around $0$ because quality
 	increments in a neighborhood of $0$ are ``infinitely'' valuable
 	for the accomplishment of the basic task. Instead, the heterogeneity
 	becomes more pronounced as $q$ grows, because quality increments
 	are used almost exclusively in the accomplishment of the professional
 	task.
 	\begin{rem}
 		In a model of separable screening, the curvature of the cost function
 		directly affects the shape of the monopolist allocation. Under linear
 		preferences, the optimality of a single-quality menu arises only if
 		costs are linear. \citet*{sandman_sparse_2025}, and related literature,
 		provides sufficient conditions and intuition for the optimality of
 		single-quality menus in general setups with separable costs.
 	\end{rem}
 	
 	\paragraph{Comparative statics}
 	
 	We study how the monopolist allocation responds to changes in production costs and utility curvature introducing the parameter $\kappa=(\kappa_{c},\kappa_{g})$ that
 	enters the model via $c_{\kappa}(q)\coloneqq\kappa_{c}c(q)$ and $u_{\kappa}(q,\theta)\coloneqq\kappa_{g}g(q)+\theta q$.
 	The value of $\kappa_{c}$ shifts the marginal costs of production,
 	whereas $\kappa_{g}$ shifts the importance of the common curvature
 	of the utility function. Hence, an increase in $\kappa_{g}$ reduces
 	the importance of preference heterogeneity. By Proposition \ref{prop:monopolist},
 	the monopolist allocation is determined as follows. First, we define
 	$b_{\kappa_{g}}\colon Q\to\Theta\ \text{such that}\ \kappa_{g}g'(q)+\varphi(b_{\kappa_{g}}(q))=0\ \text{for all}\ q\in Q$
 	and $b_{\kappa_{g}}(q)=0$ if $\kappa_{g}g'(q)\le-\varphi(0)$, with
 	right-continuous inverse $\bm{\beta}_{\kappa_{g}}$. The monopolist
 	allocation $\bm{q}_{\kappa}^{M}$ is defined the following conditions,
 	
 	\begin{align*}
 		\begin{cases}
 			q^M_\kappa\in Q \ \text{is such that} \ \kappa_c c'(q^M_\kappa) = (1-F(b_{\kappa_g}(q^M_\kappa)))(\kappa_gg'(q^M_\kappa)+b_{\kappa_g}(q^M_\kappa)),\\
 			\text{for all} \ \theta, \ \bm q^M _\kappa (\theta) = \min\{\bm \beta_{\kappa_g}(\theta), q^M_\kappa\}.
 		\end{cases}
 	\end{align*}
 	
 	As $\kappa_{c}$ increases, the acquired quality decreases and the
 	function $b_{\kappa_{g}}$ does not change. Hence, the allocation
 	of all but the low types is pushed downwards, and the bunching region
 	expands: $b_{\kappa_{g}}(q_{\kappa}^{M})$ is decreasing in $\kappa_{c}$.
 	
 	An increase in $\kappa_{g}$ affects the allocation both through the
 	distribution, shifting the virtual-surplus maximizer upwards, and
 	the acquisition condition, via marginal revenues. The acquisition
 	effect has a direct channel---the marginal utility is parametrized
 	by $\kappa_{g}$---and an indirect channel---the function $b_{\kappa_{g}}$
 	depends on $\kappa_{g}$ and determines the marginal revenue. Overall,
 	the monopolist cap increases and the bunching region expands as $\kappa_{g}$
 	grows.
 	\begin{prop}
 		\label{prop:mon:comparative}The cap $q_{\kappa}^{M}$ is decreasing
 		in $\kappa_{c}$ and nondecreasing in $\kappa_{g}$. Moreover, $b_{\kappa_{g}}$
 		is nonincreasing in $\kappa_{g}$ pointwise; $b_{\kappa_{g}}(q_{\kappa}^{M})$
 		is nonincreasing in $\kappa_{g}$, decreasing if interior, and there
 		exists $\overline{\kappa}_{g}$ such that $b_{\kappa_{g}}(q_{\kappa}^{M})=0$
 		for all $\kappa_{g}>\overline{\kappa}_{g}$.
 	\end{prop}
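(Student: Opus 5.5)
The plan is to work directly with the first-order condition of Proposition \ref{prop:monopolist}, written for parameter $\kappa$ as $\kappa_c c'(q) = M_{\kappa_g}(q)$. Here
\[
M_{\kappa_g}(q) \coloneqq \bigl(1-F(b_{\kappa_g}(q))\bigr)\bigl(\kappa_g g'(q)+b_{\kappa_g}(q)\bigr).
\]
Using Remark \ref{rem:tariff}, I rewrite the marginal revenue as a value function:
\[
M_{\kappa_g}(q)=\max_{\theta\in\Theta}\,(1-F(\theta))(\theta+\kappa_g g'(q)),
\]
with $b_{\kappa_g}(q)$ as the maximizer. Under the assumption that $\varphi$ is increasing, this maximand is single-peaked in $\theta$: its derivative is $F'(\theta)\bigl(-\varphi(\theta)-\kappa_g g'(q)\bigr)$. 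So the maximizer is exactly the $b_{\kappa_g}$ defined in the text, truncated at $0$.

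Since $V_\kappa$ is concave (Proposition \ref{app:prop:mon:alloc}), $M_{\kappa_g}$ is nonincreasing in $q$. The profit $V_\kappa(q)-\kappa_c c(q)$ then has the strictly decreasing derivative $M_{\kappa_g}-\kappa_c c'$, and the Inada conditions give a unique root $q^M_\kappa$.

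\emph{Comparative statics in $\kappa_c$ and $\kappa_g$.} Raising $\kappa_c$ shifts $\kappa_c c'$ up strictly, because $c'>0$ for $q>0$. The crossing with the nonincreasing $M_{\kappa_g}$ therefore moves strictly left, so $q^M_\kappa$ is decreasing in $\kappa_c$. For $\kappa_g$, the envelope formula shows that $M_{\kappa_g}(q)$ is nondecreasing in $\kappa_g$ pointwise: the maximand increases in $\kappa_g$ because $g'>0$, so the value increases. The crossing therefore moves weakly right, and $q^M_\kappa$ is nondecreasing in $\kappa_g$.

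\emph{Pointwise monotonicity of $b_{\kappa_g}$.} This follows from the defining equation $\varphi(b)=-\kappa_g g'(q)$: the right-hand side decreases in $\kappa_g$ and $\varphi$ is increasing, so $b$ weakly decreases, and it is truncated at $0$.

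\emph{Monotonicity of the marginally bunched type.} This is the main obstacle, because $b_{\kappa_g}(q^M_\kappa)$ has two opposing effects. The direct effect lowers $b$ for fixed $q$. The indirect effect raises $b$, since $q^M$ rises and $b_{\kappa_g}(\cdot)$ is increasing in $q$. My approach is to show that the product $x\coloneqq\kappa_g g'(q^M_\kappa)$ is nondecreasing in $\kappa_g$; then $b=\varphi^{-1}(-x)\vee 0$ is nonincreasing.

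I argue by contradiction. Suppose $\kappa_g$ rises and $x$ strictly falls. The value $\max_\theta(1-F(\theta))(\theta+x)$ depends on $\kappa_g$ only through $x$ and is increasing in $x$, so $M$ at the optimum strictly falls. The first-order condition then requires $\kappa_c c'(q^M)$ to strictly fall, so $q^M$ strictly falls. But then $g'(q^M)$ rises, so $x=\kappa_g g'(q^M)$ strictly rises, which is a contradiction. The same argument gives strict monotonicity of $x$, and hence strictly decreasing $b$ where interior. There $\varphi$ is strictly increasing, and $x$ strictly increases: if $x$ were constant, $M$ would be constant, so $q^M$ would be constant, and $x$ would strictly increase.

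\emph{The threshold $\overline{\kappa}_g$.} Because $x(\kappa_g)$ is nondecreasing, it suffices to show $x\to\infty$, or at least that $x$ exceeds $-\varphi(0)$. Suppose instead that $x$ stays bounded by some $\bar x$. Then $M\le \bar x+1$, which bounds $q^M$ above by some $\bar q$. But then $g'(q^M)\ge g'(\bar q)>0$, and so $x\ge \kappa_g g'(\bar q)\to\infty$, a contradiction. Hence there is $\overline{\kappa}_g$ beyond which $\kappa_g g'(q^M)\ge-\varphi(0)$, and therefore $b_{\kappa_g}(q^M_\kappa)=0$.

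The delicate points are the flat region at $b=0$, handled by the truncation, and ensuring differentiability of $V_\kappa$ so that the first-order condition applies. For the latter I rely on the regularity assumptions of the paper and on Proposition \ref{app:prop:mon:alloc}.
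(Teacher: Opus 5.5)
Your proof is correct, and it takes a genuinely different route from the paper's.

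The paper differentiates throughout. It applies the implicit function theorem to $G(q,\kappa_c,\kappa_g)=(1-F(b_{\kappa_g}(q)))(b_{\kappa_g}(q)+\kappa_g g'(q))-\kappa_c c'(q)=0$. There $G_2=-c'$ and $G_3=(1-F(b_{\kappa_g}(q)))g'(q)$, since the $\partial_{\kappa_g}b_{\kappa_g}$ terms cancel by the condition defining $b_{\kappa_g}$. It then uses $\partial_{\kappa_g}b_{\kappa_g}(q)=-g'(q)/\varphi'(b_{\kappa_g}(q))$ and adds the direct and indirect effects on the marginally bunched type by the chain rule. The $g''$ terms cancel, and the total derivative is $\frac{g'}{\varphi'}\frac{\kappa_c c''}{G_1}<0$.

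You instead use the representation $M_{\kappa_g}(q)=m(\kappa_g g'(q))$, with $m(x)=\max_{\theta}(1-F(\theta))(\theta+x)$ strictly increasing. This shows that the optimum depends on $\kappa_g$ only through the scalar $x=\kappa_g g'(q^M_\kappa)$. Every claim then follows from order arguments using only the strict monotonicity of $c'$, $g'$, $m$, and $\varphi$.

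What your route buys:
\begin{itemize}
\item It works under the stated assumptions. The paper's computation uses $c''$ and a strictly negative $G_1$, and neither follows from $c$ being only differentiable and strictly convex. It also needs $b_{\kappa_g}$ to be differentiable.
\item It handles the kink where $b_{\kappa_g}$ hits $0$ without case distinctions.
\item It proves the threshold claim completely. The paper's Claim 5 asserts $b_{\kappa_g}(q^M_\kappa)\to 0$ ``by the previous claims'', but monotonicity alone does not give this. Your step supplies the missing boundedness: if $x\le\bar x$, then $q^M_\kappa\le\bar q$, hence $x\ge\kappa_g g'(\bar q)\to\infty$.
\end{itemize}

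What the paper's route buys is an explicit formula showing that the net effect on $b_{\kappa_g}(q^M_\kappa)$ is governed by the cost curvature $c''$.

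One small simplification: you do not need concavity of $V_\kappa$ to conclude that $M_{\kappa_g}$ is nonincreasing in $q$. This is immediate from your max representation because $g'$ is decreasing, so only $V_\kappa'=M_{\kappa_g}$ is used.
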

 	The economically relevant case of non-degenerate screening is due
 	to intermediate curvature in the utility from quality: preferences
 	reduce to the linear benchmark if $\kappa_{g}=0$, and screening collapses
 	to a pooling contract as $\kappa_{g}$ grows. Therefore, rich contracts
 	in digital monopolies require some, but not overwhelming, curvature
 	in utility. 
 	
 	The comparative statics implies that full bunching is optimal in two
 	cases: a large $\kappa_{g}$ and a large $\kappa_{c}$. Intuitively,
 	the optimality of full bunching is related to the relative importance
 	of the common curvature in the utility and the strength of the type
 	heterogeneity. First, as $\kappa_{g}$ increases, the common curvature
 	becomes mechanically more important because $\kappa_{g}g'(q)$ increases,
 	and full bunching is more profitable. Due to this channel, full bunching
 	arises with a high monopolist cap because $q_{\kappa}^{M}$ is increasing
 	in $\kappa_{g}$. Second, as $\kappa_{c}$ increases, the highest
 	quality $q_{\kappa}^{M}$ decreases. Hence, the term $\kappa_{g}g'(q)$
 	increases endogenously via the range of qualities that are allocated
 	by the monopolist as $\kappa_{c}$ grows. Due to this channel, full
 	bunching arises with a low cap.
 	
 	The model differs sharply from separable screening in its sensitivity
 	to the type distribution. Because all bunched types receive the cap,
 	a change in the density of high types does not affect the distributive
 	properties of the monopolist allocation. Hence, any distributions
 	$F$ and $G$ that coincide below the threshold given by the marginally
 	bunched type corresponding to $F$ induce the same monopolist allocation.
 	If, in addition, the distributions have the same mean, then the efficient
 	allocation is the same.
 	
 	The comparative statics clarifies the roles of primitives. Costs determine
 	the reach of acquisition via the cap, whereas the curvature of the
 	utility term $g$ shapes both the acquisition and the distribution
 	stages. Distributional perturbations matter via the assignment of
 	quality---below the bunching region and through the mass at the cap---rendering
 	a class of shape changes above some threshold irrelevant for allocation
 	and welfare. These observations stand in contrast to the case of separable-cost
 	screening (Section \ref{subsec:interpretation_separable}).
 	\begin{rem}
 		The main results (Proposition \ref{prop:efficient}, \ref{prop:monopolist},
 		\ref{prop:noscreening}, and \ref{prop:competition}) hold with more
 		general utility functions, as shown in Appendix \ref{app:sec:proofs}. In our comparative
 		statics (Proposition \ref{prop:mon:comparative}), we leverage the
 		simplicity of intuitions arising by expressing type-$\theta$ rents
 		as $\int_{[0,\theta]}\bm{q}(s)\diff s$.
 		The multi-principal model of \citet*{chade_screening_2021} also leverages
 		this tractability. The current $u$ captures more general preferences,
 		given by the gross utility function $q\mapsto g_{1}(q)+\theta g_{2}(q)$
 		for some invertible $g_{2}\in\mathbb{R}^{Q}$ and a concave transformation
 		$g_{1}\in\mathbb{R}^{Q}$ of $g_{2}$, because quality does not admit
 		a natural metric.
 	\end{rem}
 	
 	\section{Interpretation}\label{sec:interpretation}
 	In this section, we discuss our model within the leading interpretation of the digital goods industry, and offer an alternative interpretation. First, we explain that the costs given by $c$ are effectively calibrated to population size in the leading interpretation, consistently with digital markets. Then, we provide an interpretation for the same formulation of the problem of the seller, in which costs and revenues are expressed in the same units. Under this interpretation, the seller faces a single buyer and produces before eliciting the type of the buyer; we compare the monopolist allocation with the one in \citeauthor*{mussa_monopoly_1978} (\citeyear{mussa_monopoly_1978},
 	MR) under this second interpretation.

 	\subsection{Digital goods}\label{subsec:interpretation_digital}
 	 The digital economy offers the most compelling justification for our cost function, because low replication costs are a key feature of the production of digital goods \citep*{goldfarb_digital_2019}. Consider a word processing application offered via subscription, such as Microsoft Word. The developer builds a feature-complete application at a certain cost, and deploys the application across cloud servers. A basic-tier subscriber receives the identical codebase as a premium subscriber, but certain functions---collaboration tools, design templates, or revision history---are deactivated through licensing flags, without additional development or infrastructure cost. For many products, the same technology that enables costless replication makes damaging nearly free. For example, the same computer that can copy a dataset can also delete observations, the compiler that generates executable code can disable functions, the servers that store content available for multiple viewers can stream content at low resolution.

 	A \emph{cost mismatch} arises under the free-damaging and free-replication
 	scenario of the digital-good motivation. Recall the maximand in the
 	monopolist problem, $\int_{\Theta}t(\theta)\diff F(\theta)-c(\sup\bm{q})$.
 	The production cost has the same order of magnitude as the revenue
 	term, and is infinitely larger than the surplus generated by a single
 	buyer. We view this feature as capturing the correct economics for
 	digital goods: the seller chooses how far up the quality ladder to
 	climb, pays the development cost, and can then serve any number of
 	buyers with any lower quality by throttling, feature-gating, or binning.
 	For example, consider a game released in two editions: an all-inclusive
 	``Deluxe'' edition, and a basic edition with fewer features. The game
 	studio only needs to build the Deluxe edition, whereas the basic one
 	is created by removing maps, characters, or game modes. Producing
 	the Deluxe edition is costly, whereas producing the basic edition
 	simply involves withholding ``pieces'' of the existing game, rather
 	than building a separate game, and costs almost nothing. The game
 	studio pays the cost of development once, and can then freely replicate
 	and degrade the Deluxe game for any number of players.
 	
 	This interpretation leads to Lemma \ref{lem:decomposition}: the distribution
 	stage amounts to a pure price discrimination problem on a fixed ladder,
 	and occurs after production has taken place. A key feature of the
 	interpretation is that the development cost is parametrized relative
 	to the population size. Specifically, suppose that we perturb the
 	mass of consumers $\alpha$, away from its unit-mass benchmark in
 	Section \ref{sec:model}.\footnote{Specifically, we consider a seller that maximizes profits $\alpha \int_\Theta t(\theta)\diff F(\theta)-c(\sup \bm q)$, subject to the same constraints as in Section \ref{sec:model}.} The perturbation impacts profits only through
 	the revenue term, contrary to a model à la MR in which $\alpha$ is
 	a profit shifter: both revenues and costs are scaled by $\alpha$
 	in MR. Hence, an increase in $\alpha$ changes the weights of the
 	two addenda in the objective of the seller, reducing the importance of the cost component. Therefore, the impact of an increase in $\alpha$ can be read off the comparative statics in Proposition \ref{prop:mon:comparative} as a decrease in $\kappa_{c}$. With a larger population, the seller increases the quality produced, and determines the allocation moving along the (unchanged) function $\beta$. Conversely, the case where $\alpha$ vanishes, i.e., approximating a single buyer, corresponds $\kappa_c\to \infty$: the monopolist allocation degenerates to full bunching at zero, a formalization of the cost mismatch.
 	
 	This discussion clarifies the role of the production cost in the mechanics of our model: it is a device to discipline the top quality in a model with costless distribution. This selection induces a quality under-provision that interacts with the familiar screening distortions of the distribution
 	stage. The acquisition margin delivers an economic insight that is
 	absent under the ``interim'' perspective that assumes an exogenous
 	top version and studies the distribution stage under free replication
 	and damaging.
 	
 	Because we work with a ``global'' development cost rather than a
 	per-unit cost, the direct comparison with standard screening in this
 	multi-agent interpretation is artificial. We compare our results with
 	the conclusion of a model à la MR under an alternative interpretation
 	in Section \ref{subsec:interpretation_separable}.
 	
 	\subsection{Single agent}\label{subsec:interpretation_separable}
 	
 	In this section, we consider an interpretation of our model in which costs are expressed in the same units
 	as the utility of a single buyer---recall that costs are expressed
 	in per-agent units in the leading interpretation of the model in Section
 	\ref{subsec:interpretation_digital}.
 	
 	Consider a setting with a single buyer, in which production has to precede the type elicitation, in contrast with the single-agent interpretation of MR. When the buyer enters the store, the seller has already produced
 	a version of the good, and screening only occurs through offering damaged versions of the available good.\footnote{Replication is irrelevant with a single buyer, and there is no use of damaging if production occurs after the type elicitation. With a single buyer, it is the interaction of free damaging and pre-production that creates a non-trivial model: adding free damaging but keeping the MR timing does not change the MR solution because there is no need to over produce; if seller has to pre-produce and cannot damage, then she offers a single quality, solving the problem analyzed in Section \ref{subsec:noscreening}.} With this interpretation, the comparison of the allocation $\bm{q}^{M}$ with the monopolist allocation under separable costs does not feature the size mismatch mentioned in Section \ref{subsec:interpretation_digital}. If costs are separable (i.e., in the MR model), the seller maximizes $\int_{\Theta}t(\theta)-c(\bm{q}(\theta))\diff F(\theta)$. The pointwise maximization of virtual surplus yields the optimal allocation, which is $\bm{q}^{MS}\colon\theta\mapsto\max\{(c'-g')^{-1}(\varphi(\theta)),0\}$ (in what follows, we use MS to denote the case of a separable monopolist, i.e., the MR model.) Cost separability provides an accurate description of the production process of several industries. However, there may be a
 	separation between the production and the elicitation stage. For example,
 	in the car industry, dealers do not impact the production process,
 	and, effectively only offer multiple versions that are assembled through
 	add-ons; e.g., sound system, paint finishes, leather seats. In this
 	case, the production costs are such that a baseline ``top version''
 	is made available by the manufacturer, and the dealer screens through damaging.
 	Hence, the comparison between $\bm{q}^{M}$ and $\bm{q}^{MS}$ is attributed
 	to the difference between screening after or before production. 
 	
 	Under the current interpretation, the natural efficiency benchmark
 	equates, for every type, the marginal utility to the marginal cost
 	of production---i.e. $\bm{q}^{E}\colon\theta\mapsto\max\{(c'-g')^{-1}(\theta),0\}$.\footnote{The allocation $q^{\star}$ corresponds to the ``constrained-efficiency'' benchmark in which the planner produces prior to knowing the type of the buyer.} Figure \ref{fig:singleallocation1} guides the comparison among the different allocations.
 	\begin{figure}
 	\centering%
 		\begin{minipage}[t]{0.5\columnwidth}%
 			\centering\subfloat[For this graph: $c(q)=\frac{1}{2}q^2$, so that $b(q^M)>0$.]{\includegraphics[width=0.95\textwidth]{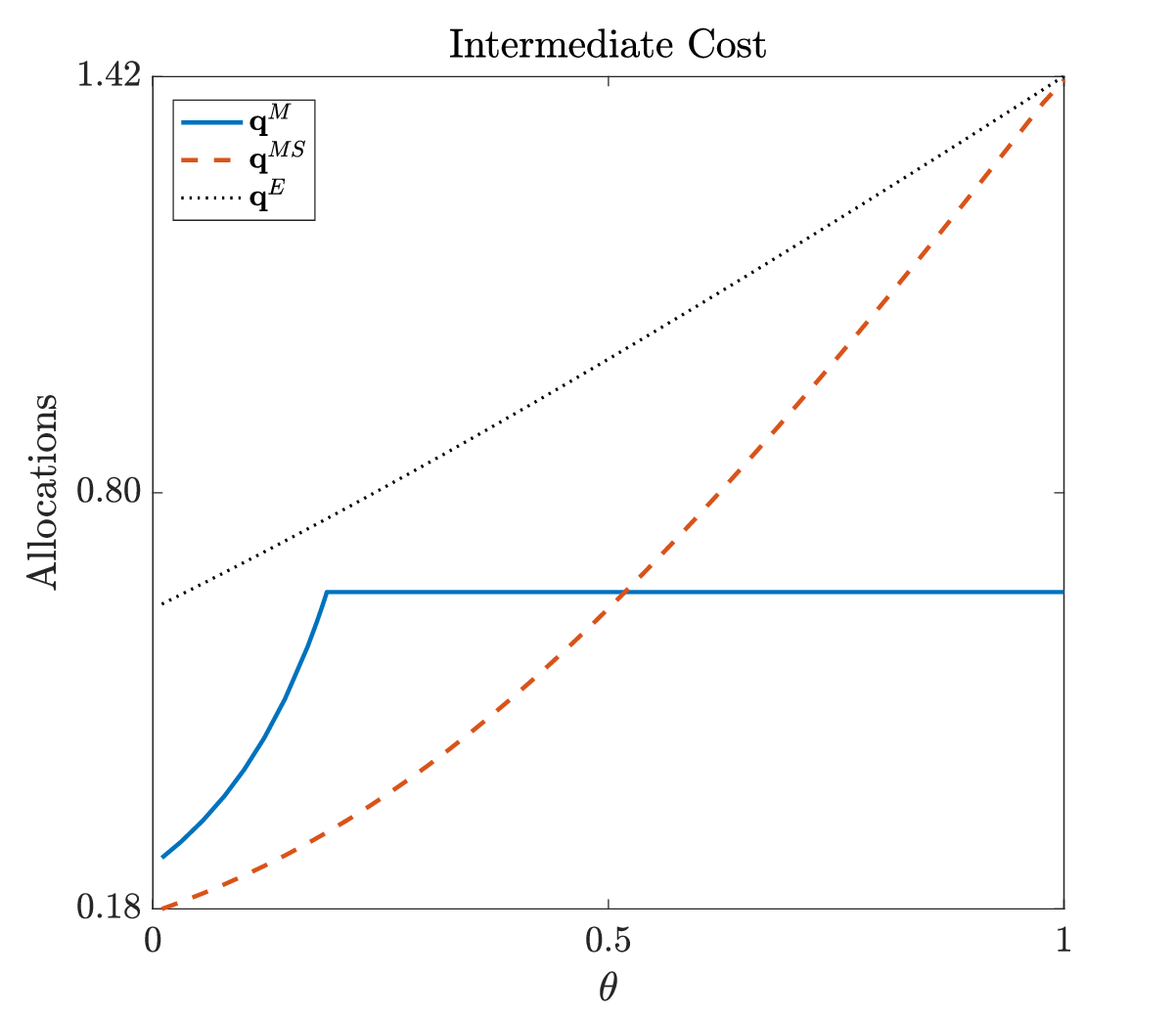}	\label{fig:singleallocation1:right}}
 		\end{minipage}\hfill{}%
 		\begin{minipage}[t]{0.5\columnwidth}%
 			\centering\subfloat[For this graph: $c(q)=2q^2$, so that $b(q^M)=0$.]{\includegraphics[width=1.10\textwidth]{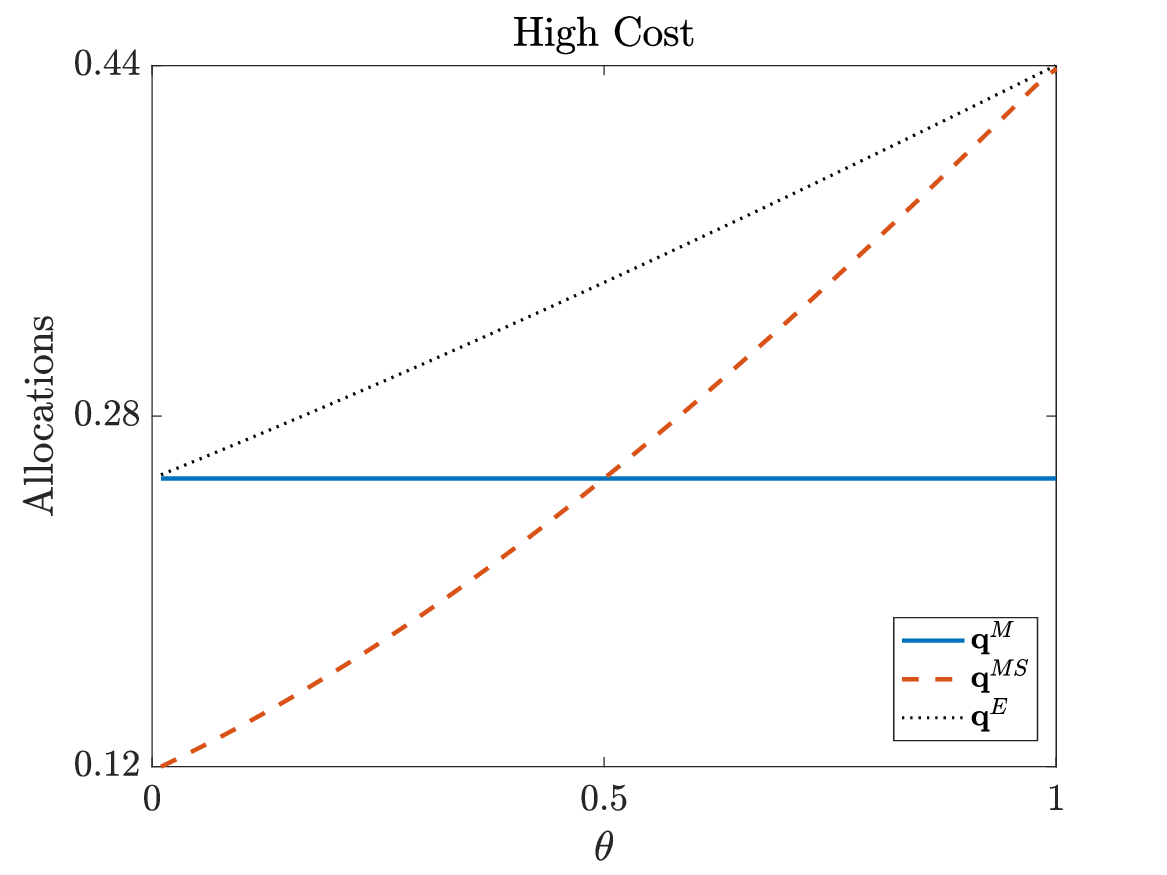}}%
 		\end{minipage}\caption{Panel (a) compares the monopolist allocation with $\bm{q}^{MS}$ if
 			the monopolist does not engage in full bunching, so $b(q^{M})>0$;
 			Panel (b) compares the monopolist allocation with $\bm{q}^{MS}$ if
 			the monopolist optimally offers a pooling contract, so $b(q^{M})=0$.
 			The two figures have different allocations due to different
 			cost functions, see Proposition \ref{prop:mon:comparative}.}
 			\label{fig:singleallocation1}
 	\end{figure}
 	By incentive compatibility, both monopolist allocations are nondecreasing
 	and pointwise below the efficient allocation $\bm{q}^{E}$. If there
 	is no full bunching, the two allocations cross; in particular $\bm{q}^{M}$
 	is higher for low types, and lower for high types, than $\bm{q}^{MS}$.
 	For types that are not bunched, the nonseparable allocation $\bm{q}^{M}$
 	is set to equate the marginal virtual surplus to zero, whereas the
 	cost-separable monopolist equates the marginal virtual surplus to
 	the marginal cost. Importantly, the cost separable monopolist serves
 	efficiently the top type, whereas the non-separable one underprovides
 	quality to the top type; i.e., it holds that $\bm{q}^{MS}(1)=\bm{q}^{E}(1)>q^{\star}>q^{M}=\bm{q}^{M}(1)$.
 	These comparisons are established in the following result.
 	\begin{prop}
 		\label{prop:singleagentallocation} The allocations $\bm{q}^{M}$ and $\bm{q}^{MS}$ cross above $b(q^M)$, that is, it holds that: (i)
 		$q^{M}<\bm{q}^{MS}(1)$; (ii) if $\theta<b(q^{M})$, then: $\bm{q}^{M}(\theta)>\bm{q}^{MS}(\theta)$. Moreover, if $b(q^{M})=0$, then $q^{M}=\bm{q}^{E}(0)$.
 		%The allocations $\bm{q}^{M}$ and $\bm{q}^{E}$ are such that: (i) if $b(q^{M})>0$, then $\bm{q}^{M}<\bm{q}^{E}$ pointwise; (ii) if $b(q^{M})=0$, then $q^{M}=\bm{q}^{E}(0)$.
 	\end{prop}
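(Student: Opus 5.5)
The plan is to read off each claim from the characterization in Proposition~\ref{prop:monopolist} and the closed forms $\bm{q}^{MS}(\theta)=\max\{(c'-g')^{-1}(\varphi(\theta)),0\}$ and $\bm{q}^{E}(\theta)=\max\{(c'-g')^{-1}(\theta),0\}$. Throughout I use that $c'-g'$ is strictly increasing (strict convexity of $c$, strict concavity of $g$) and continuous. The Inada conditions make it run from $-\infty$ at $0$ to $+\infty$, so $(c'-g')^{-1}$ is a well-defined increasing function on $\mathbb{R}$.

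For (i), note that $\varphi(1)=1$ because $F(1)=1$, so $\bm{q}^{MS}(1)=(c'-g')^{-1}(1)=\bm{q}^{E}(1)$. Proposition~\ref{prop:efficient} gives $c'(q^\star)-g'(q^\star)=\int_\Theta\theta\diff F(\theta)$. This mean is strictly less than $1$, since $F$ is not degenerate at $1$ (it has a density, and its support is $[0,1]$). Monotonicity of $c'-g'$ then yields $q^\star<\bm{q}^{MS}(1)$. Combining with $q^M<q^\star$ from Proposition~\ref{prop:monopolist} gives $q^M<\bm{q}^{MS}(1)$, which is (i).

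For (ii), fix $\theta<b(q^M)$. Types below $b(q^M)$ satisfy $\varphi(\theta)<0$ and $\bm{\beta}(\theta)<q^M$, so $\bm{q}^M(\theta)=\bm{\beta}(\theta)$, where $g'(\bm{\beta}(\theta))=-\varphi(\theta)$. By $g'(0)=\infty$ we have $\bm{\beta}(\theta)>0$. Evaluating the separable first-order expression at $\bm\beta(\theta)$ gives
$$c'(\bm{\beta}(\theta))-g'(\bm{\beta}(\theta))=c'(\bm{\beta}(\theta))+\varphi(\theta)>\varphi(\theta).$$
The strict inequality holds because $c'$ is strictly increasing with $c'(0)=0$, so $c'(q)>0$ for $q>0$. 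Hence $(c'-g')^{-1}(\varphi(\theta))<\bm{\beta}(\theta)$, and since also $0<\bm{\beta}(\theta)$, we get $\bm{q}^{MS}(\theta)<\bm{q}^{M}(\theta)$. Together with (i), monotonicity of both allocations, and the fact that $\bm{q}^M$ equals $q^M$ above $b(q^M)$, this shows the crossing occurs in $[b(q^M),1]$.

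For the ``moreover'' part, suppose $b(q^M)=0$. The defining equation of $q^M$ in Proposition~\ref{prop:monopolist} becomes $(1-F(0))(0+g'(q^M))=c'(q^M)$, that is, $c'(q^M)-g'(q^M)=0$. This is exactly the equation defining $\bm{q}^E(0)=(c'-g')^{-1}(0)$, and that value is positive, so the $\max$ with $0$ is inactive; uniqueness gives $q^M=\bm{q}^E(0)$.

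The only delicate step is bookkeeping in (ii): verifying that every $\theta<b(q^M)$ has $\varphi(\theta)<0$ and lies in the uncapped region, so that $\bm{q}^M(\theta)=\bm\beta(\theta)$ is finite and interior. This follows from the definition of $b$ as a generalized inverse together with $\bm{\beta}=\infty$ on $[\varphi^{-1}(0),1]$.
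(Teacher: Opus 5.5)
Your proof is correct. It is essentially the argument the paper sketches in the main text, which the appendix proof explicitly defers to for these claims: the chain $\bm{q}^{MS}(1)=\bm{q}^{E}(1)>q^\star>q^M$ for (i); the observation that $\bm{q}^M$ sets marginal virtual surplus to zero below $b(q^M)$, while $\bm{q}^{MS}$ sets it equal to $c'>0$, for (ii); and substituting $b(q^M)=0$ (with $F(0)=0$) into the cap equation for the last claim.

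The computation the appendix actually displays is a different argument. It runs $c'(\bm{\beta}(\theta))<c'(q^M)=V'(q^M)<V'(\bm{\beta}(\theta))\le u_1(\bm{\beta}(\theta),\theta)$, using strict concavity of $V$. This establishes $\bm{q}^M(\theta)<\bm{q}^{E}(\theta)$ for $\theta<b(q^M)$, a comparison with the efficient benchmark that appears to be left over from an earlier version of the statement. It is not needed for (ii), which, as you show, only uses $c'>0$ on $(0,\infty)$.
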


 	We turn to the welfare comparisons between the two settings.  The seller is worse off by producing before the elicitation stage.
 	Defining the ex-post profits from quality $q$ as $\pi(q,\theta)=g(q)+\varphi(\theta)q-c(q)$,
 	we note that $\bm{q}^{MS}(\theta)$ maximizes $\pi(\cdot,\theta)$,
 	whereas the separable monopolist makes profits $\pi(\bm{q}^{M}(\theta),\theta)+c(\bm{q}^{M}(\theta))-c(q^{M})$.
 	The expression for the profits in the separable case subtracts the
 	nonnegative quantity $c(q^{M})-c(\bm{q}^{M}(\theta))$ to $\pi(\bm{q}^{M}(\theta),\theta)$,
 	which, in turn, is below the ex-post profits in the separable case,
 	$\pi(\bm{q}^{MS}(\theta),\theta)$ (Figure \ref{fig:SAprofits}).
 	\begin{figure}
		\centering\includegraphics[scale=0.4]{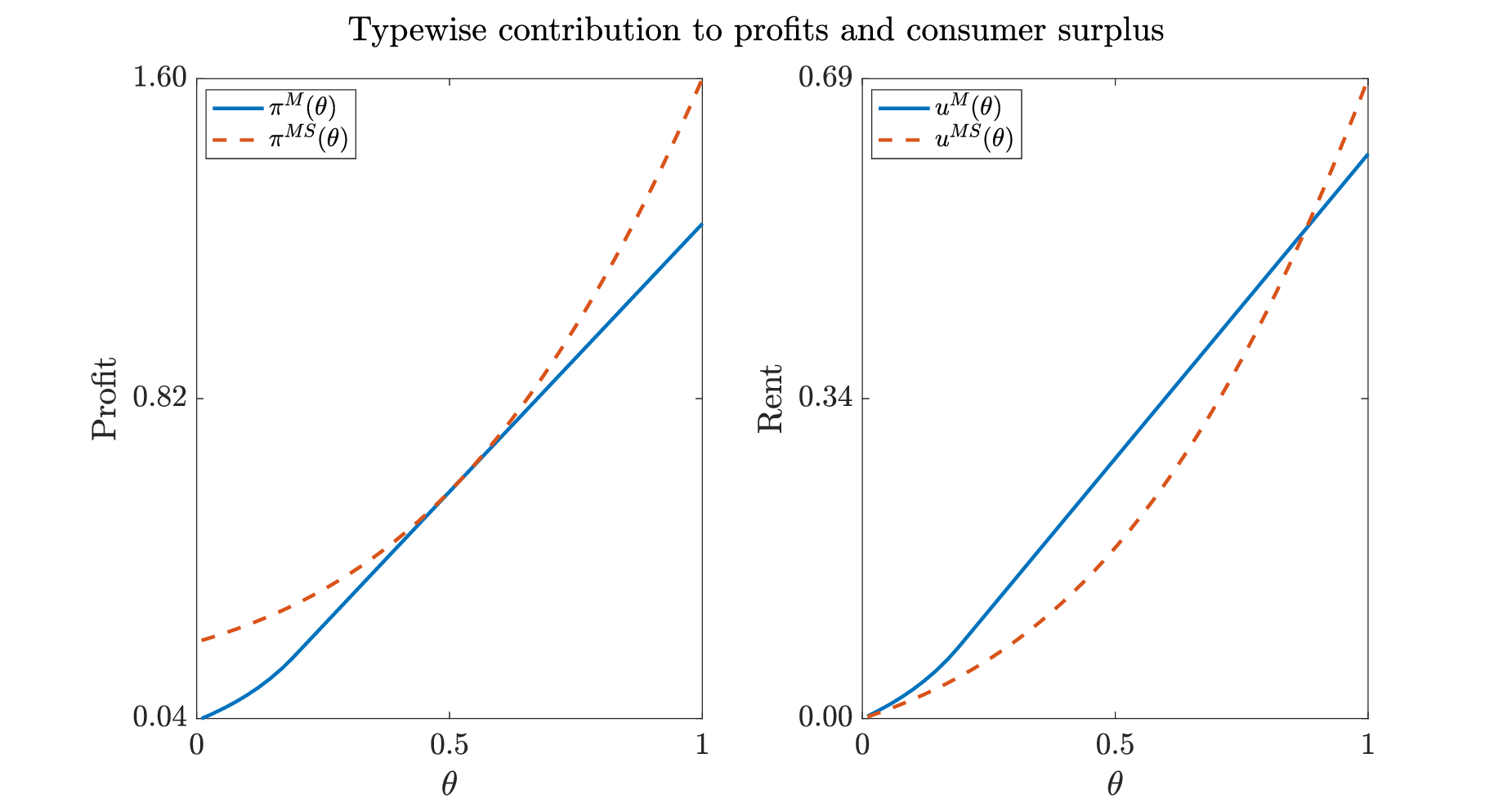}\caption{The left panel shows the profit given the monopolist allocation $\bm{q}^{M}$, and given the MR allocation with separable costs $\bm{q}^{MS}$, defining profits by $\pi^{MS}(\theta)\protect=\pi(\bm{q}^{MS}(\theta),\theta)$
 			and $\pi^{M}(\theta)\protect=\pi(\bm{q}^{M}(\theta),\theta)+c(\bm{q}^{M}(\theta))-c(q^{M})$, for all $\theta$; the right panel shows
 			the consumer surplus, defining information rents by
 			$u^{MS}(\theta)=\protect\int_{[0,\theta]}\bm{q}^{MS}(s)\diff s$ and $u^{M}(\theta)=\protect\int_{[0,\theta]}\bm{q}^{M}(s)\diff s$, for all $\theta$.
 			The seller is better off under separability, whereas the ranking of consumer surplus
 			is ambiguous. The allocation $\bm{q}^{M}$ and $\bm{q}^{MS}$ are shown in Figure \ref{fig:singleallocation1:right}, which uses the same parameters as in this figure.}
 		 		\label{fig:SAprofits}
 	\end{figure}
 	Put differently, the seller is harmed for two reasons: she cannot
 	optimize typewise, and she overpays for low types. The profits are
 	uniformly higher in the separable case, with equality at the type
 	such that the two allocations cross.
 	
 	The implications of nonseparability for consumer surplus are more
 	delicate. The fact that the allocations cross suggests that the ranking
 	of consumer surplus is ambiguous. Low types receive higher rents under
 	$\bm{q}^{M}$ than with separable costs, because of the quality ranking
 	(Proposition \ref{prop:singleagentallocation}). Instead, higher types
 	can receive lower rents under nonseparability. With a uniform type
 	distribution, quadratic costs, and $g(q)=\kappa_{g}\sqrt{q}$, as
 	in Section \ref{subsec:discussion}, the ranking of consumer surplus
 	is determined by $\kappa_{g}$. Specifically, denoting the consumer
 	surplus of allocation $\bm{q}$ by $S(\bm{q})\coloneqq\int_{\Theta}\bm{q}(\theta)(1-F(\theta))\diff\theta$,
 	the difference $S(\bm{q}^{M})-S(\bm{q}^{MS})$ is negative at $\kappa_{g}=0$,
 	increases with $\kappa_{g}$, and is positive for high values of $\kappa_{g}$.
 	The intuition for this ranking builds upon Section \ref{subsec:discussion}.
 	If $u(q,\theta)$ is approximately linear in $q$, then our monopolist
 	excludes low types, as in Figure \ref{fig:linear}, whereas $\bm{q}^{MS}$
 	is increasing, due to the curvature of $c$. This difference explains
 	why the consumer surplus in MR is higher than under nonseparable costs
 	in the case of linear preferences. As $\kappa_{g}$ grows, the bunching
 	region expands by Proposition \ref{prop:mon:comparative}, and the
 	ranking reverses; we observe numerically that the surpluses, as functions of $\kappa_g$ cross at $0.3$, so, nonseparability makes buyers better off if and only if $\kappa_g$ exceeds $0.3$. Interestingly, our full-bunching monopolist exhibits efficiency at the bottom: $\bm{q}^{E}(0)=q^{M}$. The reason is that the seller determines the quality $q^{M}$ by setting
 	the price increment ``at'' $q^{M}$ equal to the marginal
 	utility of the lowest type (Remark \ref{rem:tariff}), which is
 	 the way in which the quality $\bm{q}^{E}(0)$ is determined.

 	\section{Extensions}\label{sec:extensions}
 	
 	\subsection{Non-discriminating monopolist}\label{subsec:noscreening}
 	
 	In certain settings, technological or regulatory constraints make
 	it infeasible for the seller to degrade the good. If the monopolist
 	allocation induces full bunching, then a no-damaging constraint is irrelevant
 	for the monopolist problem.\footnote{Formally, in this section we consider the problem given by $\mathcal P^M$ with an additional constraint that says that the image of a feasible allocation can include 0 and, at most, a single positive quality, chosen optimally; i.e., there exists $q\in Q$ such that, for all $\theta$, $\bm{q}(\theta)\subseteq\{0,q\}$.} The constraint is irrelevant also in the case of linear preferences of Remark \ref{rem:mon:linear}.
 	
 	To characterize the allocation $\bm{q}_{N}^{M}$ of a non-discriminating
 	monopolist, fix a produced quality $q$. The monopolist effectively
 	chooses a type $b_{N}(q)$ such that: lower types are excluded and
 	higher types buy quality $q$. The implied price of the quality-$q$
 	good makes type $b_{N}(q)$ indifferent between buying and the outside
 	option. The resulting revenues are $V_{N}(q)=\max_{\theta}(1-F(\theta))(g(q)+\theta q)$.
 	The resulting marginal revenues account both for the price increment
 	and the inframarginal types, because, by the envelope theorem, we
 	have $V_{N}'(q)=(1-F(b_{N}(q)))(g'(q)+b_{N}(q))$.
 	
 	The structure of the marginal value of relaxing the $q$-constraint
 	for the non-discriminating seller is the same as for the screening
 	seller. The key difference lies in how the relevant cutoff type is
 	determined. Without screening, the cutoff type makes the monopolist
 	indifferent between serving and excluding him; i.e.,~$b_{N}(q)$
 	is the type $\theta$ solving $g(q)+\varphi(\theta)q=0$. With screening,
 	the cutoff type makes the monopolist indifferent between damaging
 	his quality or not; i.e.,~$b(q)$ is the type $\theta$ solving $g'(q)+\varphi(\theta)=0$.
 	The screening ban mechanically shuts down damaging, so intuitively
 	the monopolist sells the undamaged good to a larger set of types than
 	under screening, for fixed cap $q$.\footnote{In particular, we have $g'(q)<g(q)/q$, for all $q>0$, by
 		strict concavity of $g$; hence, we have $b_{N}(q)\le b(q)$, with
 		strict inequality if $b(q)$ is interior, for all $q>0$.}
 	
 	\begin{figure}[t]
 		\centering%
 		\begin{minipage}[t]{0.5\columnwidth}%
 			\centering \subfloat[Without damaging, the monopolist sells an undamaged quality ($q^M_N$) to more types than with damaging, excludes low types, and strengthens  productive inefficiency.]{
 			\begin{tikzpicture}[scale=0.75]
 				\begin{axis}[
 					%axis y line=none,
 					y axis line style = {opacity=0, black},
 					axis lines=middle,
 					tick align = outside,
 					xlabel={ $\theta$},
 					ylabel={},
 					domain=0:1,
 					samples=800,
 					ymin=-0.015, ymax=4.8,
 					xmin=-0.008, xmax=1,
 					restrict y to domain=0:4,
 					xtick={0, 0.2,0.5, 1},
 					xticklabels={$0$, $b_N(q_N^M)$, $\varphi^{-1}(0)$, $1$}, 
 					ytick={0.25, 1.1, 1.49,4},
 					yticklabels={$\bm \beta(0)$, $q_N^M$, $  q^M$, $\infty$},
 					%							yticklabel style = {font=\small},
 					%							xticklabel style = {font=\small},
 					%legend pos=north east, %legend style={font=\small},
 					legend style={at={(0.5,1.05)}, anchor=south, legend columns=4},
 					%semithick
 					extra tick style={% changes for all extra ticks
 						tick align=outside,
 						tick pos=left,
 					},
 					extra x tick style={% changes for extra x ticks
 						major tick length=1.2\baselineskip
 					},
 					extra x ticks={0.295},
 					extra x tick labels={$b( q^M)$},
 					]
 					\addplot[black, very thick, domain=0:0.37] {1/(4*(1 - 2*x)^2)}; \addlegendentry{$\bm \beta$};
 					\addplot[red, ultra thick, densely dashed, domain=0:0.295] {min(1/(4*(1 - 2*x)^2), 1.49)}; \addlegendentry{$\bm q^M$};
 					\addplot[purple, ultra thick, dashdotted, domain=0.2:1] {1.1}; \addlegendentry{$\bm q_N^M$};
 					% beta
 					\addplot[black, very thick, solid, domain=0.5:1]{4};
 					\draw[black, loosely dotted] (axis cs:0.5,0) -- (axis cs:0.5,4); 
 					% beta MON
 					\addplot[red, ultra thick, densely dashed, domain=0.295:1] {1.49};
 					\addplot[red, loosely dotted, domain=0:0.295] {1.49};
 					\draw[red, loosely dotted] (axis cs:0.295,0) -- (axis cs:0.295,1.49); 
 					% beta No-Dama
 					\addplot[purple, ultra thick, dashdotted, domain=0:0.2] {0};
 					\addplot[purple, loosely dotted, domain=0:0.2] {1.1};
 					\draw[purple, loosely dotted] (axis cs:0.2,0) -- (axis cs:0.2,1.1); 
 					%Y AXIS
 					\draw[densely dotted] (axis cs:0,3.75) -- (axis cs:0,4); 
 					\draw[] (axis cs:0,0) -- (axis cs:0,3.75); 
 					\draw[] (axis cs:0,4) -- (axis cs:0,4.79); 
 					\draw[-{stealth}] (axis cs:0,4.8);
 				\end{axis}
 		\end{tikzpicture}
	 	\label{fig:nodamaging}}%
 		\end{minipage}\hfill{}%
 		\begin{minipage}[t]{0.5\columnwidth}%
 			\centering\subfloat[In a subgame 
 			with highest and second-highest qualities $x$ and $y$, resp., every type 
 			$\theta$ purchases quality \protect${\bm q^C {[x, y]}(\theta)}\protect$.]{
 			\begin{tikzpicture}[scale=0.75]
 				\begin{axis}[
 					%axis y line=none,
 					axis lines=middle,
 					tick align = outside,
 					xlabel={$\theta$},
 					domain=0:1,
 					samples=800,
 					ymin=-0.015, ymax=2.3,
 					xmin=-0.008, xmax=0.55,
 					restrict y to domain=0:4,
 					xtick={0, 0.15, 0.25},
 					xticklabels={$0$, $b(y)$, $b(x)$}, 
 					ytick={0.25, 0.5, 1, 1.7},
 					yticklabels={$\bm \beta(0)$, $y$, $x$, $  q^M$},
 					legend style={at={(0.5,1.05)}, anchor=south, legend columns=4},
 					extra tick style={% changes for all extra ticks
 						tick align=outside,
 						tick pos=left,
 					},
 					extra x tick style={% changes for extra x ticks
 						major tick length=1.2\baselineskip
 					},
 					extra x ticks={0.308},
 					extra x tick labels={$b( q^M)$},
 					extra y tick style={% changes for extra x ticks
 						major tick length=1.2\baselineskip
 					},
 					]
 					%\addplot[black, very thick, domain=0:0.37] {1/(4*(1 - 2*x)^2)}; \addlegendentry{$\bm \beta$};
 					\addplot[red,  ultra thick, densely dashed, domain=0:1] {min(1/(4*(1 - 2*x)^2), 1.7)}; \addlegendentry{$\bm q^M$};
 					\addplot[black, ultra thick, dashdotted, domain=0.15:0.25] {min(1/(4*(1 - 2*x)^2), 1.49)}; \addlegendentry{$\bm q ^C[x, y]$};
 					% beta MON
 					\addplot[red, loosely dotted, domain=0:0.308] {1.7};
 					\draw[red, loosely dotted] (axis cs:0.308,0) -- (axis cs:0.308,1.7); 
 					% beta COmp
 					\addplot[black, ultra thick, dashdotted, domain=0.25:1] {1};
 					\addplot[black, ultra thick, dashdotted , domain=0:0.15] {0.5};
 					\addplot[black, loosely dotted, domain=0:0.25] {1};
 					\draw[black,  loosely dotted] (axis cs:0.25,0) -- (axis cs:0.25,1);
 					\draw[black,  loosely dotted] (axis cs:0.15,0) -- (axis cs:0.15,0.5);
 				\end{axis}
 		\end{tikzpicture}
 	\label{fig:competition}}%
 		\end{minipage}\caption{Panel (a) illustrates the monopolist allocation $\bm{q}_{N}^{M}$
 			under a no-damaging constraint; Panel (b) illustrates the allocation
 			$\bm{q}^{C}[x,y]$ that prevails in a subgame in which the highest
 			and second-highest quality are, resp., $x$ and $y$.}
 		\label{fig:extensions}
 	\end{figure}
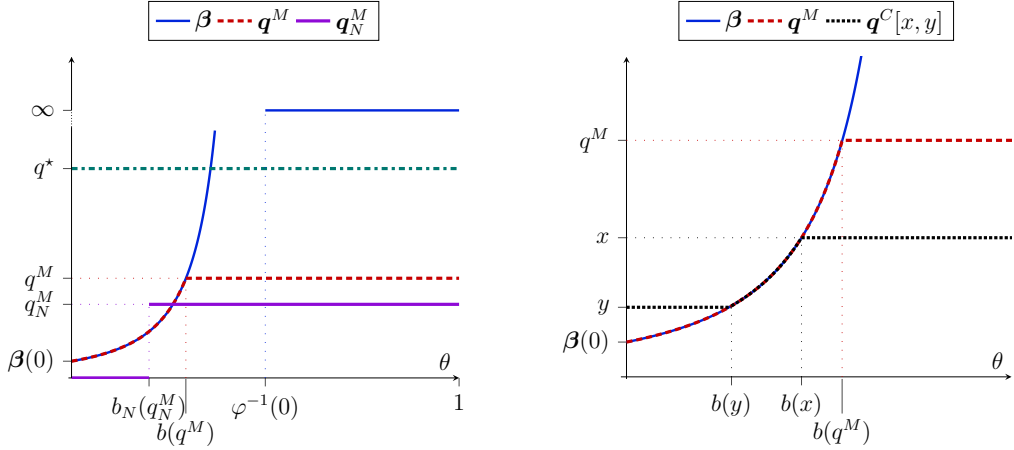
 	Figure \ref{fig:nodamaging} illustrates the optimal allocation. Here is
 	the intuition why the monopolist supplies a lower top quality than
 	under screening.  
 	
 	The marginally bunched type $b(q)$ is the type $\theta$
 	that maximizes the marginal utility $u_{q}(q,\theta)$ weighted by
 	the mass of types above $\theta$, by Remark \ref{rem:tariff}. However,
 	the marginal revenues of the non-discriminating seller account for
 	the fact that $q$ is sold to non-excluded types at a price that increases
 	with the marginal utility of \emph{cutoff} type $b_{N}(q)$. Hence,
 	productive inefficiency gets stronger with the addition of the no-damaging
 	constraint: $V'(q)-V_{N}'(q)\ge0$ for all $q$, with strict inequality
 	whenever the screening seller does not engage in full bunching.
 	
 	Therefore,
 	the highest quality without screening, $q_{N}^{M}$, is lower than
 	$q^{M}$. By the conclusion of the preceding paragraph, distributional
 	efficiency is more prevalent than with screening. In particular, the
 	strengthening of productive inefficiency implies that a larger region
 	of types gets the undamaged good, i.e., $b_{N}(q_{N}^{M})\le b(q^{M})$. 
 	\begin{prop}
 		\label{prop:noscreening}If $b(q^{M})>0$, then it holds that $q_{N}^{M}<q^{M}$
 		and $b_{N}(q_{N}^{M})<b(q^{M})$.
 	\end{prop}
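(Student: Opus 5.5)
We compare the two first-order conditions for the cap. By Proposition \ref{prop:monopolist}, $q^{M}$ solves $V'(q)=c'(q)$ with $V'(q)=(1-F(b(q)))(g'(q)+b(q))$. In the no-screening problem, $q_N^M$ maximizes $V_N(q)-c(q)$, where $V_N(q)=\max_\theta(1-F(\theta))(g(q)+\theta q)$. By the envelope theorem, $V_N'(q)=(1-F(b_N(q)))(g'(q)+b_N(q))$. The Inada conditions give interior optima, so $q_N^M$ satisfies $V_N'(q_N^M)=c'(q_N^M)$ at any maximizer.

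\emph{Step 1: pointwise ranking of marginal revenues.} By Remark \ref{rem:tariff}, $b(q)$ maximizes $h_q(\theta):=(1-F(\theta))(\theta+g'(q))$ over $\Theta$. Under the increasing virtual value assumption, $h_q$ is single-peaked: $h_q'(\theta)=-F'(\theta)(g'(q)+\varphi(\theta))$, which is positive below $b(q)$ and negative above it. Hence $V'(q)=h_q(b(q))\ge h_q(b_N(q))=V_N'(q)$ for every $q$. The inequality is strict whenever $b_N(q)\neq b(q)$.

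From the footnote, $g'(q)<g(q)/q$. Since $b_N(q)$ solves $g(q)/q+\varphi(\theta)=0$ (or is a corner) and $b(q)$ solves $g'(q)+\varphi(\theta)=0$, monotonicity of $\varphi$ gives $b_N(q)\le b(q)$. The inequality is strict when $b(q)>0$ is interior. The case $b(q)=\varphi^{-1}(0)$ from above needs a short check, but it cannot bind since $g'>0$.

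\emph{Step 2: ranking the caps.} Suppose toward contradiction that $q_N^M\ge q^M$. Because $b(q^M)>0$ and $b$ is nondecreasing, $b(q)>0$ for all $q\ge q^M$, so $V'(q)>V_N'(q)$ on $[q^M,\infty)$. Then
\begin{align*}
V_N(q_N^M)-c(q_N^M)-\bigl(V_N(q^M)-c(q^M)\bigr)=\int_{q^M}^{q_N^M}(V_N'-c')<\int_{q^M}^{q_N^M}(V'-c')\le 0,
\end{align*}
where the last inequality uses concavity of $V-c$ (Proposition \ref{app:prop:mon:alloc}) and $V'(q^M)=c'(q^M)$. This contradicts optimality of $q_N^M$ unless $q_N^M=q^M$. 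In that case the FOCs $V_N'(q^M)=c'(q^M)=V'(q^M)$ contradict the strict inequality from Step 1. Hence $q_N^M<q^M$. This avoids needing concavity of $V_N$, which need not hold.

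\emph{Step 3: ranking the cutoffs.} Since $b$ is nondecreasing and $b_N\le b$ pointwise, $b_N(q_N^M)\le b(q_N^M)\le b(q^M)$. For strictness, if $b(q_N^M)>0$, the first inequality is strict by Step 1. If $b(q_N^M)=0$, then $b_N(q_N^M)=0<b(q^M)$.

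The main obstacle is the rigorous version of Step 1 and the envelope formula for $V_N$. The argmax defining $b_N(q)$ may be non-unique or at a corner, so $V_N$ may only be directionally differentiable. I would handle this by working with one-sided derivatives, using $V_N'^{+}(q)=\max_{\theta\in\arg\max}(\cdots)$, and running Step 2 with these. Every selection from the argmax is still $\le b(q)$, and strictly so when $b(q)>0$.
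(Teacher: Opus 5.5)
Your proof is correct. Its core coincides with the paper's argument, but you differ in how the marginal-revenue ranking is turned into a ranking of caps.

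\emph{Where you agree with the paper.} Since $h_q(\theta)=\int_{[\theta,1]}(g'(q)+\varphi(s))\diff F(s)$, your comparison $h_q(b(q))\ge h_q(b_N(q))$ is exactly the paper's identity $V'(q)-V_N'(q)=-\int_{[b_N(q),b(q)]}J_1(q,\theta)\diff F(\theta)\ge0$. Both rely on $b_N\le b$, which follows from $g'(q)<g(q)/q$. Your Step 3 makes rigorous, with strictness, what the paper only asserts in the text.

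\emph{Where you differ (Step 2).} The paper asserts that $V_N$ is concave and compares the unique roots of the two first-order conditions. You use only concavity of $V-c$: you integrate $V_N'-c'<V'-c'\le0$ over $[q^M,q_N^M]$, and you dispose of the knife-edge $q_N^M=q^M$ via the first-order condition.

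\emph{Why your route is more robust.} The paper's concavity argument for $V_N$ uses $J_1(q_1,\theta)\ge0$ for all $\theta\ge b_N(q_1)$. That fails on $[b_N(q_1),b(q_1))$, precisely where $b_N<b$. Indeed, at an interior $b_N(q)$ one computes
$V_N''(q)=(1-F(b_N(q)))g''(q)+F'(b_N(q))\,(g(q)/q-g'(q))^2/(q\,\varphi'(b_N(q)))$,
and the second, positive term can dominate when $\varphi'$ is small. Your argument needs no concavity of $V_N$ and applies to every maximizer of $V_N-c$, so it does not even require $q_N^M$ to be unique.

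\emph{The obstacle you flag at the end is not real.} For $q>0$, $\theta\mapsto(1-F(\theta))(g(q)+\theta q)$ has derivative $-F'(\theta)(g(q)+\varphi(\theta)q)$. This changes sign at most once because $\varphi$ is increasing, so $b_N(q)$ is unique. Danskin's theorem then gives $V_N\in\mathcal{C}^1$ with your envelope formula, corner cases included. No one-sided-derivative machinery is needed.
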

 	The ban may induce exclusion, which does not occur under screening,
 	and makes low types worse off. In particular, the ban makes types
 	in $[0,b_{N}(q_{N}^{M})]$ worse off because these types are excluded
 	under $\bm{q}_{N}^{M}$. Instead, the ban makes low types better off
 	whenever the non-discriminating monopolist does not exclude the bottom
 	type. In particular, the ban makes types in $[0,b(q_{N}^{M})]$ better
 	off if $b_{N}(q_{N}^{M})=0$, because $\bm{q}_{N}^{M}$ allocates
 	$q_{N}^{M}$ to all types lower than $b(q_{N}^{M})$ and $q_{N}^{M}>\bm{q}^{M}(\theta)$
 	for $\theta<b(q_{N}^{M})$.
 	
 	\subsection{Competition}\label{subsec:competition}
 	In this section, we extend our analysis to a competitive setting.
 	We show that a competitive market yields a stochastic allocation that
 	``shrinks'' the monopolist one: the lowest quality is higher, and is
 	offered for free, more types receive an undamaged good, but the highest quality is lower. The reduction of the damaging
 	inefficiency and the increase of the productive inefficiency push welfare in opposite directions. Whether welfare is higher under monopoly than competition depends on the relative strenghts of the two effects, which, in turn, are determined by the shape of the cost function.
 	
 	\paragraph{Description of the game}
 	
 	We study a two-stage game of perfect information among countably many
 	replicas, indexed by $i\in\{1,2,\dots\}$, of the seller in Section
 	\ref{sec:model}. In the first stage of the game, every firm $i$
 	simultaneously chooses an investment strategy---potentially mixed,
 	so firm $i$ chooses a distribution over $Q$---and pays the production
 	cost of the realized quality $q_{i}$.\footnote{The definition of the game and the technical details are in the Appendix \ref{app:sec:game:definitions}.}
 	Then, the profile of realized investments becomes public information
 	among firms. In the second stage, firms compete à la Bertrand: every
 	firm $i$ simultaneously chooses a pricing function $p_{i}$ over
 	the qualities that are feasible; i.e., in $[0,q_{i}]$. Finally, every
 	buyer treats the firms as supplying homogeneous goods, and observes
 	the pricing functions; effectively the buyers face the tariff given
 	by the lower envelope of the pricing functions, $q\mapsto\min_{i}p_{i}(q)$.
 	Every buyer maximizes his payoff purchasing a feasible quality $q$
 	from a firm $i$ or buying nothing for the outside option of 0 payoff.
 	The revenues of firm $i$ derive from the demand for the qualities
 	it offers at the lowest price, and costs are sunk at the second stage.
 	We study the subgame-perfect equilibria of this two-stage game of
 	complete information, in which firms choose a distribution of their
 	investment and a pricing function conditional on the realized investments.
 	
 	\paragraph{Results}
 	
 	We can study competition as a constrained monopolist problem, due
 	to the structure of the game. Consider the subgame starting at a cap
 	profile with first- and second-highest qualities given by $x$ and
 	$y$, respectively. Bertrand competition drives to zero the revenues
 	from qualities in $[0,y]$. Hence, only one firm earns positive revenues:
 	the firm $i^{*}$ that produces $x$ and sells qualities in the spectrum
 	$[y,x]$ over which it holds market power. We establish that the ensuing
 	allocation $\bm{q}^{C}[x,y]$ admits an intuitive characterization:
 	it is obtained by ``slicing'' the virtual-surplus maximizer both
 	from above at $x$ and from below at $y$, yielding $\bm{q}^{C}[x,y]\colon\theta\mapsto\max\{\min\{\bm{\beta}(\theta),x\},y\}$ (Figure \ref{fig:competition}).
 	The quality $y$ is provided for free to all types below $b(y)$.\footnote{If a buyer is indifferent across firms, then he purchases from the
 		lowest-index firm. In equilibrium, the firm serving type $\theta\le b(y)$
 		is the firm with lowest index $i$ such that $q_{i}\ge y$.} Qualities above $y$, instead, are available exclusively from the
 	``interim'' monopolist, the only firm earning positive revenues,
 	given by $V(x)-V(y)$.
 	
 	As an implication, the investment game reduces to an all-pay contest
 	in which the second-highest bid takes away part of the gains from
 	winning. From the point of view of firm $i$, the investment of competitors
 	acts as a fixed cost in its production problem ``at'' the highest
 	quality of the opponents---equal to $V(\max_{j\ne i}q_{j})$. Only
 	by ``winning'' does firm $i$ earn positive revenues. This fixed
 	cost affects only the ``entry'' decision, and not investment conditional
 	on entry. Therefore, the best response is either $q^{M}$ or $0$.
 	
 	The following result characterizes the equilibrium allocations. A
 	firm is \emph{active} in an equilibrium if it produces a positive
 	quality with positive probability.
 	\begin{prop}
 		\label{prop:competition}For every $n\in\{1,2,\dots\}$, there exists
 		an equilibrium with $n$ active firms. In every equilibrium in pure
 		strategies, only one firm is active. In every equilibrium with $n\ge2$
 		active firms, the investment strategy of every active firm is the
 		continuous distribution function with support $[0,q^{M}]$ given by
 		$H_{n}(q)=(c'(q)/V'(q))^{1/(n-1)}$, for all $q\in[0,q^{M}]$.
 	\end{prop}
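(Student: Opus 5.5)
The plan is to solve the pricing stage first and then reduce the investment stage to an all-pay contest with reduced-form payoffs. In the pricing subgame with highest cap $x$ and second-highest cap $y$, Bertrand undercutting drives the price of every quality in $[0,y]$ to zero. The firm producing $x$ then acts as a monopolist on $[y,x]$, facing buyers whose outside option is quality $y$ for free. Applying Lemma \ref{lem:decomposition} to this residual problem (a $q$-constrained problem with lower bound $y$), its optimal allocation is $\max\{\min\{\bm\beta,x\},y\}$, and its revenue is $V(x)-V(y)$. This follows from the marginal-revenue formula \eqref{eq:marginalrevenues}: the revenue from qualities up to $y$ is exactly what is competed away, and raising the cap from $y$ to $x$ integrates $V'$. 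If caps tie at the top, both firms get zero revenue.

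Firm $i$'s payoff from choosing $q$ against opponents' realized maximum $m$ is therefore $\mathbf{1}\{q>m\}(V(q)-V(m))-c(q)$. Since $V-c$ is strictly concave with maximizer $q^M$, any $q>q^M$ is dominated by $q^M$. In a pure equilibrium, two active firms cannot coexist. If both produce positive qualities, the lower firm earns $-c(q)<0$ and prefers $0$. If they tie, both earn negative profit. A single active firm choosing $q^M$, with all others choosing $0$, is an equilibrium, because any deviator loses: its payoff is $V(q)-V(q^M)-c(q)\le -c(q)$ when $q>q^M$.

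For mixed equilibria with $n\ge2$ active firms, I would use the standard all-pay argument. First, show there are no atoms at positive qualities, since a rival could profitably jump just above an atom. Next, show the support has no gaps. Then show the common support is $[0,q^M]$ with zero equilibrium payoff, because a firm at the bottom of the support wins with probability zero and must be indifferent with producing $0$. Symmetry of the active firms follows from the indifference conditions.

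Let $G(q)=\prod_{j\ne i}H_j(q)$ be the distribution of the opponents' maximum. Firm $i$'s expected payoff at $q$ is $\int_{[0,q)}(V(q)-V(m))\,dG(m)-c(q)$. Integrating by parts gives $\int_0^q V'(s)G(s)\,ds-c(q)$, and indifference at zero payoff on the support requires $V'(q)G(q)=c'(q)$. So $G=c'/V'$, which equals $1$ at $q^M$ and is $0$ at $0$ because $c'(0)=0$. With symmetry, $H_n^{n-1}=c'/V'$, which yields the formula. One still has to check that $c'/V'$ is nondecreasing, which holds because $V$ is concave and $c$ is convex. Existence of an equilibrium with $n$ active firms comes from letting $n$ firms play $H_n$ and the rest play $0$; deviations above $q^M$ are unprofitable.

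The main obstacle is the pricing-stage characterization. One must show that the lower envelope of pricing functions indeed yields zero prices on $[0,y]$, so that the top firm's revenue is exactly $V(x)-V(y)$. This requires handling buyers' participation and IC given the free option $y$, and invoking the marginal-revenue result to justify the telescoping of revenue. The no-atom and symmetry steps are standard but must also handle the discontinuity of $V'$ at the boundary of full bunching, where continuous pasting (Proposition \ref{app:prop:mon:alloc}) is needed.
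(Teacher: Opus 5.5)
Your proposal follows the paper's proof essentially step by step. Bertrand undercutting makes every quality in $[0,y]$ free. The top firm solves the $x$-$y$ constrained problem and earns $V(x)-V(y)$. The investment stage then reduces to an all-pay contest with payoff $\int_0^q V'(s)G(s)\,ds-c(q)$. Excluding atoms and gaps, together with the indifference condition $V'G=c'$, gives $H_n=(c'/V')^{1/(n-1)}$. Your check that $c'/V'$ is nondecreasing, so that $H_n$ is a distribution function, is a small addition the paper leaves implicit.

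\textbf{A wrong inequality.} In verifying the monopoly profile you claim $V(q)-V(q^M)-c(q)\le -c(q)$ for $q>q^M$. This is false: $V'>0$, so $V$ is strictly increasing. The correct bound, which is the one the paper uses, is
\[
V(q)-V(q^M)-c(q)=\int_{q^M}^{q}\bigl(V'(s)-c'(s)\bigr)\,ds-c(q^M)<-c(q^M)<-c(0).
\]
It follows from the strict concavity of $V-c$ that you had already invoked.

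\textbf{A missing existence check.} For existence with $n\ge2$, you must also show that an inactive firm does not want to enter below $q^M$. This holds because, facing $n$ rivals playing $H_n$, its payoff derivative is $c'\bigl((c'/V')^{1/(n-1)}-1\bigr)<0$ on $(0,q^M)$.

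\textbf{Symmetry for $n\ge3$.} Like the paper, you treat symmetry among $n\ge3$ active firms as immediate. But indifference only equates $H_i$ and $H_j$ where both firms are active. You also need the off-support condition to exclude a firm $k$ that puts an atom at $0$ and enters only above some $a>0$; this configuration is what generates asymmetric equilibria in standard all-pay auctions. Here it can be ruled out. On $(0,a)$, firm $k$'s payoff derivative equals $c'\bigl(H_j/H_k-1\bigr)$ for any firm $j$ active there. This is negative, because $H_j$ stays below the constant value of $H_k$ until the two coincide at the entry point $a$.
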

 	The monopolist allocation is an equilibrium allocation, and, in particular, is the unique allocation induced by  pure-strategy equilibria. This result arises due to the implicit
 	commitment of the timing and information structure of the game. Specifically,
 	an active firm can be thought of as committing to fight ``entry,''
 	because its pricing function conditions on the investment of competitors.
 	This type of commitment is absent in models of competitive screening
 	with simultaneous contract posting described in what follows.
 	
 	We refer to equilibria with $n\ge2$ active firms as ``competitive''.
 	A competitive equilibrium induces the random allocation given by $\bm{q}^{C}[X,Y]$,
 	in which $X$ and $Y$ are the first- and second-order statistics
 	of $n$ i.i.d.~draws from $H_{n}$, respectively. The resulting allocation
 	is a contraction of the monopolist allocation.
 	This observation leads to a qualitative effect of competition on the
 	inefficiencies of the monopolist allocation. First, competition exacerbates
 	the productive inefficiency; i.e., $X<q^{M}$ with probability 1.
 	Second, competition can increase the allocation of low types reducing
 	the damaging inefficiency.
 	\begin{cor}\label{comp:cor}
 		In every equilibrium with $n\ge2$ active firms, with probability
 		one: the highest distributed quality is strictly lower than $q^{M}$.
 		Moreover, if the monopolist does not engage in full bunching, then
 		the lowest distributed quality is strictly greater than $\bm{\beta}(0)$
 		with positive probability.
 	\end{cor}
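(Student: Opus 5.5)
Both claims will follow from Proposition \ref{prop:competition}, which pins down the equilibrium investment distribution of every active firm, together with the subgame allocation $\bm{q}^{C}[x,y]\colon\theta\mapsto\max\{\min\{\bm{\beta}(\theta),x\},y\}$. In a competitive equilibrium with $n\ge2$ active firms, each active firm draws its quality independently from $H_{n}(q)=(c'(q)/V'(q))^{1/(n-1)}$ on $[0,q^{M}]$. The realized allocation is $\bm{q}^{C}[X,Y]$, where $X$ and $Y$ are the first and second order statistics of the $n$ draws. Inactive firms produce $0$ and do not affect $X$ or $Y$.

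For the first claim, the highest distributed quality is $\sup_\theta \bm{q}^{C}[X,Y](\theta)$. Since $\bm\beta(\theta)=\infty$ on the positive-measure set $[\varphi^{-1}(0),1]$, this supremum equals $\max\{X,Y\}=X$. So I only need $\Pr(X<q^{M})=1$. Each $H_{n}$ is a continuous distribution function on $[0,q^{M}]$, so it has no atom at $q^{M}$ and each draw equals $q^{M}$ with probability zero. A union over the $n$ active firms then gives $X<q^{M}$ almost surely, since $X\le q^{M}$ holds by the support. Continuity at $q^{M}$ can also be checked directly: $c'(q^{M})=V'(q^{M})$ by the first-order condition in Proposition \ref{prop:monopolist}, so $H_{n}(q^{M})=1$, and $c'/V'$ is continuous there.

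For the second claim, the lowest distributed quality is $\bm{q}^{C}[X,Y](0)=\max\{\min\{\bm\beta(0),X\},Y\}$. If $Y>\bm\beta(0)$, then $X\ge Y>\bm\beta(0)$, so the lowest quality is $Y>\bm\beta(0)$. It therefore suffices to show $\Pr(Y>\bm\beta(0))>0$. No full bunching means $b(q^{M})>0$, i.e.\ $q^{M}>\bm\beta(0)$. It remains to show that $H_{n}$ puts positive mass on $(\bm\beta(0),q^{M}]$, i.e.\ $H_{n}(\bm\beta(0))<1$. I would argue this from the formula: $V$ is concave (Proposition \ref{app:prop:mon:alloc}) and $c$ is strictly convex, so $c'/V'$ is strictly increasing wherever $V'>0$. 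Since $q^{M}$ is the unique crossing point, $c'(\bm\beta(0))<V'(\bm\beta(0))$. Hence $H_{n}(\bm\beta(0))<1$, and each active draw exceeds $\bm\beta(0)$ with probability $p>0$. By independence, at least two of the $n\ge2$ draws exceed $\bm\beta(0)$ with probability at least $p^{2}>0$, and on that event $Y>\bm\beta(0)$.

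The main obstacle is the strict inequality $H_{n}(\bm\beta(0))<1$, which requires $c'<V'$ strictly below $q^{M}$. I would get it from strict convexity of $c$ combined with concavity of $V$, which makes $c'-V'$ strictly increasing, so its unique zero is $q^{M}$. One care point: $V'$ must be positive on $[0,q^{M}]$, which follows from formula \eqref{eq:marginalrevenues} and the Inada condition on $g$.
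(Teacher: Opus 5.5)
Your proof is correct and takes essentially the paper's route: the corollary is read off Proposition \ref{prop:competition} (each active firm draws from the continuous $H_n$ supported on $[0,q^{M}]$) together with the sliced allocation $\bm{q}^{C}[X,Y]$, and the independent draws then give $X<q^{M}$ almost surely and $\Pr(Y>\bm{\beta}(0))>0$. Your separate derivation of $H_n(\bm{\beta}(0))<1$ from the strict monotonicity of $c'-V'$ is valid but not needed: it already follows because the support of $H_n$ is all of $[0,q^{M}]$ and $q^{M}>\bm{\beta}(0)$.
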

 	
 	 Competition exacerbates productive inefficiency, but improves distributional efficiency. Hence, Corollary \ref{comp:cor} does not establish that competition ``qualitatively'' dominates monopoly, or vice versa, in terms of welfare. However, quantitatively, the comparison remains undetermined: we shut down either effect of competition on the
 	inefficiency via appropriate cost functions (Appendix \ref{app:supp:sec:competition}). As a consequence, there are primitives under which competition dominates monopoly, and vice versa. Specifically, welfare amounts
 	to consumer surplus, because all firms make $0$ profits in a competitive equilibrium. Moreover, the relevant welfare comparison is
 	between duopoly and monopoly as welfare decreases in the number of
 	active firms, starting from $n\ge2$ active firms. If full bunching
 	occurs in $\bm{q}^{M}$---e.g., for steep $c$ by Proposition \ref{prop:mon:comparative}---then
 	the productive inefficiency is the dominant difference between the
 	monopoly and the duopoly allocation; in this case, welfare is higher
 	under monopoly than under duopoly. If, instead, $c$ approximates a
 	fixed-cost function, then the distributive effect dominates, and duopoly
 	dominates monopoly in terms of welfare.
 	\begin{rem}
 		\label{rem:free}A positive quality is supplied at zero
 		price in every competitive equilibrium, because the second-order statistic
 		of the equilibrium caps is positive. However, a positive mass of types
 		consumes this free quality if and only if $y>\bm{\beta}(0)$, which
 		occurs with positive probability. If $y>\bm{\beta}(0)$, then all types
 		in $[0,b(y)]$ consume quality $y$, receiving a strictly higher
 		quality than under the monopolist allocation. The lowest monopolist
 		quality $\bm{\beta}(0)$ can exceed the second-highest quality $y$; in this case, the availability of $y$ simply lowers the transfer of all types, and the competitive allocation is
 		the same as the monopolist allocation for low types (i.e., types in
 		$[0,b(x)]$ receive the same quality under $\bm{q}^{M}$ and under
 		$\bm{q}^{C}[x,y]$.)
 	\end{rem}
 	\paragraph*{Alternative models of competition}
 	
 	In our model, firms compete for buyers after the investment stage,
 	which determines feasible contracts, is complete and publicly observed.
 	Given this timing assumption, an ``incumbent'' firms effectively
 	commits to fight an ``entrant'' in stage 2. Models of competitive
 	screening can lead to non-existence of equilibria if firms can offer
 	any contract taking as given the contracts offered by the opponents,
 	differently from our model \citep*{rothschild_equilibrium_1976}.
 	
 	The timing of our model is also adopted in the quality-screening duopoly
 	of \citet*{champsaur_multiproduct_1989}, in which firms first commit
 	to a quality spectrum at no costs, and then price feasible qualities.
 	Firms make positive profits in equilibrium by playing disjoint quality
 	intervals, which are ruled out in our model in which feasible quality
 	sets take the form $[0,q]$. The space of feasible quality intervals
 	is a key difference between the setups: in \citet*{champsaur_multiproduct_1989},
 	a firm choosing $[a,b]$ in stage 1 effectively commits not to sell
 	qualities worse than $a$ in stage 2, whereas this commitment is absent
 	in our model.
 	
 	There are alternative ways to model multiple firms and privately informed
 	buyers, see \citet*{stole_price_2007} for a survey. In \citet*{garrett_competitive_2019},
 	asymmetric information is two-sided: consumers are imperfectly informed
 	about the posted contracts. Their model generates menu dispersion
 	and competition can raise prices of low-quality goods. In \citet*{johnson_multiproduct_2003},
 	an entrant and an incumbent move simultaneously, but the entrant can
 	only produce qualities lower than a given threshold. This exogenous
 	constraint mimics the equilibrium in the pricing stage of our game.
 	The notion of ``relevance'' introduced by \citet*{chade_screening_2021}
 	does not induce equilibrium existence with our nonseparable costs.
 	In their ``vertical'' oligopoly, with a contract-posting game à
 	la \citeauthor*{rothschild_equilibrium_1976}, a form of market power
 	induces equilibrium existence: each firm has a cost advantage over
 	a quality interval. However, if producing $q$ makes all qualities
 	lower than $q$ available for free, then a cost advantage for
 	low qualities does not play any role; in fact, costs are separable
 	in the model of \citeauthor*{chade_screening_2021}.
 	
 	\section{Conclusion}
 	
 	This paper extends the workhorse screening model by studying an alternative
 	cost structure in which costs depend on the maximum quality. The optimum
 	for the seller involves both the usual virtual-surplus analysis, and
 	the association of each quality to a marginally bunched type, in order
 	to determine the marginal revenue of a quality investment. A natural
 	extension is to consider costs such that providing a quality to a
 	type depends both on her own quality and on the entire
 	allocation. This analysis can be used to model,
 	e.g., economies of scale, and damaging costs.
 	
 	Innovation in digital markets proceeds dynamically, and firms adjust
 	research and production decisions over time, for example as decisions
 	of competitors are observed, or new information about demand is gathered.
 	An extension that captures a dynamic model of competition can inform
 	policies related to patent protection, innovation, and the regulation
 	of natural monopolies in digital markets.
 	
 	\newpage 
 	\appendix
 	
 	\section{Proofs}
 	
 	\subsection{General model}
 	
 	The following model nests the model in the main body of the paper.
 	The set of qualities is $Q\coloneqq[0,\overline{q}]$, an allocation
 	is a measurable $\bm{q}\colon\Theta\to Q$, the set of allocations
 	is $\bm{Q}$, $F$ is a distribution function whose support is $\Theta\coloneqq[0,1]$,
 	and is $\mathcal{C}^{2}$. The utility $u\colon Q\times\Theta\to\mathbb{R}$
 	and the virtual surplus $J\colon(q,\theta)\mapsto u(q,\theta)-\frac{1-F(\theta)}{F'(\theta)}u_{2}(q,\theta)-k(q)$
 	satisfy increasing differences, and are: $\mathcal{C}^{2}$, concave
 	in $q$ for all $\theta\in\Theta$, strictly quasiconcave in $q$
 	for all $\theta\in\Theta\setminus S$, for a countable $S\subseteq\Theta$,
 	and a convex, $\mathcal{C}^{2}$, and nondecreasing $k\colon Q\to\mathbb{R}$.
 	The function $c\colon Q\to\mathbb{R}$ is increasing, strictly convex,
 	differentiable, and satisfies: 
 	\begin{align*}
 		\lim_{q\to0}{\Big (\int_{\Theta}J_{1}(q,{ \theta)\diff F(\theta)-c'(q)\Big )>0>\lim_{q\to\overline{q}}{ \Big (\int_{\Theta}J_{1}(q,\theta)\diff F(\theta)-c'(q)\Big)}}}.
 	\end{align*}
 	We maintain the convention that $\inf\emptyset=1$, and use Iverson brackets: $[P]=1$ if the statement $P$ is true, and $[P]=0$ otherwise.
 	
 	The present model nests the model in the main body of the paper except
 	for the finite quality upper bound $\overline{q}$, because $Q$ is
 	unbounded in the paper. The only role of $\overline{q}$ is to compactify
 	the quality space, as shown in the rest of the Appendix (Proposition
 	\ref{app:prop:eff:alloc}, \ref{app:prop:mon:alloc}). Hence, the
 	results in the main text are implied by the results that are proved
 	in what follows.
 	
 	\paragraph*{Preliminaries}
 	
 		The welfare of allocation $\bm q$ is $W(\bm q) = \int_{\Theta}u(\theta,\bm{q}(\theta))-k(\bm{q}(\theta))\diff F(\theta)-c(\sup\bm{q})$. The allocation $\bm{q}$ is \emph{efficient} if $\bm{q}$ solves $\max_{\bm{q}\in\bm{Q}} W(\bm q)$.	Given $q\in Q$, we let $U(q)$ be the value of the problem ${\mathcal{P}}^{\star}(q)$,
 	\begin{align*}
 		U(q)=\sup_{\bm{q}\in\bm{Q}}\int_{\Theta}u(\bm{q}(\theta),\theta)-k(\bm{q}(\theta))\diff F(\theta)\ \text{subject to:}\ \bm{q}(\theta)\le q\ \text{for all}\ \theta\in\Theta.
 	\end{align*}
 	The \emph{surplus maximizer} is the largest selection
 	from $\theta\mapsto\argmax_{q\in Q}u(\theta,q)-k(q)$; we denote the surplus maximizer by $\bm \alpha$. The generalized
 	inverse $a\colon q\mapsto\inf\{\theta\in\Theta\mid \bm{\alpha}(\theta)\ge q\}$
 	of $\bm{\alpha}$ is nondecreasing because $u$ has increasing differences
 	\citep*{topkis_minimizing_1978}, and satisfies: $a(q)=0$ if $q<\bm{\alpha}(0)$,
 	and $a(q)=1$ if $q>\bm{\alpha}(1)$. Moreover, $a$ is continuously
 	differentiable at $q\in(\bm{\alpha}(0),\bm{\alpha}(1))$, with $a'(q)=-\frac{u_{11}(q,a(q))-k''(q)}{u_{12}(q,a(q))}$
 	by the implicit function theorem.
 	
 	The set of direct mechanisms that are incentive-compatible and individually
 	rational is $\bm{M}=\{(\bm{q},t)\in\bm{Q}\times\mathbb{R}^{\Theta}\mid u(\theta,\bm{q}(\theta))-t(\theta)\ge u(\theta,\bm{q}(\hat{\theta}))-t(\hat{\theta}),\,u(\theta,\bm{q}(\theta))-t(\theta)\ge0\:\text{for all}\:(\theta,\hat{\theta})\in\Theta^{2}\}$.
 	The allocation $\bm{q}$ is \emph{monopolist} if there exists $t\colon\Theta\to\mathbb{R}$
 	such that $(\bm{q},t)$ solves $\sup_{(\bm{q},\,t)\in\bm{M}}\int_{\Theta}t(\theta)\diff F(\theta)-c(\sup\bm{q})$.
 	For $q\in Q$,  we let $V(q)$ be the value of the problem $\mathcal{P}(q)$, 
 	\begin{align*}
 		V(q)=\sup_{(\bm{q},\,t)\in\bm{M}}\int_{\Theta}t(\theta)\diff F(\theta)\ \text{subject to:}\ \bm{q}(\theta)\le q\ \text{for all}\ \theta\in\Theta.
 	\end{align*}
 	The \emph{virtual surplus maximizer} is the largest selection from $\theta\mapsto\argmax_{q\in Q}J(q,\theta)$; we denote the virtual surplus maximizer by $\bm \beta$. The generalized
 	inverse $b\colon q\mapsto\inf\{\theta\in\Theta\mid \bm{\beta}(\theta)\ge q\}$
 	of $\bm{\beta}$ is nondecreasing because $J$ satisfies increasing
 	differences, and satisfies: $b(q)=0$ if $q<\bm{\beta}(0)$, and $b(q)=1$
 	if $q>\bm{\beta}(1)$. Moreover, $b$ is continuously differentiable
 	at $q\in(\bm{\beta}(0),\bm{\beta}(1))$, with $b'(q)=-\frac{J_{11}(q,b(q))}{J_{12}(q,b(q))}$.
 	
 	\subsection{Proofs for Section \ref{sec:model}}
 	
 	\label{app:sec:proofs}
 	
 	Proposition \ref{prop:efficient}, Lemma \ref{lem:decomposition},
 	and Proposition \ref{prop:monopolist} are implied by Proposition
 	\ref{app:prop:eff:alloc}, Lemma \ref{app:lem:mon:dec}, and Proposition
 	\ref{app:prop:mon:alloc}, respectively; a technical argument in the
 	proof of Proposition \ref{app:prop:eff:alloc} and \ref{app:prop:mon:alloc}
 	is relegated to Appendix \ref{app:sec:extra}. %%This is different in the Appendix for Insights.\
 	
 	\begin{lemma}\label{app:lem:eff:dec} The allocation $\bm{q}$ is
 		efficient if and only if: $\bm{q}$ solves $\mathcal{P}^{\star}(q^{\star})$
 		for a quality $q^{\star}\in\argmax_{q\in Q}U(q)-c(q)$. \end{lemma}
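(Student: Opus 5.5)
The plan is to decompose the welfare maximization by conditioning on the value of $\sup\bm{q}$. For any allocation $\bm{q}$, set $\hat q\coloneqq\sup\bm{q}\in Q$, which lies in the compact $Q=[0,\overline{q}]$. Then $\bm{q}$ is feasible for $\mathcal{P}^{\star}(\hat q)$, so
\[
W(\bm q)\le U(\hat q)-c(\hat q)\le \sup_{q\in Q}\bigl(U(q)-c(q)\bigr).
\]
For the reverse inequality, I will show that every allocation $\bm{q}'$ feasible for $\mathcal{P}^{\star}(q)$ satisfies $W(\bm q')\ge \int_\Theta u(\bm q'(\theta),\theta)-k(\bm q'(\theta))\diff F(\theta)-c(q)$. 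This holds because $\sup\bm{q}'\le q$ and $c$ is increasing. Hence $\sup_{\bm q}W(\bm q)\ge U(q)-c(q)$ for every $q$, and the two suprema coincide:
\[
\sup_{\bm q}W=\sup_q\bigl(U(q)-c(q)\bigr).
\]

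Next I establish existence of the inner and outer maximizers. For the inner problem $\mathcal{P}^{\star}(q)$, the objective is pointwise in $\theta$, so it is solved by $\theta\mapsto\min\{\bm\alpha(\theta),q\}$. This uses concavity of $u(\cdot,\theta)-k$: the constrained maximizer over $[0,q]$ is the truncation of the unconstrained one. The selection is measurable because $\bm\alpha$ is monotone. For the outer problem, $U$ is continuous in $q$, since the truncation varies continuously and dominated convergence applies on the compact $Q$. Combined with continuity of $c$, the maximum of $U-c$ is attained. This is where I expect the compactness of $Q$ to be used, and I would relegate the continuity argument to the technical appendix, as the paper indicates.

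For the ``if'' direction, take $q^{\star}\in\argmax(U-c)$ and let $\bm q$ solve $\mathcal{P}^{\star}(q^{\star})$. By the inequality above,
\[
W(\bm q)\ge U(q^{\star})-c(q^{\star})=\sup W,
\]
so $\bm q$ is efficient.

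For the ``only if'' direction, let $\bm q$ be efficient and set $\hat q=\sup\bm q$. The chain
\[
\sup W=W(\bm q)\le \int_\Theta u(\bm q(\theta),\theta)-k(\bm q(\theta))\diff F(\theta)-c(\hat q)\le U(\hat q)-c(\hat q)\le\sup(U-c)=\sup W
\]
forces equality throughout. Therefore $\hat q\in\argmax(U-c)$ and $\bm q$ attains $U(\hat q)$, i.e., $\bm q$ solves $\mathcal{P}^{\star}(\hat q)$. So the required $q^{\star}$ is $\hat q$ itself.

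The main subtlety is the ``only if'' direction: the statement asks for \emph{some} maximizer $q^{\star}$, and the argument above supplies it as $\hat q=\sup\bm q$. This avoids needing uniqueness of the maximizer of $U-c$. Uniqueness does hold here, since $U$ is concave as the value of a concave problem with a convex constraint set parametrized linearly in $q$, and $c$ is strictly convex, but the proof does not rely on it.
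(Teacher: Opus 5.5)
Your proof is correct and takes essentially the same route as the paper's: both show that the value of the joint welfare problem equals $\sup_{q}\bigl(U(q)-c(q)\bigr)$ using only that $c$ is increasing, take $q^{\star}=\sup\bm{q}$ in the ``only if'' direction, and use this value equality in the ``if'' direction. Your existence paragraph is not needed for the equivalence, since the paper treats existence and uniqueness of the maximizer separately in the next lemma; your joint-concavity argument that $U$ is concave is nonetheless valid.
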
 
 	\begin{proof}
 		We first show that $\bm{q}$ is efficient only if $\bm{q}$ solves
 		$\mathcal{P}^{\star}(q^{\star})$ for a quality $q^{\star}\in\argmax_{q\in Q}U(q)-c(q)$.
 		Let $\bm{q}$ be efficient. We want to show that: there exists $q^{\star}\in Q$
 		such that: (1) $\sup\bm{q}\le q^{\star}$, (2) for all $\tilde{\bm{q}}\in\bm{Q}$
 		with $\sup\tilde{\bm{q}}\le q^{\star}$, $\int_{\Theta}u(\bm{q}(\theta),\theta)-k(\bm{q}(\theta))\diff F(\theta)\ge\int_{\Theta}u(\tilde{\bm{q}}(\theta),\theta)-k(\tilde{\bm{q}}(\theta))\diff F(\theta)$,
 		and (3) for all $q\in Q$, $U(q^{\star})-c(q^\star)\ge U(q)-c(q)$. We claim that
 		$\sup\bm{q}$ satisfies the three properties. First, consider property
 		(2). As an implication of efficiency and the fact that $c$ is increasing,
 		we have that: for every $\tilde{\bm{q}}\in\bm{Q}$, if $c(\tilde{\bm{q}})<c(\bm{q})$,
 		then $\int_{\Theta}u(\bm{q}(\theta),\theta)-k(\bm{q}(\theta))\diff F(\theta)>\int_{\Theta}u(\tilde{\bm{q}}(\theta),\theta)-k(\tilde{\bm{q}}(\theta))\diff F(\theta)$.
 		Hence, property (2) holds. Moreover, $U(\sup\bm{q})=\int_{\Theta}u(\bm{q}(\theta),\theta)-k(\bm{q}(\theta))$,
 		so (3) and (1) hold.
 		
 		We proceed to show that $\bm{q}$ is efficient if $\bm{q}$ solves
 		$\mathcal{P}^{\star}(q^{\star})$ for a quality $q^{\star}\in\argmax_{q\in Q}U(q)-c(q)$.
 		Towards establishing an intermediate claim, let $X\coloneqq\bm{Q}$,
 		$Y\coloneqq Q$, for all $(x,y)\in X\times Y$, $f(x,y)\coloneqq\int_{\Theta}u(x(\theta),\theta)-k(x(\theta))\diff F(\theta)-c(y)$,
 		and $h(x)\coloneqq\sup x(\Theta)$. Denote by $v_{1}$ the value of
 		the efficient allocation, $v_{1}\coloneqq\sup\{f(x,h(x))\mid x\in X\}$,
 		and $v_{2}$ the value of the decomposed problem, $v_{2}\coloneqq\sup\{\sup\{f(x,y)\mid x\in X,\,h(x)=y\}\mid y\in Y\}$.
 		By construction, we have that $v_{1}=\sup\{f(x,y)\mid x\in X,\,y\in Y,\,h(x)=y\}$.
 		We claim that $v_{1}=v_{2}$. First, we show that $v_{2}\ge v_{1}$. For
 		all $y\in Y,\,x\in X$, if $h(x)=y$, then 
 		\begin{align*}
 			f(x,y)\le\sup\{f(\tilde{x},y)\mid\tilde{x}\in X,\,h(\tilde{x})=y\}.
 		\end{align*}
 		Thus, for all $y\in Y,\,x\in X$, if $h(x)=y$, then $f(x,y)\le v_{2}$.
 		Therefore, we have that $v_{2}\ge v_{1}$. Second, we show that $v_{1}\ge v_{2}$.
 		For all $y\in Y,\,x\in X$, if $h(x)=y$, then 
 		\begin{align*}
 			f(x,y)\le v_{1}.
 		\end{align*}
 		Therefore, it holds that $v_{1}\ge v_{2}$. We conclude that $v_{1}=v_{2}$.
 		
 		Let $\bm{q}$ solve $\mathcal{P}^{\star}(q^{\star})$ for a quality
 		$q^{\star}\in\argmax_{q\in Q}U(q)-c(q)$. By construction, we have
 		that $f(\bm{q},\sup\bm{q})\ge f(\bm{q},q^{\star})$, $v_{2}=f(\bm{q},q^{\star})$,
 		and $v_{1}\ge f(\bm{q},\sup\bm{q})$. We conclude that: 
 		\begin{align*}
 			v_{2}\ge f(\bm{q},\sup\bm{q})\ge v_{1}.
 		\end{align*}
 		Therefore, $\bm{q}$ is efficient. 
 	\end{proof}
 
 	\begin{lemma}
 		There exists a unique quality $q^\star$ that maximizes $q\mapsto U(q)-c(q)$. Moreover, $q^{\star}$ is
 		the unique quality $q$ such that $\int_{[a(q),1]}u_{1}(q,\theta)-k'(q)\diff F(\theta)=c'(q)$.
 	\end{lemma}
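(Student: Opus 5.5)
First I solve $\mathcal{P}^{\star}(q)$ explicitly and derive $U$ from the solution. The problem $\mathcal{P}^{\star}(q)$ has no incentive constraints and is separable across types. Its solution therefore maximizes $u(\cdot,\theta)-k(\cdot)$ pointwise over $[0,q]$. By concavity of $u(\cdot,\theta)-k$ and the definition of $\bm\alpha$, the solution is $\theta\mapsto\min\{\bm\alpha(\theta),q\}$. For types $\theta\ge a(q)$ we have $\bm\alpha(\theta)\ge q$, so the cap binds. For types $\theta<a(q)$, $\bm\alpha(\theta)<q$, so they receive their unconstrained optimum. Hence
\[
U(q)=\int_{[0,a(q))}\bigl(u(\bm\alpha(\theta),\theta)-k(\bm\alpha(\theta))\bigr)\diff F(\theta)+\int_{[a(q),1]}\bigl(u(q,\theta)-k(q)\bigr)\diff F(\theta).
\]

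Next I show $U$ is concave and compute its derivative. Concavity follows from a mixing argument. If $\bm q_1$ and $\bm q_2$ are feasible for $q_1$ and $q_2$, then $\lambda\bm q_1+(1-\lambda)\bm q_2$ is feasible for $\lambda q_1+(1-\lambda)q_2$. Since the integrand $u-k$ is concave in quality, $U(\lambda q_1+(1-\lambda)q_2)\ge\lambda U(q_1)+(1-\lambda)U(q_2)$.

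For the derivative I use an envelope argument on the representation above. Differentiating, the terms involving $a'(q)$ cancel. At the boundary type $a(q)$ we have $\bm\alpha(a(q))=q$ whenever $a(q)\in(0,1)$, so the two integrands coincide there. What remains is $U'(q)=\int_{[a(q),1]}(u_1(q,\theta)-k'(q))\diff F(\theta)$.

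Some care is needed where $a$ fails to be differentiable or $\bm\alpha$ jumps, i.e.\ near $\bm\alpha(0)$ and $\bm\alpha(1)$. For $q<\bm\alpha(0)$ we have $a\equiv0$, and for $q>\bm\alpha(1)$ we have $a\equiv 1$; in both regions the formula is immediate. Where $a$ jumps, because $\bm\alpha$ is flat, the integrand difference vanishes on the jump. To handle all cases uniformly, I would instead compute one-sided difference quotients directly: the loss or gain from moving the cap by $\varepsilon$ is bounded between $\int_{[a(q\pm\varepsilon),1]}$ of the difference quotients of $u-k$, which converges to the stated integral by continuity of $u_1$ and dominated convergence. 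I expect this step, establishing differentiability and the envelope formula robustly without smoothness of $a$, to be the main obstacle. In particular I will need continuity of $q\mapsto F(a(q))$, or an argument that any atom in its changes contributes zero because the integrand $u_1(q,\theta)-k'(q)$ vanishes at $\theta=a(q)$ by the first-order condition.

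Finally, existence and uniqueness. The map $q\mapsto U(q)-c(q)$ is strictly concave, since $U$ is concave and $c$ is strictly convex, and it is continuous on the compact set $Q$. Hence a maximizer exists and it is unique. Since $U$ is differentiable, the maximizer is characterized by $U'(q)=c'(q)$ provided it is interior. Interiority comes from the boundary conditions in the general model. As $q\to0$, $a(q)\to0$, so $U'(q)\to\lim\int_\Theta J_1$-type expressions; more precisely $U'(q)\ge\int_\Theta(u_1-k')\diff F\ge\int_\Theta J_1\diff F$ for small $q$, since $u_1\ge J_1$ by $u_{2}\ge0$ from increasing differences. This gives $U'-c'>0$ near $0$. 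Near $\overline q$ I would use the assumed limit analogously, possibly needing $a(q)\to1$ and a comparison of signs. Strict monotonicity of $U'-c'$ then makes the solution of $\int_{[a(q),1]}(u_1(q,\theta)-k'(q))\diff F(\theta)=c'(q)$ unique.
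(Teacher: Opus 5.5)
\textbf{Gap: interiority at the upper end.} This step fails as proposed. The only link you have between $U'$ and the boundary hypothesis is
\[
U'(q)\ \ge\ \int_\Theta\bigl(u_1(q,\theta)-k'(q)\bigr)\diff F(\theta)\ \ge\ \int_\Theta J_1(q,\theta)\diff F(\theta).
\]
This is what you need near $0$, but it points the wrong way near $\overline q$. The assumption $\lim_{q\to\overline q}\bigl(\int_\Theta J_1(q,\theta)\diff F(\theta)-c'(q)\bigr)<0$ gives no upper bound on $U'$, so an ``analogous'' argument cannot exclude $q^\star=\overline q$.

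No argument from the stated hypotheses alone can do so. Take $F$ uniform, $k=0$, $\overline q=1$, $u(q,\theta)=2q-\tfrac12q^2+\theta q$ and $c(q)=\tfrac35q^2$. Every assumption of the general model holds: here $\int_\Theta J_1(q,\theta)\diff F(\theta)=2-q$, so the two limits are $2$ and $-\tfrac15$. Yet $\bm\alpha\equiv1$, $a\equiv0$, and $U'(q)-c'(q)=\tfrac52-\tfrac{11}{5}q>0$ on $[0,1]$. Hence $q^\star=\overline q$ and the first-order equation has no solution in $Q$.

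What is missing is an upper bound on $U'$ near $\overline q$. Your hunch about $a(q)\to1$ points to one: if $\bm\alpha(1)<\overline q$, then $a(q)=1$ and $U'(q)=0<c'(q)$ on $(\bm\alpha(1),\overline q)$. In the main-text specification you can instead take $\overline q$ large, since there $U'(q)-c'(q)=g'(q)+\int_\Theta\theta\diff F(\theta)-c'(q)\to-\infty$ by the Inada conditions. The paper's own proof disposes of this point with ``by the properties of $c'$''. So you have found a weakness rather than created one, but your proposal does not close it.

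\textbf{Comparison with the paper.} The rest is sound and shares the paper's architecture: truncate $\bm\alpha$ at $q$ to solve $\mathcal P^\star(q)$, show $U$ is concave with the stated derivative, and conclude by strict concavity of $U-c$ and an interior first-order condition. The tools in the middle step differ.
\begin{itemize}
\item The paper gets differentiability piecewise, from continuous differentiability of $a$ on $(\bm\alpha(0),\bm\alpha(1))$ via the implicit function theorem. It gets concavity by showing $U'$ is nonincreasing.
\item You get concavity from joint convexity of the constraint $\bm q\le q$ in $(\bm q,q)$, with no derivatives. You get differentiability from a squeeze on one-sided difference quotients, much as the paper does in its ironing appendix.
\end{itemize}
Your route is more robust. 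The right and left derivatives at $q\in(0,\overline q)$ are the integrals of $u_1(q,\theta)-k'(q)$ over $\{\bm\alpha>q\}$ and $\{\bm\alpha\ge q\}$. They agree because the integrand vanishes, by the first-order condition, on the whole set $\{\bm\alpha=q\}$, not just at $a(q)$. So no smoothness of $a$ is needed, whereas the implicit-function step tacitly requires $u_{12}>0$.

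\textbf{Three small repairs.}
\begin{itemize}
\item In the squeeze, the types with $q<\bm\alpha(\theta)<q+\varepsilon$ must be bounded above by $u_1(q,\theta)-k'(q)$, using concavity. The difference quotient of $u-k$ bounds their contribution only from below.
\item Continuity of $U$ at the endpoints of $Q$ is not automatic for a concave function. It does follow from your explicit solution by dominated convergence.
\item The inequality $u_1-k'\ge J_1$ comes from $u_{12}\ge0$, not $u_2\ge0$. Combined with $u_1(q,\theta)-k'(q)\le0$ for $\theta<a(q)$, it gives $U'(q)\ge\int_\Theta J_1(q,\theta)\diff F(\theta)$ for every $q$. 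So you do not need $a(q)\to0$, which fails if $\bm\alpha=0$ on some $[0,\theta_0]$ with $\theta_0>0$.
\end{itemize}
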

 	\begin{proof}
 	\emph{Claim: $U$ is concave, continuously differentiable, and $U'(q)=\int_{[a(q),1]}u_{1}(q,\theta)-k'(q)\diff F(\theta)$
 			for all $q\in(0,\overline{q})$.} First, $U$ is differentiable at
 		$q<\bm{\alpha}(0)$ with $U'(q)=\int_{[0,1]}u_{1}(q,\theta)-k'(q)\diff F(\theta)$
 		if $q>0$; $U$ is differentiable at $q\in(\bm{\alpha}(0),\bm{\alpha}(1))$
 		with $U'(q)=\int_{[a(q),1]}u_{1}(q,\theta)-k'(q)\diff F(\theta)$
 		because $a$ is continuously differentiable; finally, $U$ is differentiable
 		at $q>\bm{\alpha}(1)$ with $U'(q)=0$ if $q<\overline{q}$. The previous
 		derivatives are continuously pasted at $\bm{\alpha}(0)$ and $\bm{\alpha}(1)$,
 		so $U$ is continuously differentiable. For the claim, it suffices
 		to establish that $U'(q_{2})-U'(q_{1})\le0$ for all $q_{1},q_{2}\in Q$
 		with $q_{2}>q_{1}$. It holds that 
 		\begin{align*}
 			U'(q_{2})-U'(q_{1})=\int_{[a(q_{2}),1]}u_{1}(q_{2},\theta)-u_{1}(q_{1},\theta)-k'(q_{2})+k'(q_{1})\diff F(\theta)\\
 			-\int_{[a(q_{1}),a(q_{2}))}u_{1}(q_{1},\theta)-k'(q_{1})\diff F(\theta).
 		\end{align*}
 		Moreover, $u_{1}(q_{2},\theta)-u_{1}(q_{1},\theta)-k'(q_{2})+k'(q_{1})\le0$
 		by concavity of $q\mapsto u(q,\theta)-k(q)$ for all $\theta\in\Theta$,
 		and $u_{1}(q_{1},\theta)-k'(q_{1})\ge0$ for all $\theta\ge a(q_{1})$
 		by definition of $a$ and $\bm{\alpha}$. Hence, $U$ is concave.
 		
 		\emph{Claim: $U$ is right continuous at $0$.} For all $q>0$, we
 		have: $U(q)\ge\int_{\Theta}u(0,\theta)-k(0)\diff F(\theta)$; moreover,
 		if $q$ is sufficiently small, we have $U(q)\le\int_{\Theta}u(q,\theta)-k(0)\diff F(\theta)$
 		by our assumption on $c'$. Hence, we have $U(q)\to\int_{\Theta}u(0,\theta)-k(0)\diff F(\theta)$
 		as $q\to0$ by continuity of $u$.
 		
 		By the properties of $c'$, if $q\in\argmax_{q\in Q}U(q)-c(q)$ then
 		$q\in(0,\overline{q})$, so the proof is complete. 
 	\end{proof}
 
 	\begin{proposition}\label{app:prop:eff:alloc}
 		The allocation $\bm{q}$ is efficient if and only if: there exists
 		an allocation $\bm{\gamma}$ such that $\bm{\gamma}(\theta)=\bm{\alpha}(\theta)$
 		almost everywhere and $\bm{q}(\theta)=\min\{\bm{\gamma}(\theta),q^{\star}\}$
 		for all $\theta$.
 	\end{proposition}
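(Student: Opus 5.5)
By Lemma \ref{app:lem:eff:dec}, $\bm{q}$ is efficient if and only if it solves $\mathcal{P}^{\star}(q^{\star})$, where $q^{\star}$ is the unique maximizer of $U-c$ given by the preceding lemma. The proposition therefore reduces to characterizing the solutions of the capped problem $\mathcal{P}^{\star}(q^{\star})$. Since $\mathcal{P}^{\star}(q)$ imposes no monotonicity constraint, it is separable across types: $\bm{q}$ solves it if and only if, for $F$-almost every $\theta$, $\bm{q}(\theta)$ maximizes $x\mapsto u(x,\theta)-k(x)$ over $[0,q^{\star}]$. So the plan is a pointwise analysis followed by an almost-everywhere patch.

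\emph{Pointwise step.} For $\theta\notin S$, the map $x\mapsto u(x,\theta)-k(x)$ is concave and strictly quasiconcave, so its maximizer over $Q$ is unique and equals $\bm{\alpha}(\theta)$. The function is then nondecreasing on $[0,\bm{\alpha}(\theta)]$ and strictly decreasing beyond it. Its unique maximizer on $[0,q^{\star}]$ is therefore $\min\{\bm{\alpha}(\theta),q^{\star}\}$. For $\theta\in S$, a countable and hence $F$-null set (since $F$ has a density), the argmax may be an interval; this is the reason for the modification $\bm{\gamma}$. A measurable selection is needed only on a null set, so measurability is not an issue. I should also check that $\theta\mapsto\min\{\bm{\alpha}(\theta),q^{\star}\}$ is measurable. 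This holds because $\bm{\alpha}$ is monotone by increasing differences, as the preliminaries note via Topkis.

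\emph{Sufficiency.} Suppose $\bm{q}=\min\{\bm{\gamma},q^{\star}\}$ with $\bm{\gamma}=\bm{\alpha}$ a.e. Then $\bm{q}(\theta)$ is the pointwise constrained maximizer for a.e. $\theta$, and $\bm{q}\le q^{\star}$ everywhere. Integrating, any feasible $\tilde{\bm{q}}$ with $\sup\tilde{\bm{q}}\le q^{\star}$ yields at most the value of $\bm{q}$. Hence $\bm{q}$ attains $U(q^{\star})$ and solves $\mathcal{P}^{\star}(q^{\star})$, so it is efficient by the lemma.

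\emph{Necessity.} Suppose $\bm{q}$ solves $\mathcal{P}^{\star}(q^{\star})$. If $\bm{q}(\theta)\neq\min\{\bm{\alpha}(\theta),q^{\star}\}$ on a set of positive measure outside $S$, then replacing $\bm{q}$ by the pointwise maximizer on that set strictly increases the integral, by uniqueness of the pointwise maximizer. This contradicts optimality. So $\bm{q}=\min\{\bm{\alpha},q^{\star}\}$ a.e., and I must build $\bm{\gamma}$ with $\bm{\gamma}=\bm{\alpha}$ a.e. and $\bm{q}=\min\{\bm{\gamma},q^{\star}\}$ everywhere. On the agreement set, set $\bm{\gamma}=\bm{\alpha}$. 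On the null exceptional set, set $\bm{\gamma}(\theta)=\bm{q}(\theta)$ when $\bm{q}(\theta)<q^{\star}$, and $\bm{\gamma}(\theta)=\overline{q}$ (or $q^{\star}$) otherwise; both choices recover $\bm{q}$ after taking the min. This uses $\bm{q}\le q^{\star}$ everywhere, which holds by feasibility.

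The main obstacle is not conceptual but a bookkeeping point in the necessity step: the lemma delivers a specific $q^{\star}$, and feasibility for $\mathcal{P}^{\star}(q^{\star})$ must be invoked so that the everywhere bound $\bm{q}\le q^{\star}$ makes the construction of $\bm{\gamma}$ on the null set valid. In the main-text specialization ($u=g+\theta q$, $k=0$), $\bm{\alpha}\equiv\overline{q}$ (that is, $\infty$), so this recovers the constant allocation of Proposition \ref{prop:efficient}.
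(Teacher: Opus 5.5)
Your proof is correct and follows the paper's route. You reduce via Lemma \ref{app:lem:eff:dec} (with $q^{\star}$ unique) to characterizing the solutions of $\mathcal{P}^{\star}(q^{\star})$, then use pointwise truncation of the surplus maximizer, with strict quasiconcavity of $u(\cdot,\theta)-k(\cdot)$ off the null set $S$ forcing $\bm{\gamma}=\bm{\alpha}$ almost everywhere; the only difference is that you prove the one-sided truncation step directly and construct $\bm{\gamma}$ on the exceptional null set explicitly, where the paper cites Lemma \ref{app:lem:general}.
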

 	\begin{proof}
 		The proof has two steps. First, we solve $\mathcal{P}^{\star}(q)$
 		for all $q\in Q$. Then, the result follows from Lemma \ref{app:lem:eff:dec}.
 		
 		It suffices to show that: for fixed $q\in Q$, $\bm{q}$ solves $\mathcal{P}^{\star}(q)$ iff $\bm{q}(\theta)=\min\{\bm{\gamma}(\theta),q\}$ for all $\theta$
 			and some $\bm{\gamma}\in\bm{Q}$ such that $\bm{\gamma}(\theta)=\bm{\alpha}(\theta)$
 			almost everywhere.
 			
 		By Lemma \ref{app:lem:general}, $\bm{q}$ solves
 		$\mathcal{P}^{\star}(q)$ iff $\bm{q}(\theta)=\min\{\bm{\gamma}(\theta),q\}$
 		for all $\theta$ and some allocation $\bm{\gamma}$ such that $\bm{\gamma}(\theta)\in\argmax_{\hat{q}\in Q}u(\hat{q},\theta)-k(\hat{q})$
 		almost everywhere. We have $\bm{\gamma}(\theta)=\max\argmax_{\hat{q}\in Q}u(\hat{q},\theta)-k(\hat{q})$
 		a.e.~by strict quasiconcavity of $u(\cdot,\theta)-k(\cdot)$, so
 		the claim holds.
 	\end{proof}
 
 	\begin{lemma}\label{app:lem:mon:dec} The allocation $\bm{q}$ is
 		monopolist if and only if: $\bm{q}$ solves $\mathcal{P}(q^{M})$
 		for a quality $q^{M}\in\argmax_{q\in Q}V(q)-c(q)$. \end{lemma}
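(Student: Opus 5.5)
The plan is to mirror the proof of Lemma \ref{app:lem:eff:dec}, replacing total welfare by revenue and the set of allocations $\bm{Q}$ by the set of incentive-compatible and individually rational mechanisms $\bm{M}$. Set $X\coloneqq\bm{M}$ and $Y\coloneqq Q$. For $x=(\bm{q},t)\in X$ and $y\in Y$, let $f(x,y)\coloneqq\int_{\Theta}t(\theta)\diff F(\theta)-c(y)$ and $h(x)\coloneqq\sup\bm{q}$. The monopolist value is $v_{1}\coloneqq\sup\{f(x,h(x))\mid x\in X\}$. The decomposed value is $v_{2}\coloneqq\sup_{y\in Y}\bigl(V(y)-c(y)\bigr)$. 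In $\mathcal{P}(y)$ the constraint is $h(x)\le y$ rather than $h(x)=y$. This mismatch is harmless because $c$ is increasing: any $x$ with $h(x)\le y$ satisfies $f(x,h(x))\ge f(x,y)$.

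First I show $v_{1}=v_{2}$. For $v_{1}\le v_{2}$, any feasible $x$ is feasible in $\mathcal{P}(h(x))$, so $f(x,h(x))\le V(h(x))-c(h(x))\le v_{2}$. For $v_{2}\le v_{1}$, any $x$ feasible in $\mathcal{P}(y)$ gives $f(x,y)\le f(x,h(x))\le v_{1}$; taking the supremum over such $x$ yields $V(y)-c(y)\le v_{1}$.

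For the ``only if'' direction, let $(\bm{q},t)$ be a monopolist mechanism and set $q^{M}\coloneqq\sup\bm{q}$.
\begin{itemize}
\item[(i)] $(\bm{q},t)$ solves $\mathcal{P}(q^{M})$. Otherwise some $\tilde x$ feasible for $\mathcal{P}(q^{M})$ has strictly higher revenue, and $f(\tilde x,h(\tilde x))\ge f(\tilde x,q^{M})>f(x,q^{M})=v_{1}$, a contradiction.
\item[(ii)] $q^{M}$ maximizes $V-c$. We have $V(q^{M})-c(q^{M})=f(x,q^{M})=v_{1}=v_{2}$.
\end{itemize}

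For the ``if'' direction, let $x=(\bm{q},t)$ solve $\mathcal{P}(q^{M})$ with $q^{M}\in\argmax(V-c)$. Then $v_{1}\ge f(x,\sup\bm{q})\ge f(x,q^{M})=V(q^{M})-c(q^{M})=v_{2}=v_{1}$, so $x$ is monopolist.

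There is a subtlety in the ``only if'' direction. Step (i) shows that $\bm{q}$ solves $\mathcal{P}(\sup\bm{q})$, but if $V$ is flat one might worry that the correct cap differs from $\sup\bm{q}$. This is not an issue, since step (ii) shows $\sup\bm{q}$ itself is a maximizer. Note also that strict monotonicity of $c$ forces $\sup\bm{q}=q^{M}$ in the ``if'' direction: otherwise $f(x,\sup\bm{q})>f(x,q^{M})=v_{1}$, which is impossible.

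I expect the main obstacle to be attainment. The ``if'' direction presumes that solutions to $\mathcal{P}(q)$ exist and that $\argmax(V-c)$ is nonempty. I would treat these as existence questions settled separately, through compactness of $Q=[0,\overline{q}]$ and the characterization of $\mathcal{P}(q)$ in Proposition \ref{app:prop:mon:alloc}. The lemma itself is stated conditionally on these objects, so only the value-equivalence argument above is needed.
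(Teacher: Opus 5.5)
Your proof is correct and is essentially the paper's argument: the same value decomposition $v_1=v_2$ imported from the proof of Lemma \ref{app:lem:eff:dec}, followed by the two directions. The differences are minor improvements. You work directly on the mechanism set $\bm{M}$, so you do not need the paper's preliminary reduction to the virtual-surplus problem over nondecreasing allocations; and you use monotonicity of $c$ explicitly to reconcile the constraint $\sup\bm{q}\le y$ in $\mathcal{P}(y)$ with the equality $h(x)=y$ in the paper's definition of $v_2$, a step the paper leaves implicit both when asserting $v_2=f(\bm{q},q^{M})$ and when concluding that $\sup\bm{q}$ maximizes $V-c$.
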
 
 	\begin{proof}
 		This proof follows the same steps as in the proof of Lemma \ref{app:lem:eff:dec}.
 		Two preliminary observations follow from standard arguments \citep*{carroll_contract_2023}.
 		First, we have 
 		\begin{align*}
 			V(q)=&\max_{\bm{q}\in\bm{Q}}\int_{\Theta}J(\bm{q}(\theta),\theta)\diff F(\theta)\ \text{subject to:}\ \bm{q}(\theta)\le q\ \text{for all}\ \theta\in\Theta,\\& \bm{q}\ \text{is nondecreasing};
 		\end{align*}
 		and $(\bm{q},t)$ solves $\mathcal{P}(q)$ for some $t$ iff $\bm{q}$
 		solves the above problem. Second, we have that: $\bm{q}$ is monopolist
 		iff $\bm{q}$ solves 
 		\begin{align*}
 			\max_{\bm{q}\in\bm{Q}}\int_{\Theta}J(\bm{q}(\theta),\theta)\diff F(\theta)-c(\sup \bm q)\ \text{subject to:}\ \bm{q}\ \text{is nondecreasing}.
 		\end{align*}
 		
 		We first show that $\bm{q}$ is monopolist only if $\bm{q}$ solves
 		$\mathcal{P}(q^{M})$ for a quality $q^{M}\in\argmax_{q\in Q}V(q)-c(q)$.
 		Let $\bm{q}$ be monopolist. We want to show that: there exists $q^{M}\in Q$
 		such that: (1) $\sup\bm{q}\le q^{M}$, (2) for all nondecreasing $\tilde{\bm{q}}\in\bm{Q}$
 		with $\sup\tilde{\bm{q}}\le q^{M}$, $\int_{\Theta}J(\bm{q}(\theta),\theta)\diff F(\theta)\ge\int_{\Theta}J(\tilde{\bm{q}}(\theta),\theta)\diff F(\theta)$,
 		and (3) for all $q\in Q$, $V(q^{M})-c(q^M)\ge V(q)-c(q)$. We claim that $\sup\bm{q}$
 		satisfies the three properties. First, consider property (2). For
 		every $\tilde{\bm{q}}\in\bm{Q}$, we have 
 		\begin{align*}
 			\int_{\Theta}J(\bm{q}(\theta),\theta)\diff F(\theta)-c(\sup\bm{q})\ge\int_{\Theta}J(\tilde{\bm{q}}(\theta),\theta)\diff F(\theta)-c(\sup\tilde{\bm{q}}).
 		\end{align*}
 		Because $c$ is increasing and $\bm q$ is monopolist, we have that: for every $\tilde{\bm{q}}\in\bm{Q}$,
 		if $c(\tilde{\bm{q}})<c(\bm{q})$, then $\int_{\Theta}J(\bm{q}(\theta),\theta)\diff F(\theta)>\int_{\Theta}J(\tilde{\bm{q}}(\theta),\theta)\diff F(\theta)$.
 		Hence, property (2) holds. Moreover, $V(\sup\bm{q})=\int_{\Theta}J(\bm{q}(\theta),\theta)$,
 		so (3) and (1) hold.
 		
 		We proceed to show that $\bm{q}$ is monopolist if $\bm{q}$ solves
 		$\mathcal{P}(q^{M})$ for a quality $q^{M}\in\argmax_{q\in Q}V(q)-c(q)$.
 		Towards establishing an intermediate claim, let $X\coloneqq\{\bm{q}\in\bm{Q}\mid\bm{q}\ \text{is nondecreasing}\}$,
 		$Y\coloneqq Q$, for all $(x,y)\in X\times Y$, $f(x,y)\coloneqq\int_{\Theta}J(x(\theta),\theta)\diff F(\theta)-c(y)$,
 		and $h(x)\coloneqq\sup x(\Theta)$. We denote by $v_{1}$ the profit
 		of the monopolist allocation, $v_{1}\coloneqq\sup\{f(x,h(x))\mid x\in X\}$,
 		and $v_{2}$ the value of the decomposed problem, $v_{2}\coloneqq\sup\{\sup\{f(x,y)\mid x\in X,\,h(x)=y\}\mid y\in Y\}$.
 		By the same steps as in the proof of Lemma \ref{app:lem:eff:dec},
 		we conclude that $v_{1}=v_{2}$.
 		
 		Let $\bm{q}$ solve $\mathcal{P}(q^{M})$ for a quality $q^{M}\in\argmax_{q\in Q}V(q)-c(q)$.
 		By construction, we have that $f(\bm{q},\sup\bm{q})\ge f(\bm{q},q^{M})$,
 		$v_{2}=f(\bm{q},q^{M})$, and $v_{1}\ge f(\bm{q},\sup\bm{q})$. We
 		conclude that 
 		\begin{align*}
 			v_{2}\ge f(\bm{q},\sup\bm{q})\ge v_{1}.
 		\end{align*}
 		Therefore, $\bm{q}$ is monopolist. 
 	\end{proof}
 
 	 	\begin{lemma}
 		There exists a unique quality $q^M$ that maximizes $q\mapsto V(q)-c(q)$. Moreover, $q^M$ is the unique quality $q$ such that $\int_{[b(q),1]}J_{1}(q,\theta)\diff F(\theta)=c'(q)$. Moreover, it holds that $0<q^{M}<q^{\star}$.
 		\end{lemma}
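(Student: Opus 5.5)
The plan mirrors the proof of the analogous lemma for $U$: establish that $V$ is concave and continuously differentiable with
\begin{align*}
V'(q)=\int_{[b(q),1]}J_{1}(q,\theta)\diff F(\theta),
\end{align*}
then use the boundary conditions on $c'$ and strict convexity of $c$ for existence, uniqueness and interiority, and finally compare $V'$ with $U'$ pointwise.

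\emph{Step 1: the solution of $\mathcal{P}(q)$.} By Lemma \ref{app:lem:general} (the truncation result used in Proposition \ref{app:prop:eff:alloc}), applied to the objective $J$, the unconstrained maximizer of $\int J(\bm{q}(\theta),\theta)\diff F(\theta)$ subject to $\bm{q}\le q$ is $\theta\mapsto\min\{\bm{\beta}(\theta),q\}$. This allocation is nondecreasing because $J$ has increasing differences, so $\bm{\beta}$ is nondecreasing. Hence the monotonicity constraint is slack, and
\begin{align*}
V(q)=\int_{[0,b(q))}J(\bm{\beta}(\theta),\theta)\diff F(\theta)+\int_{[b(q),1]}J(q,\theta)\diff F(\theta).
\end{align*}

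\emph{Step 2: differentiability and concavity of $V$.} Differentiate in $q$ on the three regions $q<\bm{\beta}(0)$, $q\in(\bm{\beta}(0),\bm{\beta}(1))$ and $q>\bm{\beta}(1)$.
\begin{itemize}
\item In the middle region $b$ is $\mathcal{C}^{1}$. The boundary terms coming from $b'(q)$ cancel, because $\bm{\beta}(b(q))=q$ there, so both integrands coincide at $\theta=b(q)$. This gives the formula for $V'$ stated above.
\item The derivatives paste continuously at $\bm{\beta}(0)$ and $\bm{\beta}(1)$, since $b$ is continuous at those points.
\item For concavity, take $q_{2}>q_{1}$ and split $V'(q_{2})-V'(q_{1})$ exactly as in the $U$ case. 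The first piece, on $[b(q_{2}),1]$, is nonpositive because $J$ is concave in $q$. The remaining piece is $-\int_{[b(q_{1}),b(q_{2}))}J_{1}(q_{1},\theta)\diff F(\theta)$. It is nonpositive because $J_{1}(q_{1},\theta)\ge0$ whenever $\bm{\beta}(\theta)\ge q_{1}$, by concavity of $J(\cdot,\theta)$.
\item Right continuity of $V$ at $0$ follows as for $U$.
\end{itemize}

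\emph{Step 3: existence, uniqueness and interiority.} $V-c$ is strictly concave, because $c$ is strictly convex. Existence of a maximizer follows from compactness of $Q$ and continuity of $V-c$. Interiority requires signing $V'-c'$ near $0$ and near $\overline{q}$:
\begin{itemize}
\item Near $0$, $b(q)=0$ once $q<\bm{\beta}(0)$, so $V'(q)=\int_{\Theta}J_{1}(q,\theta)\diff F(\theta)$, and the standing assumption gives $V'-c'>0$ near $0$.
\item Near $\overline{q}$, I expect to bound $V'(q)\le U'(q)$, using Step 4, and combine this with the hypothesis that $\int_{\Theta}J_{1}(q,\theta)\diff F(\theta)-c'(q)<0$ near $\overline{q}$ together with the ordering of these derivatives.
\end{itemize}
This last part is the step I consider the main obstacle. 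It may instead require the observation that $J_{1}\le u_{1}-k'$. The first-order condition then characterizes $q^{M}$ uniquely.

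\emph{Step 4: $q^{M}<q^{\star}$.} It suffices to show $V'(q)<U'(q)$ for all interior $q$; both marginal value curves are decreasing, so $V'$ then crosses $c'$ strictly below $q^{\star}$. Integrate $J_{1}=u_{1}-k'-\frac{1-F}{F'}u_{12}$ over $[b(q),1]$. By integration by parts, since the boundary term at $1$ vanishes,
\begin{align*}
\int_{[b(q),1]}\frac{1-F(\theta)}{F'(\theta)}u_{12}(q,\theta)\diff F(\theta)=\int_{[b(q),1]}(1-F(\theta))\,u_{12}(q,\theta)\diff\theta.
\end{align*}
Substituting, this term yields
\begin{align*}
V'(q)=(1-F(b(q)))\bigl(u_{1}(q,b(q))-k'(q)\bigr).
\end{align*}
This is the generalization of the paper's formula for marginal revenue. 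Now compare with $U'(q)=\int_{[a(q),1]}(u_{1}(q,\theta)-k'(q))\diff F(\theta)$, where $a(q)\le b(q)$ because $J\le u-k$ shifts the maximizer down. Every integrand on $[b(q),1]$ is at least $u_{1}(q,b(q))-k'(q)$ by increasing differences, with strict inequality on a set of positive measure, and the integrand on $[a(q),b(q))$ is nonnegative. These two facts yield the strict inequality $V'(q)<U'(q)$. This is the Markov-inequality argument from the main text, made general.
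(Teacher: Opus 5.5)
Your Steps 1 and 2 are the paper's argument: truncate $\bm{\beta}$ at $q$, differentiate on the three regions, and split $V'(q_2)-V'(q_1)$ to get concavity. Step 4 is correct but takes a different route.

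\begin{itemize}
\item \emph{Your route.} You integrate $\frac{1-F}{F'}u_{12}$ by parts to obtain the closed form $V'(q)=(1-F(b(q)))\bigl(u_1(q,b(q))-k'(q)\bigr)$. You then bound $U'$ from below by a Markov-type argument.
\item \emph{The paper's route.} It writes $U'(q)-V'(q)=\int_{[b(q),1]}\frac{1-F(\theta)}{F'(\theta)}u_{12}(q,\theta)\diff F(\theta)+\int_{[a(q),b(q))}\bigl(u_1(q,\theta)-k'(q)\bigr)\diff F(\theta)$ and gets strictness from $a(q)<b(q)$.
\item \emph{Trade-off.} Yours costs one integration by parts. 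In exchange it delivers the general marginal-revenue formula, which the paper only records in a supplementary remark, and it turns the main-text Markov heuristic into the proof itself.
\end{itemize}

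Two small corrections to Step 4:
\begin{itemize}
\item The reason $a(q)\le b(q)$ is $J_1\le u_1-k'$ (i.e.\ $u_{12}\ge 0$), not the level inequality $J\le u-k$.
\item Strictness, which in your version comes from $u_{12}>0$ on $(b(q),1]$, is needed only at $q=q^\star$. As claimed for all interior $q$ it fails where $a(q)=b(q)=1$, since there both derivatives vanish.
\end{itemize}

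Step 3 has two soft spots. Both are repaired by one observation: if $\theta<b(q)$, then $q>\bm{\beta}(\theta)$, the largest maximizer of the concave $J(\cdot,\theta)$. Hence $J_1(q,\theta)<0$, and so $V'(q)\ge\int_\Theta J_1(q,\theta)\diff F(\theta)$ for every $q$.

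\begin{itemize}
\item \emph{Near $0$.} You rely on $b(q)=0$ for $q<\bm{\beta}(0)$, which is vacuous when $\bm{\beta}(0)=0$. The main-text condition $g'(0)=\infty$ rules this case out, but the general model does not. For example, with $u=g(q)+\theta q$ and $0<g'(0)\le-\varphi(0)$, the standing assumption can hold while $\bm{\beta}(0)=0$. The bound above gives $V'-c'>0$ near $0$ in every case; the paper treats this case separately through $T^{+}(0)$.
\item \emph{Near $\overline{q}$.} The route you sketch cannot work. By the same bound, the hypothesis $\int_\Theta J_1\diff F-c'<0$ near $\overline{q}$ says nothing about the sign of $V'-c'$. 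What closes the argument is your own Step 4 combined with the preceding lemma:
\begin{itemize}
\item that lemma gives $q^\star\in(0,\overline{q})$ with $U'(q^\star)=c'(q^\star)$, so Step 4 yields $V'(q^\star)<c'(q^\star)$;
\item since $V'-c'$ is continuous and strictly decreasing, it then has a unique zero in $(0,q^\star)$, which is the unique maximizer of $V-c$.
\end{itemize}
This is exactly the paper's ``$q^{M}<q^{\star}<\overline{q}$''. Run Step 4 before the interiority argument and your ``main obstacle'' disappears.
\end{itemize}
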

 		\begin{proof}
 			\emph{Claim: $V$ is concave, continuously differentiable, and $V'(q)=\int_{[b(q),1]}J_{1}(q,\theta)\diff F(\theta)$
 				for all $q\in(0,\overline{q})$.} First, $V$ is differentiable at
 			$q<\bm{\beta}(0)$ with $V'(q)=\int_{[0,1]}J_{1}(q,\theta)\diff F(\theta)$
 			if $q>0$; $V$ is differentiable at $q\in(\bm{\beta}(0),\bm{\beta}(1))$
 			with $V'(q)=\int_{[b(q),1]}J_{1}(q,\theta)\diff F(\theta)$ because
 			$b$ is continuously differentiable; finally, $V$ is differentiable
 			at $q>\bm{\beta}(1)$ with $V'(q)=0$ if $q<\overline{q}$. The previous
 			derivatives are continuously pasted at $\bm{\beta}(0)$ and $\bm{\beta}(1)$.
 			For the claim, it suffices to establish that $V'(q_{2})-V'(q_{1})\le0$
 			for all $q_{1},q_{2}\in Q$ with $q_{2}>q_{1}$. It holds that 
 			\begin{align*}
 				V'(q_{2})-V'(q_{1})=\int_{[b(q_{2}),1]}J_{1}(q_{2},\theta)-J_{1}(q_{1},\theta)\diff F(\theta)-\int_{[b(q_{1}),b(q_{2})]}J_{1}(q_{1},\theta)\diff F(\theta).
 			\end{align*}
 			Moreover, $J_{1}(q_{2},\theta)-J_{1}(q_{1},\theta)\le0$ by concavity
 			of $q\mapsto J(q,\theta)$ for all $\theta\in\Theta$, and $J_{1}(q_{1},\theta)\ge0$
 			for all $\theta\ge b(q_{1})$ by definition of $b$, $\bm{\beta}$.
 			Hence, $V$ is concave.
 			
 			\emph{Claim: $V$ is right continuous at $0$.} For all $q>0$, we
 			have: $\int_{\Theta}u(q,\theta)-\frac{1-F(\theta)}{F'(\theta)}u_{2}(0,\theta)-k(0)\diff F(\theta)\ge V(q)\ge\int_{\Theta}J(0,\theta)\diff F(\theta)$.
 			Hence, we have $V(q)\to\int_{\Theta}J(0,\theta)\diff F(\theta)$ as
 			$q\to0$.
 			
 			\emph{Claim: $\bm{\beta}(\theta)\le\bm{\alpha}(\theta)$ for all $\theta\in\Theta$,
 				and, additionally, $\bm{\beta}(\theta)<\bm{\alpha}(\theta)$ if: $\bm{\alpha}(\theta)>0$,
 				$\bm{\beta}(\theta)<\overline{q}$, and $\theta<1$.} It suffices
 			to observe that $u_{1}(q,\theta)-k'(q)-J_{1}(q,\theta)=\frac{1-F(\theta)}{F'(\theta)}u_{12}(q,\theta)$
 			for all $(q,\theta)\in Q\times\Theta$.
 			
 			\emph{Claim}: $q^{M}<q^{\star}<\overline{q}$. First, we observe that,
 			if $q\in(0,\overline{q})$, then: $a(q)<b(q)$, and, moreover, 
 			\begin{align*}
 				U'(q)-V'(q)=\int_{[b(q),1]}\frac{1-F(\theta)}{F'(\theta)}u_{12}(q,\theta)\diff F(\theta)+\int_{[a(q),b(q))}u_1(q,\theta)-k'(q)\diff F(\theta).
 			\end{align*}
 			It follows that $U'(q)-V'(q)\ge\int_{[a(q),b(q)]}u_1(q,\theta)-k'(q)\diff F(\theta)>0$
 			by the definition of $a$. Hence, we have that $q^{M}<q^{\star}<\overline{q}$.
 			
 			To complete the proof, it remains to show that $q^{M}>0$. We let
 			the right-continuous inverse of $\bm{\beta}$ be $T^{+}\colon q\mapsto\inf\{\theta\in\Theta \mid \bm{\beta}(\theta)>q\}$.
 			Observe that: $\lim_{q\to0}V'(q)=\int_{[T^{+}(0),1]}J_{1}(0,\theta)\diff F(\theta)$.
 			If $1>T^{+}(0)>0$, by increasing differences we have
 			\begin{align*}
 				\frac{1}{1-F(T^{+}(0))}\lim_{q\to0}V'(q)>\int_{[0,1]}J_{1}(0,\theta)\diff F(\theta)
 			\end{align*}
 			We note that $1>T^{+}(0)$, because $J_{1}(0,1)>0$ and $\bm{\beta}$
 			is continuous. If, instead, $T^{+}(0)=0$, then $\lim_{q\to0}V'(q)=\int_{[0,1]}J_{1}(0,\theta)\diff F(\theta)$.
 			By the properties of $c'$, it follows that $q^{M}>0$.\emph{Claim: $V$ is concave, continuously differentiable, and $V'(q)=\int_{[b(q),1]}J_{1}(q,\theta)\diff F(\theta)$
 				for all $q\in(0,\overline{q})$.} First, $V$ is differentiable at
 			$q<\bm{\beta}(0)$ with $V'(q)=\int_{[0,1]}J_{1}(q,\theta)\diff F(\theta)$
 			if $q>0$; $U$ is differentiable at $q\in(\bm{\beta}(0),\bm{\beta}(1))$
 			with $V'(q)=\int_{[b(q),1]}J_{1}(q,\theta)\diff F(\theta)$ because
 			$b$ is continuously differentiable; finally, $V$ is differentiable
 			at $q>\bm{\beta}(1)$ with $V'(q)=0$ if $q<\overline{q}$. The previous
 			derivatives are continuously pasted at $\bm{\beta}(0)$ and $\bm{\beta}(1)$.
 			For the claim, it suffices to establish that $V'(q_{2})-V'(q_{1})\le0$
 			for all $q_{1},q_{2}\in Q$ with $q_{2}>q_{1}$. It holds that 
 			\begin{align*}
 				V'(q_{2})-V'(q_{1})=\int_{[b(q_{2}),1]}J_{1}(q_{2},\theta)-J_{1}(q_{1},\theta)\diff F(\theta)-\int_{[b(q_{1}),b(q_{2})]}J_{1}(q_{1},\theta)\diff F(\theta).
 			\end{align*}
 			Moreover, $J_{1}(q_{2},\theta)-J_{1}(q_{1},\theta)\le0$ by concavity
 			of $q\mapsto J(q,\theta)$ for all $\theta\in\Theta$, and $J_{1}(q_{1},\theta)\ge0$
 			for all $\theta\ge b(q_{1})$ by definition of $b$, $\bm{\beta}$.
 			Hence, $V$ is concave.
 			
 			\emph{Claim: $V$ is right continuous at $0$.} For all $q>0$, we
 			have: $\int_{\Theta}u(q,\theta)-\frac{1-F(\theta)}{F'(\theta)}u_{2}(0,\theta)-k(0)\diff F(\theta)\ge V(q)\ge\int_{\Theta}J(0,\theta)\diff F(\theta)$.
 			Hence, we have $V(q)\to\int_{\Theta}J(0,\theta)\diff F(\theta)$ as
 			$q\to0$.
 			
 			\emph{Claim: $\bm{\beta}(\theta)\le\bm{\alpha}(\theta)$ for all $\theta\in\Theta$,
 				and, additionally, $\bm{\beta}(\theta)<\bm{\alpha}(\theta)$ if: $\bm{\alpha}(\theta)>0$,
 				$\bm{\beta}(\theta)<\overline{q}$, and $\theta<1$.} It suffices
 			to observe that $u_{1}(q,\theta)-k'(q)-J_{1}(q,\theta)=\frac{1-F(\theta)}{F'(\theta)}u_{12}(q,\theta)$
 			for all $(q,\theta)\in Q\times\Theta$.
 			
 			\emph{Claim}: $q^{M}<q^{\star}<\overline{q}$. First, we observe that,
 			if $q\in(0,\overline{q})$, then: $a(q)<b(q)$, and, moreover, 
 			\begin{align*}
 				U'(q)-V'(q)=\int_{[b(q),1]}\frac{1-F(\theta)}{F'(\theta)}u_{12}(q,\theta)\diff F(\theta)+\int_{[a(q),b(q))}u_1(q,\theta)-k'(q)\diff F(\theta).
 			\end{align*}
 			It follows that $U'(q)-V'(q)\ge\int_{[a(q),b(q)]}u_1(q,\theta)-k'(q)\diff F(\theta)>0$
 			by the definition of $a$. Hence, we have that $q^{M}<q^{\star}<\overline{q}$.
 			
 			To complete the proof, it remains to show that $q^{M}>0$. We let
 			the right-continuous inverse of $\bm{\beta}$ be $T^{+}\colon q\mapsto\inf\{\theta\in\Theta\mid\bm{\beta}(\theta)>q\}$.
 			Observe that: $\lim_{q\to0}V'(q)=\int_{[T^{+}(0),1]}J_{1}(0,\theta)\diff F(\theta)$.
 			If $1>T^{+}(0)>0$, by increasing differences we have
 			\begin{align*}
 				\frac{1}{1-F(T^{+}(0))}\lim_{q\to0}V'(q)>\int_{[0,1]}J_{1}(0,\theta)\diff F(\theta)
 			\end{align*}
 			We note that $1>T^{+}(0)$, because $J_{1}(0,1)>0$ and $\bm{\beta}$
 			is continuous. If, instead, $T^{+}(0)=0$, then $\lim_{q\to0}V'(q)=\int_{[0,1]}J_{1}(0,\theta)\diff F(\theta)$.
 			By the properties of $c'$, it follows that $q^{M}>0$.
 		\end{proof}
 	
 	\begin{proposition}\label{app:prop:mon:alloc}
 		The allocation $\bm{q}$ is monopolist if and only if: there exists
 		a nondecreasing allocation $\bm{\gamma}$ such that $\bm{\gamma}(\theta)=\bm{\beta}(\theta)$
 		almost everywhere and $\bm{q}(\theta)=\min\{\bm{\gamma}(\theta),q^{M}\}$
 		for all $\theta$. 
 	 \end{proposition}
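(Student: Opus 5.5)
By Lemma \ref{app:lem:mon:dec} and the preceding lemma, $q^{M}$ is the unique maximizer of $V-c$, and an allocation is monopolist if and only if it solves $\mathcal{P}(q^{M})$. It therefore suffices to characterize the solutions of the $q$-constrained problem for a fixed $q\in Q$. In its reduced form (first observation in the proof of Lemma \ref{app:lem:mon:dec}) this problem is
\[
\max_{\bm q}\int_\Theta J(\bm q(\theta),\theta)\diff F(\theta)
\]
subject to $\bm q$ nondecreasing and $\bm q\le q$. The claim to prove is that its solutions are exactly the allocations $\theta\mapsto\min\{\bm\gamma(\theta),q\}$ with $\bm\gamma$ nondecreasing and $\bm\gamma=\bm\beta$ almost everywhere.

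\textbf{Step 1: drop monotonicity and solve pointwise.} I first consider the relaxed problem without the monotonicity constraint, i.e., $\max_{\bm q}\int J(\bm q(\theta),\theta)\diff F$ subject to $\bm q\le q$. I would invoke Lemma \ref{app:lem:general}, which was already used for $\mathcal{P}^{\star}(q)$ with $u-k$ in place of $J$. Its hypotheses (concavity in $q$ and strict quasiconcavity off a countable set) are exactly the ones assumed for $J$. It yields that $\bm q$ solves the relaxed problem if and only if $\bm q(\theta)=\min\{\bm\gamma(\theta),q\}$ with $\bm\gamma(\theta)\in\argmax_{\hat q}J(\hat q,\theta)$ almost everywhere.

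Strict quasiconcavity off the countable set $S$ makes the argmax a singleton for $\theta\notin S$. Since $F$ has a density, $S$ is $F$-null, so $\bm\gamma=\bm\beta$ almost everywhere. The pointwise logic is straightforward: concavity of $J(\cdot,\theta)$ implies that the best feasible point in $[0,q]$ is the projection of the unconstrained maximizer. Uniqueness of the maximizer of the truncated problem (for $\theta\notin S$) gives the converse direction up to null sets.

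\textbf{Step 2: restore monotonicity.} Because $J$ has increasing differences, $\bm\beta$ is nondecreasing by Topkis's theorem, as the largest selection of the argmax. Hence $\min\{\bm\beta,q\}$ is nondecreasing and therefore feasible for the monotone problem, so the monotone and relaxed problems have the same value.

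It follows that a nondecreasing $\bm q$ solves $\mathcal{P}(q)$ if and only if it solves the relaxed problem. By Step 1, this holds if and only if $\bm q=\min\{\bm\gamma,q\}$ with $\bm\gamma=\bm\beta$ almost everywhere.

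\textbf{Step 3: make the representing $\bm\gamma$ monotone.} It remains to show that $\bm\gamma$ can be chosen nondecreasing. On the set where $\bm q<q$, set $\bm\gamma=\bm q$; on the set where $\bm q=q$, set $\bm\gamma=\max\{\bm\beta,q\}$. This $\bm\gamma$ is nondecreasing, because $\bm q$ and $\bm\beta$ are both monotone and the set $\{\bm q=q\}$ is an upper interval. It agrees with $\bm\beta$ almost everywhere, and it satisfies $\min\{\bm\gamma,q\}=\bm q$ everywhere.

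Conversely, if $\bm\gamma$ is nondecreasing, then $\min\{\bm\gamma,q\}$ is nondecreasing and, by Step 1, optimal. Setting $q=q^{M}$ completes the proof.

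\textbf{Main obstacle.} I expect the main difficulty to be the measure-zero bookkeeping. Optimality pins down $\bm q$ only almost everywhere, yet the statement is made "for all $\theta$." The monotonicity of $\bm q$ is what allows the almost-everywhere identification to be upgraded into the exact representation $\bm q=\min\{\bm\gamma,q^{M}\}$ with a monotone $\bm\gamma$, as constructed in Step 3.
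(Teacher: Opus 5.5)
Your proposal is correct and follows the paper's proof: you reduce to $\mathcal{P}(q^{M})$ via Lemma \ref{app:lem:mon:dec}, solve the problem without the monotonicity constraint using Lemma \ref{app:lem:general} together with strict quasiconcavity, and note that $\theta\mapsto\min\{\bm{\beta}(\theta),q\}$ is nondecreasing, so among monotone allocations the relaxed and constrained problems have the same solutions. Your Step 3 explicitly builds a nondecreasing $\bm{\gamma}$ that represents a monotone optimal $\bm{q}$, which makes precise a step the paper's contradiction argument for necessity leaves implicit.
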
 
 	\begin{proof}
 		First, we solve $\mathcal{P}(q)$
 		for all $q\in Q$. Then, the result follows from Lemma
 		\ref{app:lem:mon:dec}.
 		
 		\emph{Claim: for fixed $q\in Q$, $\bm{q}$ solves $\mathcal{P}(q)$
 			iff $\bm{q}(\theta)=\min\{\bm{\gamma}(\theta),q\}$ for all $\theta$
 			and some nondecreasing $\bm{\gamma}\in\bm{Q}$ with $\bm{\gamma}(\theta)=\bm{\beta}(\theta)$
 			almost everywhere.} For sufficiency, let $\bm{q}\in\bm{Q}$ be such
 		that $\bm{q}(\theta)=\min\{\bm{\gamma}(\theta),q\}$ for all $\theta\in\Theta$
 		and for some nondecreasing allocation $\bm{\gamma}$ with $\bm{\gamma}(\theta)=\bm{\beta}(\theta)$
 		almost everywhere. Then, by Lemma \ref{app:lem:general}, $\bm{q}$ solves the problem without the
 		monotonicity constraint; i.e., $\bm q$ solves
 		  \begin{align} \label{app:eq:monunconstrained}
 		 	&\max_{\bm{q}\in\bm{Q}}\int_{\Theta}J(\bm{q}(\theta),\theta)\diff F(\theta)\ \text{subject to:}\ \bm{q}(\theta)\le q\ \text{for all}\ \theta\in\Theta,
 		 \end{align}
 		(see the preliminary observation in the proof of Lemma \ref{app:lem:mon:dec}.) Additionally, $\bm q$ is nondecreasing, so $\bm{q}$ solves $\mathcal{P}(q)$.

 		For necessity, let $\bm{q}$ solve $\mathcal{P}(q)$ and be such
 		that: there does not exist a nondecreasing allocation $\bm{\gamma}$
 		with $\bm{q}(\theta)=\min\{\bm{\gamma}(\theta),q\}$ for all $\theta$
 		and $\bm{\gamma}(\theta)=\bm{\beta}(\theta)$ almost everywhere. By
 		Lemma \ref{app:lem:general} and the strict-quasiconcavity property
 		of $J(\cdot,\theta)$, $\bm{q}$ does not solve the problem without
 		the monotonicity constraint; i.e., $\bm q$ does not solve the maximization in (\ref{app:eq:monunconstrained}). We observe that the function $\theta\mapsto\min\{\bm{\beta}(\theta),q\}$
 		is an allocation that solves the problem without the monotonicity
 		constraint---in (\ref{app:eq:monunconstrained})---by Lemma \ref{app:lem:general}, and satisfies the monotonicity
 		constraint, so $\bm{q}$ does not solve $\mathcal{P}(q)$. The claim
 		follows from the preliminary observation in the proof of Lemma \ref{app:lem:mon:dec}.
 	\end{proof}
 	
 	\subsection{Proofs for Section \ref{sec:interpretation}}
 	\begin{proof}[Proof of Proposition \ref{prop:mon:comparative}]
 		For this proof, we assume that $u$ is as in the main text. Define,
 		for all $q\in Q$, $G(q,\kappa_{c},\kappa_{g}):=(1-F(b_{\kappa_{g}}(q)))(b_{\kappa_{g}}(q)+\kappa_{g}g'(q))-\kappa_{c}c'(q)$.
 		By Proposition \ref{app:prop:mon:alloc}, $q_{\kappa}^{M}$ is interior,
 		and is the unique quality $q$ such that $G(q,\kappa_{c},\kappa_{g})=0$.
 		
 		\emph{Claim 1: The cap $q_{\kappa}^{M}$ is decreasing in $\kappa_{c}$.} By strict concavity of $q\mapsto V(q)-\kappa_{c}c(q)$, we have that $G_{2}(q,\kappa_{c},\kappa_{g})=-c'(q)$ and $G_{1}(q,\kappa_{c},\kappa_{g})<0$.
 	 By the implicit
 		function theorem: $\frac{\partial}{\partial\kappa_{c}}q_{\kappa}^{M}=-G_{2}(q_{\kappa}^{M},\kappa_{c},\kappa_{g})/G_{1}(q_{\kappa}^{M},\kappa_{c},\kappa_{g})$.
 		Hence, $q_{\kappa}^{M}$ is decreasing in $\kappa_{c}$.
 		
 		\emph{Claim 2: The cap $q_{\kappa}^{M}$ is nondecreasing in $\kappa_{g}$.}
 		Fix $q$ with $b_{\kappa_{g}}(q)\in(0,1)$. It holds that 
 		\begin{align*}
 			G_{3}(q,\kappa_{c},\kappa_{g}) & =(1-F(b_{\kappa_{g}}(q)))(g'(q)+\frac{\partial}{\partial\kappa_{g}}b_{\kappa_{g}}(q))\\
 			&-F'(b_{\kappa_{g}}(q))\Big (\frac{\partial}{\partial\kappa_{g}}b_{\kappa_{g}}(q)\Big ) (b_{\kappa_{g}} (q)+\kappa_{g}g'(q))\\
 			& =(1-F(b_{\kappa_{g}}(q)))g'(q),
 		\end{align*}
 		using $\kappa_{g}g'(q)+\varphi(b_{\kappa_{g}}(q))=0$ for the second
 		equality, and $G_{1}(q,\kappa_{c},\kappa_{g})<0$ by strict concavity
 		of $q\mapsto V(q)-\kappa_{c}c(q)$. By the implicit function theorem:
 		$\frac{\partial}{\partial\kappa_{g}}q_{\kappa}^{M}=-G_{3}(q_{\kappa}^{M},\kappa_{c},\kappa_{g})/G_{1}(q_{\kappa}^{M},\kappa_{c},\kappa_{g})$.
 		Hence, $q_{\kappa}^{M}$ is nondecreasing in $\kappa_{g}$.
 		
 		\emph{Claim 3: $b_{\kappa_{g}}$ is nonincreasing in $\kappa_{g}$
 			pointwise.} Fix $q$ with $b_{\kappa_{g}}(q)\in(0,1)$. From $\kappa_{g}g'(q)+\varphi(b_{\kappa_{g}}(q))=0$,
 		by the implicit function theorem: $\frac{\partial}{\partial\kappa_{g}}b_{\kappa_{g}}(q)=-g'(q)/\varphi'(b_{\kappa_{g}}(q))$.
 		Hence, for every $q\in Q$, $b_{\kappa_{g}}(q)$ is nonincreasing
 		in $\kappa_{g}$.
 		
 		\emph{Claim 4: $b_{\kappa_{g}}(q_{\kappa}^{M})$ is nonincreasing
 			in $\kappa_{g}$, decreasing if interior.} First, we observe that $G_1(q, \kappa_g, \kappa_c) = (1-F(b_{\kappa_g}(q_\kappa^M)))\kappa_gg''(q^M_\kappa)-c''(q^M_\kappa)$, which follows from applying the definition of $b_{\kappa_g}$, or, alternatively, by using the observations in Remark \ref{rem:tariff} and the envelope theorem. Second, we differentiate to obtain $\frac{\partial }{\partial \kappa_g} b_{\kappa_g}(q^M_\kappa) = \frac{g'(q^M_\kappa)}{\varphi '(b_{\kappa_g}(q	^M_\kappa))} \frac{c''(q^M_\kappa)}{G_1(q^M_\kappa, \kappa_g, \kappa_c)}<0$, whenver $b_{\kappa_g}(q^M_\kappa)>0$.
 		
 		\emph{Claim 5: there exists $\overline{\kappa}_{g}\ge0$ such that
 			$b_{\kappa_{g}}(q_{\kappa}^{M})=0$ for all $\kappa_{g}>\overline{\kappa}_{g}$.
 		}By the previous claims, $b_{\kappa_{g}}(q_{\kappa}^{M})\to0$ as
 		$\kappa_{g}\to\infty$. If the claim does not hold, then $\kappa_{g}g'(q_{\kappa}^{M})\to-\varphi(0)$,
 		which is impossible by concavity of $g$ and monotonicity of $q_{\kappa}^{M}$
 		in $\kappa_{g}$. 
 	\end{proof}
 	\begin{proof}[Proof of Proposition \ref{prop:singleagentallocation}]
 		For this proof, we assume that $u$ is as in the main text. We show
 		that, if $0\le\theta<b(q^{M})$, then $\bm{q}^{MRE}(\theta)>\bm{q}^{M}(\theta)$.
 		By hypothesis, 
 		\begin{align*}
 			c'(\bm{\beta}(\theta)) & <c'(q^{M})=(1-F(b(q^{M})))u_{1}(q^{M},b(q^{M})),
 		\end{align*}
 		in which the strict inequality follows from Proposition \ref{app:prop:mon:alloc}.
 		By strict concavity of $V$, we have 
 		\begin{align*}
 			(1-F(b(q^{M})))u_{1}(q^{M},b(q^{M}))<(1-F(\theta))u_{1}(\bm{\beta}(\theta),\theta)\le u_{1}(\bm{\beta}(\theta),\theta),
 		\end{align*}
 		We conclude that $u_{1}(\bm{\beta}(\theta),\theta)-c'(\bm{\beta}(\theta))>0$.
 		Therefore, $\bm{\beta}(\theta)<\bm{q}^{MRE}(\theta)$.
 		
 		The rest of the proof follows from the argument sketched in the text
 		and is omitted. 
 	\end{proof}
 	
 	\subsection{Proofs for Section \ref{sec:extensions}}
 	
 	Proposition \ref{prop:noscreening} and \ref{prop:competition} are
 	implied by Proposition \ref{app:prop:mon:noscreen} and \ref{app:prop:competition},
 	respectively. 
 	
 	\subsubsection{Non-discriminating monopolist}
 	
 	The zero of the virtual surplus is $b_{N}\colon q\mapsto\inf\{\theta\in\Theta\mid J(q,\theta)\ge0\}$.
 	The allocation $\bm{q}$ is \emph{no screening} if $\bm{q}$ solves
 	\begin{align*}
 		\max_{(\bm{q},\,t)\in\bm{M}}\int_{\Theta}t(\theta)\diff F(\theta)-c(\sup\bm{q})\ \text{subject to:}\ \bm{q}(\Theta)\subseteq\{0,\hat{q}\}\ \text{for some}\ \hat{q}\in Q.
 	\end{align*}
 	We let $V_{N}(q)$ be the value of the problem $\mathcal{P}_{N}(q)$,
 	for quality $q\in Q$, 
 	\begin{align*}
 		& V_{N}(q)\coloneqq\max_{(\bm{q},\,t)\in\bm{M}}\int_{\Theta}t(\theta)\diff F(\theta)\ \text{subject to:}\ \bm{q}(\Theta)\subseteq\{0,\hat{q}\}\ \text{for some}\ \hat{q}\in Q,\\
 		& \bm{q}(\theta)\le q\ \text{for all}\ \theta\in\Theta.
 	\end{align*}

 	\begin{proposition}\label{app:prop:mon:noscreen} Assume that $J(0,\theta)=0$
 		for all $\theta$, and $J(q,\cdot)$ is increasing for all $q>0$;
 		let $q_{N}^{M}$ be the unique quality $q$ such that $\int_{[b_{N}(q),1]}J_{1}(q,\theta)\diff F(\theta)=c'(q)$.
 		The allocation $\bm{q}_{N}^{M}$ is no screening if and only if: $\bm{q}_{N}^{M}(\theta)=[\theta\ge b_{N}(q_{N}^{M})]q_{N}^{M}$
 		for all $\theta\in\Theta\setminus\{b_{N}(q_{N}^{M})\}$, and $\bm{q}_{N}^{M}(b_{N}(q_{N}^{M}))\in\{0,q_{N}^{M}\}$.
 		Moreover, it holds that: 
 		\begin{enumerate}
 			\item $0<q_{N}^{M}\le q^{M}$; 
 			\item  if $b(q^{M})>b_{N}(q^{M})$, then $q_{N}^{M}<q^{M}$.
 		\end{enumerate}
 	\end{proposition}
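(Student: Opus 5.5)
I will mirror the monopolist analysis: first decompose the no-screening problem into a production step and a distribution step, then solve the distribution step in closed form, and finally compare marginal revenues.

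\emph{Decomposition.} The argument of Lemma \ref{app:lem:mon:dec} carries over verbatim once $X$ is replaced by the set of nondecreasing allocations with image in $\{0,\hat q\}$ for some $\hat q$. The feasible set is still the one in which $h(x)=\sup x(\Theta)$, so the same identity $v_1=v_2$ holds. Hence $\bm q$ is no screening iff it solves $\mathcal{P}_N(q)$ at some $q\in\argmax V_N-c$.

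\emph{Distribution step.} Fix $q$. A feasible allocation is a threshold rule $\theta\mapsto[\theta\ge s]\hat q$ with $\hat q\le q$; monotonicity forces this form up to the single point $s$. Its revenue is $\int_{[s,1]}J(\hat q,\theta)\diff F(\theta)$. Since $J(\hat q,\cdot)$ is increasing, the integrand is negative below $b_N(\hat q)$ and positive above, so the optimal threshold is $s=b_N(\hat q)$. This gives $V_N(q)=\max_{\hat q\le q}\int_{[b_N(\hat q),1]}J(\hat q,\theta)\diff F(\theta)$.

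By the envelope theorem, with the boundary term vanishing because $J(\hat q,b_N(\hat q))=0$, the unconstrained revenue $R(\hat q):=\int_{[b_N(\hat q),1]}J(\hat q,\theta)\diff F$ has derivative $R'(\hat q)=\int_{[b_N(\hat q),1]}J_1(\hat q,\theta)\diff F$.

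To write $V_N-c$ as $\max_{\hat q}R(\hat q)-c(\hat q)$, I use that $c$ is increasing: producing above the quality sold is wasteful, so the cap binds at the optimum. The optimal cap then solves $R'(q)=c'(q)$.

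\emph{Main obstacle: uniqueness.} Uniqueness of the root needs $R'-c'$ to single-cross, and $R$ need not be concave. I would first try to compare $R'(q)$ with $V'(q)$ directly and exploit strict convexity of $c$ together with the boundary conditions on $c'$. If that fails, I would fall back on the setting of the main text. There $J=g(q)+\varphi(\theta)q$, and $R(q)=\max_\theta(1-F(\theta))(g(q)+\theta q)$ is a maximum of concave functions, which is not concave in general. I would then argue $R'(q)\le V'(q)$ (below) and use the Inada conditions to locate every crossing.

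Given the maximizer, the characterization $\bm q^M_N(\theta)=[\theta\ge b_N(q_N^M)]q_N^M$ off the cutoff follows. At the cutoff, type $b_N(q^M_N)$ is measure zero and either assignment is allowed. Positivity of $q_N^M$ follows from $R'(0^+)>0=c'(0)$ under the stated limits: at small $q$, types near $1$ have $J_1>0$.

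\emph{Comparison with the screening cap.} The cutoff $b(q)$ maximizes $\theta\mapsto\int_{[\theta,1]}J_1(q,s)\diff F(s)$, since $J_1(q,\cdot)$ is increasing and $b(q)$ is the point where it changes sign (Remark \ref{rem:tariff}). Therefore
$$V'(q)=\max_\theta\int_{[\theta,1]}J_1(q,s)\diff F(s)\ge R'(q),$$
and the inequality is strict when $b_N(q)\ne b(q)$ and $F$ has positive density between them. By strict convexity of $c$ and concavity of $V$, $V'-c'$ is strictly decreasing. At $q=q_N^M$ we have $V'(q_N^M)\ge c'(q_N^M)$, so $q_N^M\le q^M$. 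If $b(q^M)>b_N(q^M)$, then equality $q_N^M=q^M$ would force $V'(q^M)>R'(q^M)=c'(q^M)$, contradicting the first-order condition for $q^M$. Hence $q_N^M<q^M$.
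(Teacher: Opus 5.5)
The step you flag as the main obstacle is a genuine gap, and neither of your fallbacks closes it. The bound $R'\le V'$, together with the endpoint behaviour of $c'$, only places every root of $R'=c'$ in $(0,q^M]$. It does not give single crossing, so you cannot yet identify the maximizer of $R-c$ with ``the'' root.

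Your doubt about concavity is well founded, and the paper's own route does not close the gap either. The paper derives uniqueness from concavity of $V_N$, justified by the claim that $J_1(q_1,\theta)\ge0$ for all $\theta\ge b_N(q_1)$. But $b_N(q)\le b(q)$ and $J_1(q,\cdot)<0$ on $[b_N(q),b(q))$; in the main-text specification, $J_1(q,b_N(q))=g'(q)-g(q)/q<0$ whenever $b_N(q)\in(0,1)$. Differentiating the envelope formula at an interior cutoff gives
\[
V_N''(q)=\bigl(1-F(b_N(q))\bigr)g''(q)+\frac{F'(b_N(q))\bigl(g(q)/q-g'(q)\bigr)^{2}}{q\,\varphi'(b_N(q))},
\]
whose second term is strictly positive and can dominate. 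For instance, with uniform $F$ and $g$ a strictly concave smoothing of $\min\{q,1\}$, $V_N(q)$ is close to $(q+1)^2/(4q)$ for $q>1$, which is convex. On any interval where $V_N'$ increases, one can choose a strictly convex $c$ for which $V_N'=c'$ has several roots. So uniqueness is not implied by the stated assumptions.

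The natural repair is to read ``the unique quality'' as a hypothesis, or to add an assumption that makes $V_N$ concave. Under it, your argument closes as follows.
\begin{itemize}
\item $R'-c'$ is continuous.
\item It is positive near $0$. Since $J(0,\theta)=0$, concavity gives $J_1(q,\theta)\le J(q,\theta)/q<0$ for $\theta<b_N(q)$, hence $R'(q)\ge\int_\Theta J_1(q,\theta)\diff F(\theta)$, and the boundary assumption on $c'$ applies. This is a cleaner justification than ``types near $1$ have $J_1>0$''.
\item It is negative on $(q^M,\overline q)$, because $R'\le V'<c'$ there.
\end{itemize}
A unique root is therefore the unique maximizer of $R-c$. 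Your reduction $\max_q V_N(q)-c(q)=\max_{\hat q}R(\hat q)-c(\hat q)$, together with strict optimality of the threshold $b_N(\hat q)$, then delivers the characterization.

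Otherwise your proposal follows the paper: the same decomposition, threshold solution, envelope formula, and comparison $V'\ge V_N'$ via $b(q)$ maximizing $\theta\mapsto\int_{[\theta,1]}J_1(q,s)\diff F(s)$. It is more careful in two places. First, optimizing over $\hat q\le q$ avoids the paper's claim that $\mathcal{P}_N(q)$ is solved at the cap $q$ for every $q$. Second, your comparison at a root of $R'=c'$ uses only concavity of $V$, so parts 1 and 2 hold at every root, unique or not.
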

 	\begin{proof}
 		The proof has three main steps. First, we solve $\mathcal{P}_{N}(q)$
 		for all $q\in Q$; second, we show that $V_{N}$ is continuously differentiable
 		and concave; third, we show that $q_{N}^{M}\le q^{M}$, with strict
 		inequality if $q^{M}>0$. The following preliminary observations hold
 		by known arguments: 
 		\begin{align*}
 			& V_{N}(q)=\max_{\bm{q}\in\bm{Q}}\int_{\Theta}J(\bm{q}(\theta),\theta)\diff F(\theta)\ \text{subject to:}\ \bm{q}(\theta)\le q\ \text{for all}\ \theta\in\Theta,\\
 			& \bm{q}\ \text{is nondecreasing},\ \bm{q}(\Theta)\subseteq\{0,\hat{q}\}\ \text{for some}\ \hat{q}\in Q;
 		\end{align*}
 		$(\bm{q},t)$ solves $\mathcal{P}_{N}(q)$ for some $t$ iff $\bm{q}$
 		solves the above problem; the allocation $\bm{q}$ is no screening
 		iff: $\bm{q}$ solves $\mathcal{P}_{N}(q_{N}^{M})$ for a quality
 		$q_{N}^{M}\in\argmax_{q\in Q}V_{N}(q)-c(q)$.
 		
 		\emph{Claim: $b(q)\ge b_{N}(q)$ for all $q\in Q$.} It suffices to
 		show that $\bm{\beta}(\theta)\ge q$ implies $J(q,\theta)\ge0$ for
 		fixed $(\theta,q)\in\Theta\times Q$. By concavity of $J(\cdot,\theta)$,
 		we have $J(q,\theta)=\int_{[0,q]}J_{1}(\tilde{q},\theta)\diff\tilde{q}$,
 		and $\bm{\beta}(\theta)\ge q$ holds iff $J_{1}(q,\theta)\ge0$ holds.
 		Hence, $\bm{\beta}(\theta)\ge q$ implies $J(q,\theta)\ge0$.
 		
 		\emph{Claim: for fixed $q\in Q$, $\bm{q}$ solves $\mathcal{P}_{N}(q)$
 			iff $\bm{q}(\theta)=[\theta\ge b_{N}(q)]q$ for all $\theta\in\Theta$.}
 		We first argue $\bm{q}$ solves $\mathcal{P}_{N}(q)$ only if $\bm{q}(\theta)=[\theta\ge b_{N}(q)]q$
 		almost everywhere. Fix $q\in(0,\overline{q}]$, a solution $\bm{q}_{N}^{M}$
 		to $\mathcal{P}_{N}(q)$, and suppose that there exists $(\theta',\theta'')\subseteq\Theta$
 		such that: $\bm{q}'(\theta)\ne\bm{q}_{N}^{M}(\theta)$ a.e.~on $(\theta',\theta'')$,
 		for $\bm{q}'\colon\theta\mapsto[\theta\ge b_{N}(q)]q$. Define $S\coloneqq\{\theta\in(\theta',\theta'')\mid \bm{q}'(\theta)=q\}$
 		and $\Delta\coloneqq\int_{(\theta',\theta'')}J(\bm{q}'(\theta),\theta)-J(\bm{q}_{N}^{M}(\theta),\theta)\diff F(\theta)$,
 		it holds that 
 		\begin{align*}
 			\Delta=\int_{S}\int_{[\bm{q}_{N}^{M}(\theta),q]}J_{1}(\tilde{q},\theta)\diff\tilde{q}\diff F(\theta)-\int_{(\theta',\theta'')\setminus S}J(\bm{q}_{N}^{M}(\theta),\theta)\diff F(\theta).
 		\end{align*}
 		For $\hat{q}\in[\bm{q}_{N}^{M}(\theta),q]$, it holds that 
 		\begin{align*}
 			J_{1}(\hat{q},\theta)\ge J_{1}(q,\theta)\ge J_{1}(q,b_N(q))\ge0
 		\end{align*}
 		a.e.~on $S$, in which the inequalities follow from: concavity of
 		$J(\cdot,\theta)$, increasing differences,
 		and the definition of $S,\:b_{N}(q)$, from left to right. Moreover,
 		it holds that $J(\bm{q}_{N}^{M}(\theta),\theta)\le0$ a.e.~on $(\theta',\theta'')\setminus S$.
 		Moreover, using increasing differences and the fact that $J(\hat{q},\cdot)$
 		is increasing for $\hat{q}>0$, we have that: either $\int_{S}\int_{[\bm{q}_{N}^{M}(\theta),q]}J_{1}(\tilde{q},\theta)\diff\tilde{q}\diff F(\theta)>0$,
 		or $\int_{(\theta',\theta'')\setminus S}J(\bm{q}_{N}^{M}(\theta),\theta)\diff F(\theta)<0$,
 		or both. It follows that $\Delta>0$. By incentive compatibility,
 		the solution to $\mathcal{P}_{N}(q)$ is nondecreasing, by the binary-image
 		constraint, the solution does not differ from $\theta\mapsto[\theta\ge b_{N}^{M}(q)]$
 		if $\theta\ne b_{N}(q)$.
 		
 		For the other direction, let $\bm{q}$ solve $\mathcal{P}_{N}(q)$
 		and not be equal to $\bm{q}'$ (identified in the previous paragraph)
 		almost everywhere. By the previous argument, $\bm{q}'$ attains a
 		strictly higher value of the maximand in $\mathcal{P}_{N}(q)$ than
 		$\bm{q}$. Moreover, $\bm{q}'$ is nondecreasing and satisfies $\bm{q}'(\Theta)\subseteq\{0,q\}$.
 		The claim follows.
 		
 		\emph{Claim: $b_{N}$ is nondecreasing.} We proceed in three steps.
 		First, we use the definition of $\bm{\beta}$ and the fact that $b$
 		upper bounds $b_{N}$ to show that: $J(q+\varepsilon,b_{N}(q))\le J(q,b_{N}(q)),\:(\varepsilon,q)\in(0,\overline{q}-q)\times(0,\overline{q})$.
 		Suppose that $J(q+\varepsilon,b_{N}(q))>J(q,b_{N}(q))$. Then, there
 		exists $q'>q$ such that $J_{1}(q',b_{N}(q))>0$. Then, by concavity,
 		either $\bm{\beta}(b_{N}(q))=\overline{q}$, or there exists $q''\ge q'$
 		such that $J_{1}(q'',b_{N}(q))\le0$, or both. In all cases: $\bm{\beta}(b_{N}(q))>q$,
 		which contradicts the definition of $b$. Second, we observe that:
 		if $J(q+\varepsilon,b_{N}(q))<J(q,b_{N}(q))$, then $b_{N}(q+\varepsilon)>b_{N}(q)$
 		by increasing differences. Third, we show that $J(q+\varepsilon,b_{N}(q))=J(q,b_{N}(q))$
 		only if $b_{N}(q+\varepsilon)=b_{N}(q)$ and distinguish among three
 		exhaustive and mutually exclusive cases, for given $q\in Q$: (A)
 		$J(q,0)\ge0$; (B) $J(q,1)\le0$; (C) $b_{N}(q)\in(0,1)$. In case
 		(A), $b_{N}(q+\varepsilon)=b_{N}(q)$ because, otherwise: $b_{N}(q+\varepsilon)>b_{N}(q)=0$,
 		so $J(q+\varepsilon,b_{N}(q+\varepsilon))>J(q+\varepsilon,b_{N}(q))\ge0$,
 		which contradicts the definition of $b_{N}$. In case (B), $b_{N}(q+\varepsilon)=b_{N}(q)$
 		because, otherwise: $b_{N}(q+\varepsilon)<b_{N}(q)=1$, so $J(q+\varepsilon,b_{N}(q+\varepsilon))<J(q+\varepsilon,b_{N}(q))\le0$,
 		which contradicts the definition of $b_{N}$. In case (C), $b_{N}(q+\varepsilon)=b_{N}(q)$
 		by increasing differences. Hence, $b_{N}$ is nondecreasing.
 		
 		\emph{Claim: $V_{N}$ is continuously differentiable, concave, and
 			$V_{N}'(q)=\int_{[b_{N}(q),1]}J_{1}(q,\theta)\diff F(\theta)$}. First,
 		it holds that $b_{N}$ satisfies: $b_{N}(q)=0$ if $q\le c$, and
 		$b_{N}(q)=1$ if $q\ge d$, for some $c\le d$, because $b_{N}$ is
 		nondecreasing. Moreover, $b_{N}$ is continuously differentiable at
 		$q\in(c,d)$, with $b_{N}'(q)=-\frac{J_{1}(q,b_{N}(q))}{J_{2}(q,b_{N}(q))}$,
 		by the implicit function theorem. As an implication, by the same argument
 		used in the proof of Proposition \ref{app:prop:mon:alloc}, $V_{N}$
 		is continuously differentiable at $q\in(0,\overline{q})$ with derivative
 		$\int_{[b_{N}(q),1]}J_{1}(q,\theta)\diff F(\theta)$. For the claim,
 		it suffices to establish that $V'_{N}(q_{2})-V'_{N}(q_{1})\le0$ for
 		all $q_{1},q_{2}\in Q$ with $q_{2}>q_{1}$. It holds that 
 		\begin{align*}
 			V'_{N}(q_{2})-V'_{N}(q_{1})=\int_{[b_{N}(q_{2}),1]}J_{1}(q_{2},\theta)-J_{1}(q_{1},\theta)\diff F(\theta)-\int_{[b_{N}(q_{1}),b_{N}(q_{2})]}J_{1}(q_{1},\theta)\diff F(\theta).
 		\end{align*}
 		Moreover, $J_{1}(q_{2},\theta)-J_{1}(q_{1},\theta)\le0$ by concavity
 		of $q\mapsto J(q,\theta)$ for all $\theta\in\Theta$, and $J_{1}(q_{1},\theta)\ge0$
 		for all $\theta\ge b_{N}(q_{1})$ by concavity of $J(\cdot,\theta)$
 		and the definition of $b_{N}$. Hence, $V_{N}'$ is concave; $V_{N}$
 		is right continuous at 0 by the same argument used in the proof of
 		Proposition \ref{app:prop:mon:alloc} to show that $V$ is right continuous
 		at 0.
 		
 		\emph{Claim: $q_{N}^{M}\le q^{M}$, with strict inequality if $b(q^{M})>b_{N}(q^{M})$.}
 		We show that $\Delta\coloneqq V'(q)-V'_{N}(q)\le0$. It holds that
 		$\Delta=-\int_{[b_{N}(q),b(q)]}J_{1}(q,\theta)\diff F(\theta)$, and,
 		by definition of $b(q)$, $J_{1}(q,\theta)\le0$ for all $\theta\le b(q)$.
 		Moreover, if $b(q)>b_{N}(q)$ and $q>0$, then $\Delta>0$. It remains
 		to show that $q_{N}^{M}>0$, which follows from the same argument
 		used in the proof of Proposition \ref{app:prop:mon:alloc} to show
 		that $q^{M}>0$. 
 	\end{proof}
 	
 	\subsubsection{Competition}
 	
 	\label{app:sec:game:definitions}\label{app:sec:competition:auxiliary}
 	
 	We study a game played by $N$ firms facing the same buyer population
 	as in the main text, i.e., we assume that $u$  and $c$ are as in the main text; recall that $c'(0)=0$	We use the following notation and results.
 	
 	\paragraph{Setup}
 	
 	We denote by $\mathcal{N}$ the set of players $\{1,\dots,N\}$, and
 	we let $Q=\mathbb{R}_{+}$ be the set of qualities. Fix a profile
 	of qualities $(\overline{q}_{1},\dots,\overline{q}_{N})\in Q^{N}$.
 	The demand of type $\theta$ given the profile of tariffs $(p_{1},\dots,p_{N})\in\times_{i\in\mathcal{N}}\mathbb{R}^{Q}$
 	satisfying $p_{i}(q)=\infty$ if $q>\overline{q}_{i}$, $i\in\mathcal{N}$,
 	is: $D_{(p_{1},\dots,p_{N})}(\theta)\coloneqq\argmax_{q\in Q}u(\theta,q)-\min\{p_{1}(q),\dots,p_{N}(q)\}$.
 	The player serving $\theta$ given the profile of tariffs $(p_{1},\dots,p_{N})$
 	is $\iota_{(p_{1},\dots,p_{N})}(\theta)\coloneqq\min\argmin_{i\in\mathcal{N}}p_{i}(D_{(p_{1},\dots,p_{N})}(\theta))$;
 	if multiple firms offer the same price for the demand, we break indifferences
 	in favor of the low-index firms. Note that every type $\theta>0$
 	has single-valued demand.
 	
 	\paragraph{Timing and actions}
 	
 	First, every player $i$ simultaneously chooses a quality $\overline{q}_{i}\in Q$.
 	In the second stage of the game, the \emph{pricing} stage, every firm
 	$i$ observes the qualities $(\overline{q}_{1},\dots,\overline{q}_{N})$
 	and simultaneously chooses a pricing function $p_{i}\colon Q\to\mathbb{R}$
 	that only prices feasible qualities, i.e., with $p_{i}(q_{i})=\infty$
 	if $q_{i}>\overline{q}_{i}$. Then, every buyer's type $\theta$ observes
 	the pricing functions and, afterwards, he chooses a firm $i$, purchases
 	a quality $q_{i}$ at price $p_{i}(q_{i})$, and nothing from the
 	other firms, or does not purchase any good for a utility of 0.
 	
 	\paragraph{Strategies, payoffs, and equilibria}
 	
 	Firm $i$ chooses a cap in stage 1 and a pricing function in stage
 	2. There is symmetric information among firms in stage 2, so every
 	player $i$ quotes a pricing function conditional on the known cap
 	quality of every opponent. Hence, the set of pure strategies for $i$
 	is $S_{i}\coloneqq Q\times\bm{P}_{i}$, letting $\bm{P}_{i}\subseteq(\mathbb{R}^{Q})^{Q^{N}}$
 	be the set of ``conditional'' pricing functions of firm $i$; i.e.,
 	we define $\mathbf{P}_{i}=\left\{ P\colon Q^{N}\to\mathbb{R}^{Q}\mid q>\overline{q}_{i}\ \text{implies}\ P[\dots,\overline{q}_{i-1},\overline{q}_{i},\overline{q}_{i+1},\dots](q)>g(\overline q)+\overline q \right\} $.
 	The revenues of firm $i$ given a profile of pricing functions $(p_{1},\dots,p_{N})$
 	are
 	
 	\begin{align*}
 		R_{i}(p_{1},\dots,p_{N})\coloneqq\int_{\Theta}p_{i}(D_{(p_{1},\dots,p_{N})}(\theta))[\iota_{(p_{1},\dots,p_{N})}(\theta)=i]\diff F(\theta).
 	\end{align*}
 	
 	The \emph{payoff} of firm $i$ from the profile of pure strategies
 	$s=(\dots,(\overline{q}_{i}^{s},P_{i}^{s}),\dots)\in\times_{i=1}^{N}S_{i}$
 	is $\Pi_{i}(s)\coloneqq R_{i}(P_{1}^{s}[\overline{q}^{s}],\dots,P_{N}^{s}[\overline{q}^{s}])-c(\overline{q}_{i}^{s})$.
 	A strategy for player $i$ is a pair of a cap distribution and conditional
 	pricing function, $(H_{i},P_{i})$, in which $H_{i}$ is a distribution
 	function with support contained in $Q$ and $P_{i}\in\mathbf{P}_{i}$
 	for all $i$. A strategy profile $((H_{1},P_{1}),\dots,(H_{N},P_{N}))$
 	is an \emph{equilibrium} if it is a subgame-perfect Nash equilibrium
 	of the game, that is, the following two conditions hold. 
 	\begin{enumerate}
 		\item The conditional pricing functions maximize conditional profits, that
 		is, for all $i$, for all cap profiles $\overline{\overline{q}}=(\overline{q}_{1},\dots,\overline{q}_{N})$,
 		\begin{align*}
 			R_{i}(P_{i}[\overline{\overline{q}}],P_{-i}[\overline{\overline{q}}])\ge R_{i}(p_{i}',P_{-i}[\overline{\overline{q}}])\quad\text{for all}\ p_{i}'\in\mathbb{R}^{Q}.
 		\end{align*}
 		\item The cap distributions maximize expected profits, that is, for all $i$,  for all
 		$\overline{q}_{i}$ in the support of $H_{i}$ we have 
 		\begin{align*}
 			&\int_{Q^{N-1}}R_{i}(P_{i}[\overline{\overline{q}}],P_{-i}[\overline{\overline{q}}])\diff H_{-i}(\overline{q}_{-i})-c(\overline{q}_{i})\\
 			&\ge\int_{Q^{N-1}}R_{i}(P_{i}[\overline{q}_{i}',\overline{q}_{-i}],P_{-i}[\overline{q}_{i}',\overline{q}_{-i}])\diff H_{-i}(\overline{q}_{-i})-c(\overline{q}_{i}')
 		\end{align*}
 		for all $\overline{q}_{i}'\in Q$, letting $H_{-i}$ denote the joint
 		distribution of the caps of players other than $i$ in $((H_{1},P_{1}),\dots,(H_{N},P_{N}))$.
 	\end{enumerate}
 	
 	\paragraph{Auxiliary monopoly problem}
 	
 	The allocation $\bm{q}\in\bm{Q}$ is \emph{$x$-$y$ second best},
 	for qualities $x,\,y\in Q$ with $x\le y$, if there exists a transfer
 	function $t$ such that: $(\bm{q},t)$ solves  the problem $\mathcal P^{x, y}$
 	\begin{align*}
 		V^{x,y}=\sup_{(\bm{q},t)\in\mathcal{M}}\int t(\theta)\diff F(\theta)\ \text{subject to:}\ y\le\bm{q}(\theta)\le x\ \text{for all}\ \theta\in\Theta.
 	\end{align*}
 	\begin{lemma} Let $x,\,y\in Q$ with $x\le y$. The allocation $\bm{q}$
 		is $x$-$y$ second best if and only if: there exists a nondecreasing
 		allocation $\bm{\gamma}$ such that $\bm{\gamma}(\theta)=\bm{\beta}(\theta)$
 		almost everywhere and $\bm{q}(\theta)=\max\{\min\{\bm{\gamma}(\theta),x\},y\}$
 		for all $\theta$.\end{lemma}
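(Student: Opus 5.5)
Note first that the statement's ``$x\le y$'' must be read as $y\le x$: the constraint $y\le\bm q(\theta)\le x$ is otherwise empty or trivial. The plan mirrors the proof of Proposition \ref{app:prop:mon:alloc}, with a two-sided box constraint in place of the one-sided cap.

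The first step is to reduce $\mathcal P^{x,y}$ to a virtual-surplus problem. By the standard arguments invoked in the proof of Lemma \ref{app:lem:mon:dec}, $(\bm q,t)$ solves $\mathcal P^{x,y}$ for some $t$ iff $\bm q$ maximizes $\int_\Theta J(\bm q(\theta),\theta)\diff F(\theta)$ over nondecreasing allocations with $y\le\bm q\le x$. Incentive compatibility gives monotonicity and the envelope formula for rents. The binding participation constraint is at $\theta=0$, where $u(\bm q(0),0)\ge u(y,0)\ge0$. Hence the transfer of type $0$ is pinned down by individual rationality, and revenues reduce to $\int J$ exactly as in the one-sided case.

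The second step drops the monotonicity constraint and solves pointwise. For each $\theta$, maximize $J(\cdot,\theta)$ over $[y,x]$. Since $J(\cdot,\theta)$ is concave, and strictly quasiconcave off the countable set $S$, the maximizer over an interval is the projection of the unconstrained maximizer onto the interval, namely $\max\{\min\{\bm\beta(\theta),x\},y\}$. This is unique for a.e.\ $\theta$, and $\bm\beta$ is the largest selection, which handles ties. I would invoke Lemma \ref{app:lem:general}, as in the proof of Proposition \ref{app:prop:mon:alloc}, presumably after a version for a lower bound: either apply it to the shifted problem, or reprove the projection property directly from concavity. Because $\bm\beta$ is nondecreasing by increasing differences, the clamped map is nondecreasing. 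So it is feasible in the monotone problem and therefore solves it; this gives sufficiency for any $\bm\gamma$ equal to $\bm\beta$ almost everywhere.

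The third step is necessity. Suppose a solution $\bm q$ is not of the stated form for any nondecreasing $\bm\gamma$ equal to $\bm\beta$ almost everywhere. Then $\bm q$ differs from the clamp of $\bm\beta$ on a set of positive measure. By pointwise strict quasiconcavity it achieves a strictly lower value of the relaxed problem, while the clamp of $\bm\beta$ is feasible in the monotone problem. This contradicts optimality.

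The main obstacle is this necessity direction. The issue is showing that a solution agreeing with the clamp of $\bm\beta$ almost everywhere can be written as the clamp of some nondecreasing $\bm\gamma$ with $\bm\gamma=\bm\beta$ almost everywhere. The difficulty sits on measure-zero sets, and on the flat regions where the clamp is identically $x$ or $y$ and $\bm\gamma$ is only partially identified. I would construct $\bm\gamma$ explicitly: set $\bm\gamma=\bm q$ where $y<\bm q<x$, and set $\bm\gamma=\bm\beta$ elsewhere, adjusted on the null set using monotonicity of $\bm q$.
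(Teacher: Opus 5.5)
Your proposal is correct and follows the paper's proof essentially step for step. You reduce to maximizing $\int J$ over nondecreasing allocations in $[y,x]$, solve the relaxed problem pointwise via Lemma \ref{app:lem:general} (already stated in the paper for the two-sided constraint $y\le\bm q(\theta)\le x$, so no shifted version or reproof is needed), observe that the clamp of $\bm\beta$ is monotone, and get necessity from strict quasiconcavity; you are also right that the hypothesis should read $y\le x$. Your explicit construction of $\bm\gamma$ where $\bm q\in\{y,x\}$ (e.g.\ $\max\{\bm\beta,x\}$ where $\bm q=x$, $\min\{\bm\beta,y\}$ where $\bm q=y$, and $\bm q$ elsewhere) fills in the a.e.-versus-everywhere step that the paper passes over.
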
 
 	\begin{proof}
 		This result follows from the same argument as the proof of
 		Proposition \ref{app:prop:mon:alloc}.
 		
 		Two preliminary observations follow from standard arguments \citep*{carroll_contract_2023}.
 		First, we have 
 		\begin{align*}
 			V^{x,y}=&\max_{\bm{q}\in\bm{Q}}\int_{\Theta}J(\bm{q}(\theta),\theta)\diff F(\theta)\ \text{subject to:}\ y\le \bm{q}(\theta)\le x\ \text{for all}\ \theta\in\Theta,\\& \bm{q}\ \text{is nondecreasing};
 		\end{align*}
 		and $(\bm{q},t)$ solves $\mathcal{P}^{x, y}$ for some $t$ iff $\bm{q}$
 		solves the above problem.
 		
 		We establish that $\bm{q}$
 		solves $\mathcal{P}^{x, y}$ if and only if: $\bm{q}(\theta)=\max\{\min\{\bm{\gamma}(\theta),x\},y\}$
 		for all $\theta$ and some nondecreasing $\bm{\gamma}\in\bm{Q}$ with
 		$\bm{\gamma}(\theta)=\bm{\beta}(\theta)$ almost everywhere.
 		
 		For necessity, let $\bm{q}\in\bm{Q}$ be such that $\bm{q}(\theta)=\max\{\min\{\bm{\gamma}(\theta),x\},y\}$
 		for all $\theta\in\Theta$ and for some nondecreasing allocation $\bm{\gamma}$
 		with $\bm{\gamma}(\theta)=\bm{\beta}(\theta)$ almost everywhere.
 		Then,  by Lemma \ref{app:lem:general}, $\bm{q}$ solves the above problem
 		without the monotonicity constraint, i.e., $\bm q$ solves
 		\begin{align}\label{app:eq:xyunconstrained}
 		\max_{\bm{q}\in\bm{Q}}\int_{\Theta}J(\bm{q}(\theta),\theta)\diff F(\theta)\ \text{subject to:}\ y\le \bm{q}(\theta)\le x\ \text{for all}\ \theta\in\Theta.
 		\end{align}
		Additionally, $\bm q$ is nondecreasing, so $\bm{q}$ is $x$-$y$ second best.
 		
 		For sufficiency, let $\bm{q}$ be  $x$-$y$ second best, and such that: there does not exist a nondecreasing
 		allocation $\bm{\gamma}$ with $\bm{q}(\theta)=\max\{\min\{\bm{\gamma}(\theta),x\},y\}$
 		for all $\theta$ and $\bm{\gamma}(\theta)=\bm{\beta}(\theta)$ almost
 		everywhere. By Lemma \ref{app:lem:general} and strict quasiconcavity
 		of $J(\cdot,\theta)$, $\bm{q}$ does not solve the preceding problem---in \ref{app:eq:xyunconstrained}. We observe
 		that, by Lemma \ref{app:lem:general},  the function $\theta\mapsto\max\{ \min\{\bm{\beta}(\theta),x\}, y\}$ is
 		an allocation that solves the problem in \ref{app:eq:xyunconstrained},
 		and satisfies the monotonicity constraint,
 		so $\bm{q}$ does not solve $\mathcal{P}(q)$. The claim follows from
 		the preliminary observation in the proof of Lemma \ref{app:lem:mon:dec}. 
 	\end{proof}
 	
 	\paragraph{Auxiliary results about the pricing game}
 	
 	The pricing game given the quality profile $(q_{1},\dots,q_{N})\in Q^{N}$
 	is the strategic-form game $\Gamma(q_{1},\dots,q_{N})=(\mathcal{N},(\mathbb{R}^{[0,q_{i}]},R_{i}(\cdot))_{i\in\mathcal{N}}$.
 	The profile of pricing functions $(p_{1},\dots,p_{N})$ is a \emph{$(q_{1},\dots,q_{N})$
 		equilibrium} if $(p_{1},\dots,p_{N})$ is a Nash equilibrium of $\Gamma(q_{1},\dots,q_{N})$.
 	
 	We study the subgame starting at the given production profile $(q_{1},\dots,q_{N})\in Q^{N}$
 	with $\max\{q_{1},\dots,q_{N}\}=:x$ and $y:=\max\{q_{1},\dots,q_{N}\}\setminus\{x\}$.
 	\begin{lemma}\label{app:lem:uniquerevenues} Let $q_{i}\in Q$ for
 		all $i\in\mathcal{N}$. For every $(q_{1},\dots,q_{N})$ equilibrium
 		$(p_{1},\dots,p_{N})$, it holds that:
 		\begin{align*}
 			R_{i}(p_{1},\dots,p_{N})=(V(q_{i})-V(\max\{q_{1},\dots,q_{n}\}\setminus\{q_{i}\}))_{+},
 		\end{align*}
 		and 
 		\begin{align*}
 			D_{(p_{1},\dots,p_{N})}(\theta)=\max\{\min\{\bm{\gamma}(\theta),x\},y\},
 		\end{align*}
 		for all $\theta$, in which $\bm{\gamma}$ is a nondecreasing allocation
 		such that $\bm{\gamma}(\theta)=\bm{\beta}(\theta)$ almost everywhere.
 	\end{lemma}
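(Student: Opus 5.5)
The plan is to characterize the revenue of each firm through a lower bound and an upper bound that meet, and then read off the induced demand. Throughout I write $x$ and $y$ for the highest and second-highest caps, and $i^{*}$ for the lowest-index firm producing $x$.

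\emph{Lower bound: everything up to $y$ is free.} Every quality in $[0,y]$ is feasible for at least two firms (if $x=y$, at least two firms produce $x$). A Bertrand undercutting argument then shows that, in any $(q_1,\dots,q_N)$ equilibrium, the lower envelope $\min_i p_i$ equals $0$ on all qualities in $[0,y]$ that are demanded. Concretely, suppose some quality $q\le y$ is demanded by a positive mass of types at a positive envelope price. A firm other than the one serving those types, but still able to offer $q$, could quote a price slightly below and capture that mass of types. This does not reduce its other revenues, since it can keep its other prices unchanged. So the deviation is strictly profitable, a contradiction.

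It follows that every firm $j\ne i^{*}$ earns zero revenue, since the only qualities it can sell are priced at zero. This matches $(V(q_j)-V(\max_{k\ne j}q_k))_+=0$, because $q_j\le x=\max_{k\ne j}q_k$ and $V$ is nondecreasing.

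\emph{Upper bound for $i^{*}$.} Given that $[0,y]$ is effectively free, every type $\theta$ obtains at least $\max_{q\le y}u(q,\theta)=u(y,\theta)$, since $u$ is increasing in $q$. The mechanism induced by the tariff of $i^{*}$, together with the free options, is incentive compatible. Any type that buys from $i^{*}$ can be assigned a quality in $[y,x]$ at no loss, and types buying $q<y$ can be moved to $y$ at zero price without harming anyone. Hence the revenue of $i^{*}$ is at most that of a monopolist constrained to $y\le\bm q\le x$ whose participation constraint binds at $u(y,\theta)$.

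Rents accumulate from the bottom at the floor $y$. Using the envelope formula for transfers, $t(\theta)=u(\bm q(\theta),\theta)-\int_0^\theta u_2(\bm q(s),s)\diff s-u(y,0)$, the revenue equals $\int J(\bm q,\theta)\diff F-\int J(y,\theta)\diff F$. This is the second-best value $V^{x,y}$ net of the free floor. I then want to show that $V^{x,y}-\int J(y,\cdot)\diff F=V(x)-V(y)$. I would do this by splitting at $b(y)$. The $x$-$y$ second-best allocation $\max\{\min\{\bm\beta,x\},y\}$ coincides with $\min\{\bm\beta,y\}$ on $\{\bm\beta\le y\}$ and with $\min\{\bm\beta,x\}$ above. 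Hence the difference of objectives telescopes to $\int J(\min\{\bm\beta,x\},\cdot)\diff F-\int J(\min\{\bm\beta,y\},\cdot)\diff F=V(x)-V(y)$, by Proposition~\ref{app:prop:mon:alloc}. The step I expect to be the main obstacle is the accounting of the floor: the rent of type $0$ equals $u(y,0)$, which must be subtracted correctly. I would handle it by writing the revenue of the free-$y$ mechanism as $\int J(\cdot)\diff F$, whose value is $V(y)$ at the solution of $\mathcal P(y)$. The gross revenue under the combined mechanism is then $V(x)$, and $i^{*}$ captures exactly the increment, $V(x)-V(y)$.

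\emph{Attainment and demand.} Firm $i^{*}$ achieves this bound by quoting the monopolist tariff for $\mathcal P(x)$ minus the tariff $T_y$ on $[0,y]$ (that is, zero on $[0,y]$ and $T(q)-T(y)$ above). This is a best response because of the upper bound, and all other firms pricing at zero on their feasible sets is a best response because no firm $j\ne i^{*}$ can profitably undercut below $0$. For the demand characterization, note that any equilibrium must attain the upper bound, since otherwise $i^{*}$ would deviate to the tariff above. Attainment forces the induced allocation to solve $\mathcal P^{x,y}$. The preceding lemma on $x$-$y$ second best then gives $D(\theta)=\max\{\min\{\bm\gamma(\theta),x\},y\}$ for a nondecreasing $\bm\gamma$ equal to $\bm\beta$ almost everywhere. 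Finally, the tie-breaking rule ensures that demand is single-valued for $\theta>0$, so the identity holds for all $\theta$.
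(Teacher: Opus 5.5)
Your architecture matches the paper's. Undercutting makes purchases in $[0,y]$ free, the top firm $i^{*}$ captures $V(x)-V(y)$ through the $x$-$y$ second-best problem, and demand is read off the $x$-$y$ lemma. Your version is more explicit than the paper's step (2), which simply asserts $i^{*}$'s payoff, and your floor accounting through $\int_\Theta J(y,\theta)\diff F(\theta)=u(y,0)$ is correct. However, the two inequalities that carry the argument are asserted rather than proved, and both depend on what competitors do on $[0,y]$.

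\emph{Upper bound.} You need every type's equilibrium utility $U(\theta)$ to be at least $u(y,\theta)$. This does not follow from your Bertrand step, which only concerns qualities that are actually bought. If every firm prices $[0,y]$ prohibitively, nobody buys there and the step is vacuous. Then $i^{*}$'s residual problem has outside option $0$; posting a single price for $x$, it would earn more than $V(x)-V(y)$ for $y$ near $x$. What rules this out is a separate deviation. Some $j\ne i^{*}$ has $q_j\ge y$ and earns zero. Suppose a positive mass of types had utility below $u(y,\theta)-\delta$. If $j$ quoted price $\delta$ for quality $y$, all else unchanged, it would attract them, since the only improved option is $j$'s offer of $y$, and it would earn a positive amount. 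Hence $U(\theta)\ge u(y,\theta)$ almost everywhere, and everywhere by continuity.

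\emph{Lower bound.} Your claim ``otherwise $i^{*}$ would deviate to the tariff above'' presumes that competitors' prices on $[0,y]$ are nonnegative, so that $i^{*}$'s zero prices neutralize them. The paper's pricing functions are real-valued, and with negative prices both your step and the claimed equality fail. Here is a counterexample:
\begin{itemize}
\item Take $0<y<\bm\beta(0)$ and $x>y$, so that $V(y)=u(y,0)$.
\item Let $j$ with $q_j=y$ post quality $y$ at price $-m$, with $0<m<J(\min\{\bm\beta(0),x\},0)-J(y,0)$, and price everything else prohibitively.
\item Let $i^{*}$ implement $\min\{\bm\beta,x\}$ with $U(0)=u(y,0)+m$.
\end{itemize}
By increasing differences and concavity, $J(\min\{\bm\beta(\theta),x\},\theta)-J(y,\theta)>m$ for all $\theta$. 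Leaving type $\theta$ to $j$ replaces $J(\min\{\bm\beta(\theta),x\},\theta)$ by $J(y,\theta)+m$ in $i^{*}$'s virtual-surplus accounting, so serving all types is a best response for $i^{*}$. Meanwhile $j$ sells nothing and could sell only at negative prices. This is a pricing equilibrium with $R_{i^{*}}=V(x)-V(y)-m$. You must therefore restrict prices to be nonnegative; pricing below the zero marginal cost is weakly dominated. The paper's proof needs the same restriction, since its step (0) only excludes positive prices. With it, your attainment step goes through.

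\emph{Smaller repairs.}
\begin{itemize}
\item Your undercutting argument treats a single quality demanded by a positive mass. A positive mass can instead buy a continuum of qualities in $[0,y]$ (e.g.\ types below $b(y)$ buying $\bm\beta(\theta)$), so undercut on the whole set $D(A)$, as the paper does.
\item The claim ``this does not reduce its other revenues'' is false: a price cut can pull in types who were buying more expensive items from the deviator. What holds is that such switchers are within $\varepsilon$ of indifference, a set whose mass vanishes as $\varepsilon\to0$, while the gain from $A$ does not.
\item In the telescoping, on $\{\bm\beta\le y\}$ the $x$-$y$ allocation equals $y$, not $\min\{\bm\beta,y\}=\bm\beta$. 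The correct bookkeeping is $V^{x,y}-\int_\Theta J(y,\theta)\diff F(\theta)=\int_{\{\bm\beta>y\}}\bigl(J(\min\{\bm\beta(\theta),x\},\theta)-J(y,\theta)\bigr)\diff F(\theta)=V(x)-V(y)$.
\item The tie-breaking rule selects among firms, not qualities. To apply the $x$-$y$ lemma for every $\theta$ you need $y\le D\le x$ everywhere, which follows from $U\ge u(y,\cdot)$ together with nonnegative prices.
\end{itemize}
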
 
 	\begin{proof}
 		\emph{(0) Quantities less than $y$ come at zero price.} Fix a $(q_{1},\dots,q_{N})$
 		equilibrium $(p_{1},\dots,p_{N})$. Suppose there exists a positive
 		measure type set $A\subseteq\Theta$ such that $D_{(p_{1},\dots,p_{N})}(\theta)\le y$
 		and purchase at positive price: $p_{\iota_{(p_{1},\dots,p_{N})}(\theta)}D_{(p_{1},\dots,p_{N})}(\theta)>0$
 		for all $\theta\in A$ and $\iota_{(p_{1},\dots,p_{N})}(\theta)=:i$.
 		
 		Firm $j$, that invested $q_{j}\ge y$ and is not $i$, has the following
 		action available: 
 		\begin{align*}
 			p'(q)=\begin{cases}
 				p_{j}(q), & \text{if}\ q\notin D_{(p_{1},\dots,p_{N})}(A),\\
 				p_{i}(q)-\varepsilon, & \text{if}\ q\in D_{(p_{1},\dots,p_{N})}(A).
 			\end{cases}
 		\end{align*}
 		for sufficiently small $\varepsilon>0$. Conditional on type $\theta\in A$,
 		playing $p'$ induces strictly higher profits than $p_{i}$. Conditional
 		on type $\theta\notin A$, playing $p'$ induces lower profits than
 		$p_{i}$ only if $\iota_{(p_{1},\dots,p_{N})}(\theta)=i$, and by
 		a negligible amount if $\varepsilon$ is sufficiently small.
 		
 		\emph{(1) Preliminaries: using the results from the $x$-$y$ second
 			best problem.} We now compute the value of the $x$-$y$ second best
 		problem, defined above. For simplicity, we assume that $g'(0)>-\varphi(0)$.
 		For quantities $0\le y<x\le\overline{\overline{q}}$, it holds that
 		\begin{align*}
 			V^{x,y}=\int_{[\underline{\theta},a]}P(y)\diff F(\theta)+\int_{(a,b)}g(\bm{\beta}(\theta))+\bm{\beta}(\theta)\varphi(\theta)\diff F(\theta)+\int_{[b,\overline{\theta}]}g(x)+x\varphi(\theta)\diff F(\theta),
 		\end{align*}
 		for a pricing function $P$, cutoff types $\underline{\theta}\le a<b\le\overline{\theta}$,
 		and a quality allocation $\bm{q}$ that agrees with $\theta\mapsto\min\{\bm{\beta}(\theta),x\}$
 		on $[a,\overline{\theta}]$. (We note that $a=b(y)$ and $b=b(x)$).
 		
 		Hence, we have 
 		\begin{align*}
 			V^{x,y}=V(x)-V(y)+\int_{[\underline{\theta},b(y)]}P(y)\diff F(\theta).
 		\end{align*}
 		for a pricing function $P$.
 		
 		\emph{(2) Bertrand-type argument.} By step (1), the payoff to $i$
 		if $q_{i}=x>y$ in equilibrium is 
 		\begin{align*}
 			R_{i}(p_{1},\dots,p_{N})=V(x)-V(y)+\int_{[\underline{\theta},b(y)]}[\iota_{(p_{1},\dots,p_{N})}=i]P_{i}(D_{(p_{1},\dots,p_{N})}(\theta))\diff F(\theta).
 		\end{align*}
 		By the argument in part (0), any quality less than $y$ involves zero
 		profits, i.e., the last term on the right-hand side of the above equality
 		is 0. So, the formula holds. 
 	\end{proof}
 	
 	\paragraph{Equilibria of the production game}	\label{app:sec:competitionproofproof}

 	Every Nash equilibrium of $\Gamma(q_{1},\dots,q_{N})$ induces the
 	same revenues, i.e., for all $(q_{1},\dots,q_{N})$ equilibria $(p_{1}^{*},\dots,p_{N}^{*})$
 	and $(p_{1}^{**},\dots,p_{N}^{**})$, we have $R_{i}(p_{1}^{*},\dots,p_{N}^{*})=R_{i}(p_{1}^{**},\dots,p_{N}^{**})$
 	(Lemma \ref{app:lem:uniquerevenues}). We define $\overline{R}_{i}(q_{1},\dots,q_{N})$
 	the unique equilibrium revenues of $i\in\mathcal{N}$ in the pricing
 	game $\Gamma(q_{1},\dots,q_{N})$. By Lemma \ref{app:lem:uniquerevenues},
 	we have 
 	\begin{align*}
 		\overline{R}_{i}(q_{1},\dots,q_{N})=(V(q_{i})-V(\max\{q_{1},\dots,q_{n}\}\setminus\{q_{i}\}))_{+}.
 	\end{align*}
 	
 	The \emph{production game} is the strategic-form game $\Gamma=\left(\mathcal{N},(Q,\overline{R}_{i}(\cdot)-c(\cdot))_{i\in\mathcal{N}}\right)$.
 	By mixed strategies, we refer to elements of $\Delta Q$, the set
 	of probability distributions over $Q$ identified by distribution
 	functions, and we extend payoffs of $\Gamma$ to mixed-strategy profiles
 	as usual; i.e., defining 
 	\begin{align*}
 		\Pi_{i}\colon &(\Delta Q)^{N}\to\mathbb{R}\\
 		&(H_{1},\dots,H_{N})\mapsto\int_{Q}\cdots\int_{Q}\overline{R}_{i}(q_{1},\dots,q_{N})-c(q_{i})\diff H_{1}(q_{1})\cdots\diff H_{N}(q_{N}).
 	\end{align*}

 	\begin{lemma}[Pure-strategy equilibria]\label{app:prop:pureequilibria}
 		The quality profile $(q_{1},\dots,q_{N})$ is a Nash equilibrium in
 		pure strategies of the production game if and only if: there exists
 		$i\in\mathcal{N}$ such that $q_{i}=q^{M}$ and $q_{j}=0$ for all
 		$j\ne i$.\end{lemma}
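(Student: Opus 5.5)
The plan is to work directly with the reduced production game $\Gamma$, whose payoffs are $\overline{R}_{i}(q_{1},\dots,q_{N})-c(q_{i})$ with $\overline{R}_{i}=(V(q_{i})-V(\max_{j\ne i}q_{j}))_{+}$ (Lemma \ref{app:lem:uniquerevenues}). Throughout I use the facts established for the monopoly problem: $V$ is increasing and concave, and $q\mapsto V(q)-c(q)$ is strictly concave with unique maximizer $q^{M}>0$. I also use $V(0)-c(0)\ge 0$, so the payoff from choosing $0$ is $0$ whenever an opponent produces (with $c(0)=0$ normalized).

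\emph{Sufficiency.} Fix the profile with $q_{i}=q^{M}$ and $q_{j}=0$ for $j\ne i$, and check deviations for each firm.
\begin{itemize}
\item Firm $i$ faces $\max_{j\ne i}q_{j}=0$, so its payoff from any $q$ is $V(q)-V(0)-c(q)$. This is maximized at $q^{M}$ by the definition of $q^{M}$.
\item A firm $j\ne i$ that deviates to $q\le q^{M}$ earns revenue $(V(q)-V(q^{M}))_{+}=0$ and pays $c(q)\ge c(0)$. Its payoff therefore does not exceed that of staying at $0$.
\item A firm $j\ne i$ that deviates to $q>q^{M}$ earns $V(q)-V(q^{M})-c(q)$. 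Since $V-c$ is strictly decreasing beyond $q^{M}$, this is below $-c(q^{M})<0$ when $c(0)=0$.
\end{itemize}
In general, I compare $V(q)-V(q^{M})-c(q)$ with $-c(0)$ and use $V(q)-c(q)<V(q^{M})-c(q^{M})\le V(q^{M})-c(0)$, where the last inequality holds because $c$ is increasing. So no deviation is profitable.

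\emph{Necessity.} Take a pure equilibrium, and let $x$ and $y$ be the highest and second-highest caps.
\begin{enumerate}
\item Every firm that does not strictly hold the top cap earns zero revenue. Any such firm with a positive cap pays $c(q_{j})>c(0)$ and would gain by deviating to $0$. Hence all non-top firms choose $0$.
\item Two firms cannot tie at $x>0$. Both would get zero revenue and positive cost, so each would deviate to $0$.
\item The profile cannot be all zeros. A firm deviating to $q^{M}$ would then earn $V(q^{M})-V(0)-c(q^{M})>-c(0)$, using $q^{M}>0$ and strict concavity.
\end{enumerate}
So exactly one firm has a positive cap and all others produce $0$. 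This firm faces $y=0$, so its cap must maximize $V(q)-V(0)-c(q)$, which forces $x=q^{M}$.

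The main obstacle is bookkeeping in the necessity step. The tie-breaking rule only matters through $\overline{R}_{i}$, and the formula $(V(q_{i})-V(\max_{j\ne i}q_{j}))_{+}$ already gives zero to tied top firms. I will also verify that $c(0)$ is the minimal cost, so that deviating to zero is strictly better for any firm with a positive cap and no revenue. This uses that $c$ is increasing.
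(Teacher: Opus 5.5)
Your proposal is correct and follows essentially the paper's route. A firm whose positive cap is not strictly on top earns zero revenue and gains by dropping to $0$; the lone producer faces $V(q)-V(0)-c(q)$ and so must choose $q^{M}$; and an idle firm gains nothing from caps $q\le q^{M}$ (zero revenue) or $q>q^{M}$ (where $V-c$ is strictly decreasing). Your version is slightly cleaner than the paper's, which compares revenues $\overline{R}_2$ rather than payoffs: you compare payoffs and keep track of $c(0)$. Your opening remark that $V(0)-c(0)\ge 0$ is unnecessary and holds only when $c(0)=0$, but nothing in your argument uses it.
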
 
 	\begin{proof}
 		Let $q_{i},~q_{j}>0$ for $i\ne j$. Then the player choosing $\min\{q_{i},q_{j}\}$
 		has a strictly profitable deviation to $0$. Thus, in every equilibrium,
 		there exists at most one player choosing a positive quality $q_{i}$.
 		
 		Let $(q_{1},\dots,q_{N})$ be a Nash equilibrium. Then, $\max\{q_{1},\dots,q_{n}\}=q^{M}$,
 		by the characterization of the quality $q^{M}$.
 		
 		It is left to verify that $\overline{R}_{2}(q^{M},0,\dots,0)\ge\overline{R}_{2}(q^{M},q,\dots,0)$
 		for all $q>0$. First, we observe that $\overline{R}_{2}(q^{M},0,\dots,0)=0=\overline{R}_{2}(q^{M},q,0,\dots,0)$
 		for $q\le q^{M}$, by the argument mentioned above. Second, we compute
 		the payoff from qualities higher than $q^{M}$: $V(q)-V(q^{M})-c(q)$.
 		By previous results, $V'(q)<c'(q)$ if $q>{q}^{M}$. So,
 		$q\mapsto\overline{R}_{2}(q^{M},q,0,\dots,0)$ is decreasing on $(q^{M},\overline{q})$.
 		Hence, our claim follows. 
 	\end{proof}
 	
 	A Nash equilibrium in mixed strategies $(H_{1},\dots,H_{N})$ has
 	\emph{multiple active players} if $H_{i}(q)\ge H_{j}(q)>0$ for $q>0$
 	and $i,~j\in\mathcal{N}$ with $i\ne j$. A Nash equilibrium in mixed
 	strategies $(H_{1},\dots,H_{N})$ is \emph{symmetric among active
 		players} if $H_{i}(q)=H_{j}(q)$ for all $q\in Q$ whenever $H_{i}(q)\ge H_{j}(q)>0$
 	for some $q>0$. Note that there are only two kinds of equilibria
 	in mixed strategies that are symmetric among active players: those
 	with at least $2$ active players and the monopolistic equilibria.
 	
\begin{lemma}[Necessary conditions for mixed-strategy equilibria]\label{app:lem:mixedequilibrianecessary}
	If $(H_{1},\dots,H_{N})$ is a Nash equilibrium in mixed strategies with multiple
	active players, then
	\[
	\prod_{j\in\mathcal{N}\setminus\{i\}}H_{j}(q)=
	\begin{cases}
		\dfrac{c'(q)}{V'(q)}, & \text{if }q\in(0,q^{M}),\\[0.6em]
		1, & \text{if }q\ge q^{M},
	\end{cases}
	\]
	for all $i$ such that $H_{i}(q)>0$ for some $q>0$.
\end{lemma}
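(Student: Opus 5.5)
The argument is a standard all-pay-contest indifference argument, applied to the reduced-form payoffs $\overline{R}_i(q_1,\dots,q_N)-c(q_i)$ from Lemma \ref{app:lem:uniquerevenues}. Fix an active player $i$ and write $G_i(q):=\prod_{j\ne i}H_j(q)$ for the distribution of $\max_{j\ne i}q_j$. Ignoring ties for the moment, the expected payoff from choosing cap $q$ is
\begin{align*}
\Pi_i(q)=\int_{[0,q)}\bigl(V(q)-V(z)\bigr)\diff G_i(z)-c(q),
\end{align*}
since $i$ earns $V(q)-V(z)$ when it is the unique top producer facing second-highest cap $z$, and earns zero otherwise. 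The plan has three parts. First, establish atomlessness and the support structure. Second, write the indifference condition on the support and differentiate it to obtain $G_i'(q)$ times an integrating factor. Third, use boundary conditions to pin down $G_i(q)=c'(q)/V'(q)$.

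\textbf{Step 1: no atoms at positive qualities.} If some $H_j$ had an atom at $q>0$ and another active player $k$ put mass at or just below $q$, then $k$ could shift that mass slightly above $q$. This breaks a tie that yields zero revenue, because with $q_k=q_j$ Lemma \ref{app:lem:uniquerevenues} gives $(V(q)-V(q))_+=0$, while the cost increases only continuously. So the deviation is strictly profitable. The same logic rules out gaps in the union of the supports below the top, because a player at the upper end of a gap could move down and save cost without changing winning probabilities. Next, no active player places mass above $q^M$. Against any $G_i$, the derivative of $\Pi_i$ for $q>q^M$ is $V'(q)G_i(q)-c'(q)\le V'(q)-c'(q)<0$ by concavity of $V$ and the definition of $q^M$. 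Finally, the top of the support must equal $q^M$. If the common top $\bar q$ were below $q^M$, then at $\bar q$ we would have $G_i(\bar q)=1$ and the derivative would be $V'(\bar q)-c'(\bar q)>0$, so moving up would be strictly profitable.

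\textbf{Step 2: indifference.} On $(0,q^M)$, which I will argue lies in the support of every active player's strategy, player $i$ must be indifferent, so $\Pi_i$ is constant there. Since player $i$ can secure $0$ by choosing $q=0$, and continuity at $0$ holds because there are no atoms, the constant equals $0$. Writing $\Pi_i(q)=V(q)G_i(q)-\int_{[0,q)}V\diff G_i-c(q)$ and differentiating (or integrating by parts, which gives $\Pi_i(q)=\int_0^q V'(z)G_i(z)\diff z-c(q)$ using $c(0)=0$) yields $V'(q)G_i(q)=c'(q)$. This is the claimed formula on $(0,q^M)$, and $G_i=1$ for $q\ge q^M$ follows from Step 1. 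Note that $c'(q)<V'(q)$ on $(0,q^M)$ by the definition of $q^M$, so the ratio lies in $(0,1)$, which is consistent. The integration-by-parts form is the cleanest route, because it does not require differentiability of $G_i$ in advance: constancy of $\int_0^q V'G_i-c$ on an interval already forces $V'G_i=c'$ almost everywhere, and monotonicity plus right-continuity then upgrades this to everywhere.

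\textbf{Main obstacle.} The delicate part is showing that every active player's support contains all of $(0,q^M)$, rather than the players using different sub-intervals. I would argue as follows. Suppose some $q_0\in(0,q^M)$ lies outside $\operatorname{supp}H_i$ for an active $i$. Take the maximal open interval $(\ell,r)$ around $q_0$ that avoids this support. If another active player $k$ used points in $(\ell,r)$, then $k$'s winning probability against $i$ would be flat there. Combined with the gap argument from Step 1, and with the fact that at least two players must be active near every point of the union of supports (a lone player on an interval would profit by moving to its bottom), this yields a contradiction. The multiple-active-players hypothesis guarantees that there are at least two players, and it ensures the lower end of the support is $0$: otherwise the lowest bidder wins with probability zero and pays a positive cost. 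I expect this bookkeeping, together with ties at $0$, where an atom at $0$ is permitted, to be the main technical burden.
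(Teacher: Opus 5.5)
\textbf{There is a genuine gap, and it cannot be closed: the statement, read as intended, has counterexamples.}

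Your payoff representation $\Pi_i(q)=\int_0^q V'(y)G_i(y)\diff y-c(q)$ and your indifference step match Steps 1--2 of the paper's proof, and they are correct on any interval contained in $\supp H_i$. The gap is in your Step 1 and in the step you flag as the main obstacle.

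Both arguments import all-pay-auction logic that fails here, because the winner's revenue $V(q)-V(z)$ vanishes at a tie. As your own formula shows, $\Pi_k$ is continuous, indeed Lipschitz, in $q$ even when $G_k$ jumps. So moving from $q_0$ to $q_0+\varepsilon$ against an opponent's atom at $q_0$ gains only $V(q_0+\varepsilon)-V(q_0)=O(\varepsilon)$, not a discrete amount. An opponent's atom creates only a convex kink, $\partial_+\Pi_k(q_0)-\partial_-\Pi_k(q_0)=V'(q_0)[G_k(q_0)-G_k(q_0^-)]>0$. This shows that no \emph{other} firm has $q_0$ in its support (the paper's Step 3 uses exactly this), but it does not exclude the atom itself.

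Likewise, moving down through a gap is not free, because it lowers the prize. Where $G_i$ is flat, $\Pi_i'=V'G_i-c'$ is strictly decreasing, so an isolated support point is compatible with optimality. What optimality does give is $G_i(q^-)=G_i(q)=\rho(q):=c'(q)/V'(q)$ at every support point $q\in(0,q^M)$ of $i$. Nothing forces the active firms' supports to be intervals, to coincide, or to contain $0$ individually. Your claims that a lone firm on an interval would move to its bottom, and that the payoff constant is $0$, are therefore unfounded.

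\textbf{Counterexample.} Take $N=2$, $V(q)=q$, $c(q)=q^2/2$, so $q^M=1$ and $\rho(q)=q$.
Firm $A$ chooses $0$ with probability $1/3$ and $2/3$ with probability $2/3$.
Firm $B$ chooses $1/3$ with probability $2/3$ and $1$ with probability $1/3$.
Then $\Pi_B(q)=\tfrac13 q+\tfrac23(q-\tfrac23)_+-\tfrac12 q^2$ attains its maximum $1/18$ exactly on $\{1/3,1\}$. Also $\Pi_A(q)=\tfrac23(q-\tfrac13)_++\tfrac13(q-1)_+-\tfrac12 q^2$ attains its maximum $0$ exactly on $\{0,2/3\}$. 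Both firms are active, yet $H_A\equiv 1/3\neq\rho$ on $(0,2/3)$.

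The same interlaced two-atom construction works for any primitives of the model:
\begin{enumerate}
\item $A$ has atoms at $0$ and $x_2$, with masses $\rho(x_1)$ and $1-\rho(x_1)$.
\item $B$ has atoms at $x_1$ and $q^M$, with masses $\rho(x_2)$ and $1-\rho(x_2)$.
\item The points $0<x_1<x_2<q^M$ solve the two indifference conditions, which have a solution by an intermediate-value argument.
\end{enumerate}
For instance, with uniform $F$, $g(q)=2\sqrt q$ and $c(q)=q^2/2$, one has $V=g$ on $[0,q^M]=[0,1]$, and the construction gives $x_1\approx 0.40$ and $x_2=\tfrac{16}{9}x_1\approx 0.71$.

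\textbf{The paper's proof has the same leap.} In Step 3 it asserts that ``otherwise, firm $i$ has a strictly profitable deviation,'' and in Step 4 it asserts without justification that the integral cannot be zero. Both claims fail in the example above, and so does the uniqueness claim in Proposition~\ref{prop:competition}.

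\textbf{What can be salvaged.} Your argument does go through under an extra restriction such as symmetry among active firms, with common distribution $H$.
\begin{enumerate}
\item Any atom at $q_0>0$ is then shared by all active firms, so the kink argument excludes it.
\item A gap $(\ell,r)$ in the common support with $\ell>0$ would force the continuous, flat function $H^{n-1}$ to equal both $\rho(\ell)$ and $\rho(r)$, which is impossible since $\rho$ is strictly increasing.
\item A gap with $\ell=0$ would give $\Pi_i(r)-\Pi_i(0)=\int_0^r V'(y)(\rho(r)-\rho(y))\diff y>0$, contradicting optimality of $0$.
\end{enumerate}
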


\begin{proof}
	Fix a mixed-strategy equilibrium $(H_{1},\dots,H_{N})$ with $n\ge2$ active firms. Fix an active firm $i$. Let $ H_{-i}(q):=\prod_{j\in\mathcal N\setminus\{i\}}H_{j}(q)$ be the distribution function of the maximum of the opponents' caps. 
	
	\emph{Step 1: Payoff representation and one-sided derivatives.}
	Using integration by parts and the fact that $V(0)=0$, we obtain
	\begin{equation*}
		\Pi_{i}(q,H_{-i})
		=
		\int_{0}^{q}V'(y)H_{-i}(y)\,dy-c(q).
	\end{equation*}
	Fix $q>0$. Define the left limit $H_{-i}(q^-):=\lim_{t\uparrow q}H_{-i}(t)$. The following
	one-sided derivatives exist:
	\begin{align*}
		&\lim_{\varepsilon\downarrow0}\frac{\Pi_{i}(q+\varepsilon,H_{-i})-\Pi_{i}(q,H_{-i})}{\varepsilon}
		=
		V'(q)H_{-i}(q)-c'(q),\\
		&\lim_{\varepsilon\downarrow0}\frac{\Pi_{i}(q,H_{-i})-\Pi_{i}(q-\varepsilon,H_{-i})}{\varepsilon}
		=
		V'(q)H_{-i}(q^-)-c'(q).
	\end{align*}

	\emph{Step 2: Indifference on the support implies a differential equation}. If firm $i$ mixes, then $\Pi_{i}(\cdot,H_{-i})$ is constant on $\supp H_{i}$. Fix a compact interval $[a,b]\subseteq\supp H_{i}$ with $0<a<b$. Then, by Step 1, for all	$a\le q_{1}<q_{2}\le b$, we have
	\[
	0=\Pi_{i}(q_{2},H_{-i})-\Pi_{i}(q_{1},H_{-i})
	=
	\int_{q_{1}}^{q_{2}}\bigl(V'(y)H_{-i}(y)-c'(y)\bigr)\,dy,
	\]
	If there existed an open subinterval $(q_{1},q_{2})\subseteq(a,b)$ on which $V'(y)H_{-i}(y)-c'(y)>0$ everywhere, then the integral would be strictly positive; similarly if $V'(y)H_{-i}(y)-c'(y)<0$ everywhere, the integral would be strictly negative. Therefore,
	\begin{equation}\label{eq:ODE-ae}
		V'(y)H_{-i}(y)=c'(y)\ \text{for a.e.~}y\in(a,b).
	\end{equation}

	\emph{Step 3: Exclusion of mass points on $(0,\infty)$.}
	Suppose $H_{i}$ has an atom at some $q_{0}>0$, that is, $H_{i}(q_{0})-H_{i}(q_{0}^-)>0$. For every sufficiently small $\varepsilon>0$, using step 1, we have
	\[
	V_i'(q_0, H_{-i})H_{-i}(q_0) - c'(q_0) \le 0 \le 	V_i'(q_0, H_{-i})H_{-i}(q_0^-) - c'(q_0). 
	\]
	Hence, no firm other than $i$ has an atom at $q_0$. Moreover, it is the case that there exists an active firm $j$ such that $q_0\in \supp H_j$---otherwise, firm $i$ has a strictly profitable deviation. By the one-sided derivatives in step 1, $H_{-j}$ does not have an atom at $q_0$. Therefore, $H_i$ does not exhibit an atom at $q_0$.

	\emph{Step 4. The support of an active firm has no gaps.}
	Suppose, toward a contradiction, that $\supp H_i\cap(0,\infty)$ is not an interval.
	Then there exist $0<q_1<q_2<q_3$ such that
	\[
	q_1,q_3\in\supp H_i
	\quad\text{and}\quad
	[q_1,q_3]\cap\supp H_i=\{q_1,q_3\}.
	\]
	By indifference, $\Pi_i(q_1)=\Pi_i(q_3)$, and using Step 1,
	\begin{equation}\label{eq:gap}
		0
		=
		\int_{q_1}^{q_3} V'(y)H_{-i}(y)-c'(y) \diff y .
	\end{equation}
	Therefore the integral in \eqref{eq:gap} cannot be zero, a contradiction.
	Thus $\supp H_i\cap(0,\infty)$ is an interval.

	\emph{Conclusion.} In any equilibrium with multiple active players, the support of $H_{i}$ is an interval and $H_{i}$ is continuous on $\mathbb{R}_{++}$. As an implication, $\max\supp H_{i}=q^{M}$. We find the mass at $0$ by solving the differential equation downward. The mass is null, because $c'(0)=0$. Moreover, we have that $0=\min\supp H_{i}$, because $c'(q)>0$ for $q>0$. 	
	Note that our argument also establishes that the equilibrium $(H_1, \dots, H_n)$ is symmetric among active firms, and exhibits continuous distributions for active firms.
\end{proof}

 	\begin{lemma}[Necessary and sufficient conditions for mixed-strategy
 		equilibria]\label{app:prop:mixedequilibria} If $(H_{1},\dots,H_{N})$
 		is a Nash equilibrium in mixed strategies of the production game, then one, and only one, of the
 		following statements holds. 
 		\begin{enumerate}
 			\item There exists a number of active players $n\in\mathcal{N}\setminus\{1\}$
 			such that: for a set of active players $I\subseteq\mathcal{N}$ with
 			$|I|=n$ we have 
 			\begin{align*}
 				H_{i}\colon q\mapsto\begin{cases}
 					\left(\frac{c'(q)}{V'(q)}\right)^{\frac{1}{n-1}}, & \text{if}\ q\in[0,q^{M}],\\
 					1, & \text{if}\ q>q^{M},
 				\end{cases}
 			\end{align*}
 			for all $i\in I$. 
 			\item There exists an active player $i$ such that: $H_{i}\colon q\mapsto[q\ge q^{M}]$
 			and $H_{j}\colon q\to0$ for all $j\in\mathcal{N}\setminus\{i\}$. 
 		\end{enumerate}
 		Moreover, the strategy profiles described in both parts constitute
 		Nash equilibria in mixed strategies that are symmetric among active
 		players. \end{lemma}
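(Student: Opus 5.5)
We split the proof into two parts: first, every equilibrium has one of the two listed forms; second, both forms are in fact equilibria. Throughout we use the payoff representation from Step 1 of Lemma \ref{app:lem:mixedequilibrianecessary},
\[
\Pi_i(q,H_{-i})=\int_0^q V'(y)H_{-i}(y)\diff y-c(q).
\]

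\emph{Necessity and mutual exclusivity.} Fix an equilibrium $(H_1,\dots,H_N)$. If it has at most one active player, then the production game behaves as in the proof of Lemma \ref{app:prop:pureequilibria}.
\begin{itemize}
\item The active firm faces $H_{-i}\equiv 1$, so it maximizes $V(q)-c(q)$ uniquely at $q^M$. This forces $H_i=[q\ge q^M]$.
\item Every other firm's payoff from any $q>0$ is $V(q)-V(q^M)-c(q)$ when $q>q^M$ and $-c(q)$ otherwise. Both are negative, so each other firm puts mass one at $0$.
\item At least one firm must be active, since otherwise any firm gains by deviating to $q^M$.
\end{itemize}
This gives statement 2.

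If there are multiple active players, we invoke Lemma \ref{app:lem:mixedequilibrianecessary}. Every active $i$ then satisfies $\prod_{j\ne i}H_j(q)=c'(q)/V'(q)$ on $(0,q^M)$, and the Lemma's proof shows that active firms play symmetric continuous distributions. Let $I$ be the active set with $|I|=n\ge2$, and let $H$ be the common distribution. Inactive firms are degenerate at $0$, so they contribute the factor $1$. Hence $H^{n-1}=c'/V'$, that is, $H=(c'/V')^{1/(n-1)}$ on $[0,q^M]$. By the Lemma, $H=1$ above $q^M$. This gives statement 1. The two statements are mutually exclusive because they differ in the number of active players.

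\emph{Sufficiency.} First, the candidate $H=(c'/V')^{1/(n-1)}$ must be a legitimate distribution function.
\begin{itemize}
\item $c'/V'$ is nondecreasing: $c'$ is increasing, and $V'$ is positive and nonincreasing on $(0,q^M)$ by concavity of $V$.
\item It equals $0$ at $0$ because $c'(0)=0$.
\item It equals $1$ at $q^M$ by the first-order condition $V'(q^M)=c'(q^M)$.
\end{itemize}

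Second, no active firm wants to deviate. Each active firm faces $H_{-i}=H^{n-1}=c'/V'$ on $[0,q^M]$, so the integrand $V'H_{-i}-c'$ vanishes there. The payoff is therefore constant and equal to $\Pi_i(0)=0$ on the support. For $q>q^M$ we have $H_{-i}=1$, and the integrand $V'-c'$ is negative, so payoffs fall.

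Third, no inactive firm wants to deviate. An inactive firm faces $H^n\le H^{n-1}$. Its integrand $V'H^n-c'$ is therefore $\le0$ everywhere, so any $q>0$ yields at most $0$.

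Finally, the pricing stage is handled by Lemma \ref{app:lem:uniquerevenues}, which guarantees that the reduced payoffs $\overline R_i$ are the correct continuation values. The statement-2 profile was already verified under necessity.

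\emph{Main obstacle.} The delicate step is justifying the payoff formula for $\Pi_i$ with mixed opponents. Revenue against opponent maximum $Y$ is $(V(q)-V(Y))_+$, and ties at equal caps need care. Continuity of the candidate $H$ makes ties a null event, which resolves this. In the necessity direction, ruling out asymmetric active firms relies on the symmetry remark at the end of the proof of Lemma \ref{app:lem:mixedequilibrianecessary}. If that remark proves too thin, I would argue directly: take two active firms $i,k$. On the common support, the identities $H_k\prod_{j\ne i,k}H_j=c'/V'=H_i\prod_{j\ne i,k}H_j$ give $H_i=H_k$ wherever the product is positive, and that product is positive on $(0,q^M]$.
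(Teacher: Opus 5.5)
Your proposal is correct and follows the paper's route: necessity comes from Lemma \ref{app:prop:pureequilibria} and Lemma \ref{app:lem:mixedequilibrianecessary}, and sufficiency from checking that $V'H^{n-1}-c'$ vanishes on $[0,q^{M}]$ while $V'<c'$ beyond $q^{M}$ (the definition of $q^{M}$). You also fill in steps the paper leaves implicit: the case of a single active firm that might mix (which uniqueness of $q^{M}$ rules out), the validity of $(c'/V')^{1/(n-1)}$ as a distribution function, inactive firms' deviations via $H^{n}\le H^{n-1}$, and a direct derivation of symmetry from the product identities.
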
 
 	\begin{proof}
 		Given the preceding results, it is left to verify that the strategy profile described in part 1
 		constitutes a Nash equilibrium in mixed strategies. It follows from
 		the definition of ${q}^{M}$ and by a similar argument as
 		the one used to characterize the pure-strategy equilibria. 
 	\end{proof}
 	
	\paragraph{Proof of Proposition \ref{prop:competition}}

 	\begin{proposition}\label{app:prop:competition} For all $n\le N$,
 		there exists an equilibrium with $n$ active firms. Moreover, every competitive $n$ equilibrium induces the random allocation given
 		by $\bm{q}[\hat{x},\hat{y}]$, in which the random variables $\hat{x}$
 		and $\hat{y}$ are, respectively, the first- and second-order statistics
 		of the collection of $n$ i.i.d.~random variables, each with distribution
 		function $H_{n}$ given by 
 		\begin{align}
 			H_{n}\colon q\mapsto\begin{cases}
 				\left(\frac{c'(q)}{V'(q)}\right)^{\frac{1}{n-1}}, & \text{if}\ q\in[0,q^{M}],\\
 				1, & \text{if}\ q\in(q^{M},\infty).
 			\end{cases}\label{eq:CDF}
 		\end{align}
 	\end{proposition}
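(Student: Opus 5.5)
The proof assembles the auxiliary results on the pricing and production games into a statement about subgame-perfect equilibria of the two-stage game.

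\emph{Step 1: the pricing stage.} Every subgame-perfect equilibrium must prescribe, after every cap profile $(q_1,\dots,q_N)$, a $(q_1,\dots,q_N)$ equilibrium of $\Gamma(q_1,\dots,q_N)$. I will first argue that such a pricing equilibrium exists. The construction is explicit. The firm $i^*$ with the lowest index among those producing $x$ posts the tariff that implements $\bm{q}^C[x,y]$: qualities in $[0,y]$ are priced at zero, and qualities above $y$ carry the transfers of the solution to $\mathcal{P}^{x,y}$ from the auxiliary $x$-$y$ second-best lemma, shifted so that the type-$0$ transfer vanishes. Every other firm prices all its feasible qualities at zero.

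It then suffices to check no profitable deviation. A non-top firm cannot sell any quality above its own cap. For qualities at most $y$ it would face a zero price from a competitor. The top firm's best response to zero prices on $[0,y]$ is exactly the $x$-$y$ second-best problem, whose solution is given by that lemma.

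Lemma \ref{app:lem:uniquerevenues} then shows that all pricing equilibria give revenues $\overline{R}_i=(V(q_i)-V(\max_{j\neq i}q_j))_+$ and demand $\max\{\min\{\bm{\gamma}(\theta),x\},y\}$. So the induced allocation after any cap profile is $\bm{q}^C[x,y]$ almost everywhere, independently of which pricing equilibrium is selected.

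\emph{Step 2: reduction to the production game.} Given Step 1, a strategy profile is subgame perfect if and only if its first-stage component is a Nash equilibrium of the production game $\Gamma$ with payoffs $\overline{R}_i-c$. Lemma \ref{app:prop:mixedequilibria} supplies, for each $n\in\{2,\dots,N\}$, a profile in which the $n$ active firms mix according to $H_n$ and the others play $0$. For $n=1$, Lemma \ref{app:prop:pureequilibria} gives the profile with one firm at $q^M$. Combining either profile with the pricing strategies of Step 1 yields an equilibrium with $n$ active firms. This proves existence.

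\emph{Step 3: the induced allocation.} By the necessity part of Lemma \ref{app:prop:mixedequilibria}, in any competitive equilibrium with $n$ active firms each active firm plays $H_n$ independently, and inactive firms play $0$ almost surely. The realized top and second-highest caps $(X,Y)$ are therefore the first and second order statistics of $n$ i.i.d.\ draws from $H_n$; inactive firms at $0$ do not affect these. Step 1 then gives the allocation $\bm{q}^C[X,Y]$.

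\emph{Main obstacle.} The delicate point is that $H_n$ is a genuine distribution function on $[0,q^M]$. One needs $c'/V'$ to be nondecreasing, to tend to $0$ at $0$ (using $c'(0)=0$ and $V'(0^+)>0$), and to equal $1$ at $q^M$. The last property follows from the first-order characterization $V'(q^M)=c'(q^M)$. Monotonicity follows because $V$ is concave and $c$ is strictly convex, so $c'$ increases while $V'$ weakly decreases.

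A second subtlety is that Lemma \ref{app:lem:uniquerevenues} fixes the demand of the top firm's customers but only pins down prices on $[0,y]$ as zero. Ties at the top when $x$ is produced by several firms occur with probability zero because $H_n$ is atomless. In the pure equilibrium such ties do not arise, since only one firm is active.
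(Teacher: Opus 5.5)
Your proposal is correct and follows the paper's own route: existence from Lemma~\ref{app:prop:mixedequilibria} (with Lemma~\ref{app:prop:pureequilibria} for $n=1$), the induced allocation from Lemma~\ref{app:lem:uniquerevenues}, and the order statistics from Lemma~\ref{app:lem:mixedequilibrianecessary}; it also usefully makes explicit two points the paper leaves implicit, namely that every subgame has a pricing equilibrium and that $H_n$ is a genuine distribution function. One small slip in your pricing construction is that the top firm's $\mathcal{P}^{x,y}$ transfers should be lowered by $u(y,0)=g(y)$, not by type $0$'s entire transfer: when $\bm{\beta}(0)>y$, type $0$ buys $\min\{\bm{\beta}(0),x\}>y$ and must still pay $g(\min\{\bm{\beta}(0),x\})-g(y)>0$, which is what gives the top firm revenue $V^{x,y}-g(y)=V(x)-V(y)$ and makes it a best response.
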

 	\begin{proof}
 		This proof uses previously established results (Lemma \ref{app:prop:mixedequilibria},
 		Lemma \ref{app:lem:uniquerevenues}, and Lemma \ref{app:lem:mixedequilibrianecessary})
 		and the definitions in Section \ref{app:sec:game:definitions}.
 		
 		\emph{Step 1: Existence. }The first part of the result (i.e., existence)
 		is established in Lemma \ref{app:prop:mixedequilibria}.
 		
 		\emph{Step 2: Quality allocation. }We establish that every allocation
 		that is induced in equilibrium takes the stated form. Fix a value
 		of $\hat{x}:=\max\{\overline{q}_{1},\dots,\overline{q}_{N}\}$ and
 		$\hat{y}:=\max\{\overline{q}_{1},\dots,\overline{q}_{N}\}\setminus\{\hat{x}\}$
 		in $Q$, given the subgame starting after the choice of caps $(\overline{q}_{1},\dots,\overline{q}_{N})$.
 		The result follows from the last part of Lemma \ref{app:lem:uniquerevenues}.
 		
 		\emph{Step 3: First- and second-order statistics. }It is left to establish
 		that the stated distribution of $\hat{x}$ and $\hat{y}$ is correctly
 		induced by the equilibrium distribution of stage-1 caps $(\overline{q}_{1},\dots,\overline{q}_{N})$.
 		This result is established by Lemma \ref{app:lem:mixedequilibrianecessary}.
 		In particular, as a result of the definitions in Section \ref{app:sec:game:definitions},
 		the profile of strategies $((H_{1},P_{1}),\dots,(H_{N},P_{N}))$ is
 		a competitive equilibrium if, and only if, the profile
 		of distribution functions $(H_{1},\dots,H_{N})$ is a mixed-strategy
 		Nash equilibrium of the production game with multiple active players
 		that is symmetric among active players.
% 		\footnote{We note that, by Lemma \ref{app:prop:mixedequilibria}, for
% 			all $n>2$ there exists a mixed-strategy Nash equilibrium of the production
% 			game with multiple active players that is symmetric among active players
% 			such that: for a set $I\subseteq\mathcal{N}$ with $|I|=n$, 		for all $i\in I$, we have
% 			\begin{align*}
% 				H_{i}\colon q\mapsto\begin{cases}
% 					\left(\frac{c'(q)}{V'(q)}\right)^{\frac{1}{n-1}}, & \text{if}\ q\in[0,{q}^{M}],\\
% 					1, & \text{if}\ q>{q}^{M},
% 				\end{cases}
% 			\end{align*}
% 	Moreover, if we have a mixed-strategy Nash equilibrium
% 			of the production game with multiple active players that is symmetric
% 			among active players, then there exists $n>2$ such that: for a set
% 			$I\subseteq\mathcal{N}$ with $|I|=n$, 		for all $i\in I$, we have 
% 			\begin{align*}
% 				H_{i}\colon q\mapsto\begin{cases}
% 					\left(\frac{c'(q)}{V'(q)}\right)^{\frac{1}{n-1}}, & \text{if}\ q\in[0,{q}^{M}],\\
% 					1, & \text{if}\ q>{q}^{M}.
% 				\end{cases}
% 			\end{align*}
% 		}
 	\end{proof}
 	\begin{remark} \label{rem:monopolyeq} The monopoly allocation arises
 		as the unique outcome of the game under pure strategies. First, we
 		argue that firm $i$ producing $q^{M}$ and all other firms staying
 		idle constitutes an equilibrium. Firm $i$ is a monopolist when her
 		opponents are idle, so her best response is $q^{M}$ due to Proposition
 		\ref{app:prop:mon:alloc}. If firm $i$ produces $q^{M}$, then firm
 		$j$ can invest in qualities beyond $q^{M}$. Let's verify that becoming
 		a monopolist for better qualities than $q^{M}$ is not a best response
 		when an opponent produces $q^{M}$. If firm $j$ acquires quality
 		$q>q^{M}$, her profits can be expressed as $\int_{[q^{M},q]}V'(\tilde{q})-c'(\tilde{q})\diff\tilde{q}-c(q^{M})$.
 		So, $j$ is better off staying idle than investing in quality $q>{q}^{M}$.
 		
 		Let's verify that there are no competitive equilibria in pure strategies.
 		Specifically, we claim that, in all equilibria in pure strategies,
 		one firm produces $q^{M}$ and all other firms stay idle. First, in
 		any equilibrium, the firm producing the lower quality earns no revenues
 		and can save on acquisition costs by becoming idle. All firms staying
 		idle does not occur in equilibrium as well, since any firm $i$ prefers
 		to be a monopolist (this observation follows from Proposition \ref{app:prop:mon:alloc}.)
 		Thus, the only pure-strategy equilibria are the monopoly ones. \end{remark}
 	
 	%\clearpage 
 	\section{Supplementary material}\label{app:sec:extra}
 	
 	\subsection{Auxiliary results}
 	
 	\begin{lemma}\label{app:lem:general} Let $(y,x)\in Q^{2}$, $K\colon Q\times\Theta\to\mathbb{R}$
 		be continuously differentiable, and $K(\cdot,\theta)$ be concave for every
 		$\theta\in\Theta$. Then, $\bm{q}\in\bm{Q}$ solves 
 		\begin{align*}
 			\mathcal{Q}:\max_{\bm{q}\in\bm{Q}}\int_{\Theta}K(\bm{q}(\theta),\theta)\diff F(\theta)\ \text{subject to:}\ y\le\bm{q}(\theta)\le x\ \text{for all}\ \theta\in\Theta,
 		\end{align*}
 		if and only if: there exists an allocation $\bm{\gamma}$ such that
 		$\bm{\gamma}(\theta)\in\argmax_{q\in Q}K(q,\theta)$ almost everywhere
 		and $\bm{q}(\theta)=\max\{\min\{\bm{\gamma}(\theta),x\},y\}$ for
 		all $\theta$. \end{lemma}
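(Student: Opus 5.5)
The problem is separable across types once the box constraint is imposed pointwise, so I would reduce it to a family of one-dimensional problems and then handle measurability and the almost-everywhere qualifier. For each $\theta$, consider $\max_{q\in[y,x]}K(q,\theta)$. Since $K(\cdot,\theta)$ is concave and continuous, its maximizers over $[y,x]$ form a nonempty closed interval. The first step is the scalar claim: if $\gamma\in\argmax_{q\in Q}K(q,\theta)$, then the clipped value $\max\{\min\{\gamma,x\},y\}$ maximizes $K(\cdot,\theta)$ over $[y,x]$. This follows from concavity. If $\gamma>x$, then $K(\cdot,\theta)$ is nondecreasing on $[0,\gamma]$, so $x$ is optimal on $[y,x]$. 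The case $\gamma<y$ is symmetric, and if $\gamma\in[y,x]$ there is nothing to prove.

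For the converse, every maximizer $q^\ast$ over $[y,x]$ is a clipped global maximizer. If $q^\ast\in(y,x)$, it is a local, hence global, maximizer by concavity, so take $\gamma=q^\ast$. If $q^\ast=x$, concavity and optimality give $K_1(x,\theta)\ge 0$ (or $x=\overline q$). Then either the global argmax contains a point $\ge x$, which we take as $\gamma$, or $x$ itself is a global maximizer. The case $q^\ast=y$ is symmetric. So the pointwise maximizer sets over $[y,x]$ coincide with the clipped images of $\argmax_Q K(\cdot,\theta)$.

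Next I pass from pointwise to integral optimality. Sufficiency: if $\bm q=\max\{\min\{\bm\gamma,x\},y\}$ with $\bm\gamma(\theta)$ a global maximizer a.e., then $K(\bm q(\theta),\theta)\ge K(\tilde{\bm q}(\theta),\theta)$ a.e. for any feasible $\tilde{\bm q}$, and integrating gives optimality. Necessity: take a measurable pointwise maximizer $\hat{\bm q}$, for instance the clipped largest selection of $\argmax_Q K(\cdot,\theta)$, which is measurable by a measurable-maximum theorem since $K$ is continuous and $Q$ is compact. If $\bm q$ is optimal, then $K(\hat{\bm q}(\theta),\theta)-K(\bm q(\theta),\theta)\ge0$ everywhere with zero integral, so $\bm q(\theta)$ is a pointwise maximizer over $[y,x]$ for $F$-a.e. $\theta$. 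By the converse scalar claim, on that full-measure set I choose $\bm\gamma(\theta)$ as the associated global maximizer. On the null exceptional set I define $\bm\gamma(\theta)=\bm q(\theta)$ when $\bm q(\theta)\in(y,x)$, and a large or small value otherwise, so that the clipping identity holds for every $\theta$. The argmax requirement on $\bm\gamma$ is only needed almost everywhere, which is why the statement is phrased that way.

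I expect the main obstacle to be measurability of the constructed $\bm\gamma$ when it is chosen as ``some global maximizer $\ge x$.'' I would handle it by setting $\bm\gamma(\theta)=\bm q(\theta)$ whenever $\bm q(\theta)\in(y,x)$. Otherwise I take the largest global maximizer when $\bm q(\theta)=x$ and the smallest when $\bm q(\theta)=y$. Both extremal selections are measurable because $K$ is jointly continuous.
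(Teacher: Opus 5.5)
Your proposal is correct and follows essentially the same route as the paper: a pointwise comparison, using concavity of $K(\cdot,\theta)$, on the sets where the clipped maximizer equals $x$, equals $y$, or is interior, followed by integration. Your necessity argument is a cleaner, direct version of the paper's proof by contradiction: the integrand is nonnegative with zero integral, the scalar converse shows that every maximizer over $[y,x]$ is a clipped global maximizer, and you choose $\bm\gamma$ measurably. It supplies exactly the step the paper only asserts, namely why $\Delta>0$ when $\bm q^{*}$ is not of the clipped form.
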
 
 	\begin{proof}
 		Fix $(y,x)\in Q^{2}$. We establish the ``if'' direction first.
 		
 		Fix $\bm{q}'\in\bm{Q}$ such that $y\le\bm{q}'(\theta)\le x$ for
 		all $\theta$, and $\bm{\gamma}\in\bm{Q}$ with $\bm{\gamma}(\theta)\in\argmax_{q\in Q}K(q,\theta)$
 		for a.e.~$\theta$. Define $\Delta=\int_{\Theta}K(\bm{q}(\theta),\theta)-K(\bm{q}'(\theta),\theta)\diff F(\theta)$
 		for $\bm{q}(\theta)=\max\{\min\{\bm{\gamma}(\theta),x\},y\}$, and
 		$S=\{\theta\in\Theta\mid\bm{q}(\theta)=x\}$, $T=\{\theta\in\Theta\mid \bm{q}(\theta)=y\}$.
 		We note that 
 		\begin{multline*}
 			\Delta=\int_{T}K(y,\theta)-K(\bm{q}'(\theta),\theta)\diff F(\theta)+\int_{\Theta\setminus(S\cup T)}K(\bm{\gamma}(\theta),\theta)-K(\bm{q}'(\theta),\theta)\diff F(\theta)\\
 			+\int_{S}K(x,\theta)-K(\bm{q}'(\theta),\theta)\diff F(\theta).
 		\end{multline*}
 		By definition of $\bm{\gamma}$, we have $K(\bm{\gamma}(\theta),\theta)-K(\bm{q}'(\theta),\theta)\ge0$
 		a.e.~on $\Theta\setminus(S\cup T)$. Additionally, we have $K(x,\theta)-K(\bm{q}'(\theta),\theta)=\int_{[\bm{q}'(\theta),x]}K_{1}(\tilde{q},\theta)\diff\tilde{q}$,
 		which is nonnegative a.e.~on $S$ by definition of $\bm{\gamma}$
 		and concavity of $K(\cdot,\theta)$; similarly, we have $K(y,\theta)-K(\bm{q}'(\theta),\theta)=-\int_{[y,\bm{q}'(\theta)]}K_{1}(\tilde{q},\theta)\diff\tilde{q}$,
 		which is nonnegative a.e.~on $T$ by definition of $\bm{\gamma}$
 		and concavity of $K(\cdot,\theta)$. Hence, we have $\Delta\ge0$,
 		so $\bm{q}$ solves $\mathcal{Q}$.
 		
 		It remains to establish the ``only if'' direction. Fix a solution
 		$\bm{q}^{*}$ to $\mathcal{Q}$, and suppose that there exists $(\theta',\theta'')\subseteq\Theta$
 		such that: (for every ${\bm{\gamma}}\in\bm{Q}$ with $\bm{\gamma}(\theta)\in\argmax_{q\in Q}K(q,\theta)$
 		a.e.,~we have $\bm{q}'(\theta)\ne\bm{q}^{*}(\theta)$ on $(\theta',\theta'')$,
 		for $\bm{q}'\colon\theta\mapsto\max\{\min\{\bm{\gamma}(\theta),x\},y\}$).
 		Define $S=\{\theta\in(\theta',\theta''):\bm{q}'(\theta)=x\}$, $T=\{\theta\in(\theta',\theta''):\bm{q}'(\theta)=y\}$,
 		and $\Delta=\int_{(\theta',\theta'')}K(\bm{q}'(\theta),\theta)-K(\bm{q}^{*}(\theta),\theta)\diff F(\theta)$;
 		it holds that 
 		\begin{multline*}
 			\Delta=\int_{T}\int_{[\bm{q}^{*}(\theta),y]}K_{1}(\tilde{q},\theta)\diff\tilde{q}\diff F(\theta)+\int_{(\theta',\theta'')\setminus(S\cup T)}K(\bm{q}'(\theta),\theta)-K(\bm{q}^{*}(\theta),\theta)\diff F(\theta)\\
 			+\int_{S}\int_{[\bm{q}^{*}(\theta),x]}K_{1}(\tilde{q},\theta)\diff\tilde{q}\diff F(\theta)
 		\end{multline*}
 		By definition of $\bm{\gamma}$, we have $K(\bm{q}'(\theta),\theta)-K(\bm{q}^{*}(\theta),\theta)\ge0$
 		a.e.~on $(\theta',\theta'')\setminus(S\cup T)$, Additionally, we
 		have: $\int_{[\bm{q}^{*}(\theta),x]}K_{1}(\tilde{q},\theta)\diff\tilde{q}\ge0$
 		a.e.~on $S$ by definition of $\bm{q}'$, ${\bm{\gamma}}$, and concavity
 		of $K(\cdot,\theta)$; similarly, we have: $\int_{[\bm{q}^{*}(\theta),y]}K_{1}(\tilde{q},\theta)\diff\tilde{q}\ge0$
 		a.e.~on $T$ by definition of $\bm{q}'$, ${\bm{\gamma}}$, and concavity
 		of $K(\cdot,\theta)$. It follows that $\Delta\ge0$. To show that
 		$\Delta>0$, we use the fact that $\bm{q}^{*}$ differs from $\theta\mapsto\min\{{\bm{\gamma}}(\theta),q\}$
 		on $(\theta',\theta'')$, for every allocation ${\bm{\gamma}}$ with
 		$\bm{\gamma}(\theta)\in\argmax_{q\in Q}K(q,\theta)$ almost everywhere.
 		%We conclude that there exists an allocation $\bm \gamma^* \colon \Theta\to Q$ with $\bm \gamma ^*(\theta) \in \argmax_{\hat  q \in  Q}K(\hat q, \theta)$ a.e.~such that: $\bm q^*(\theta)=\min\{ \bm \gamma^* (\theta), q\}$ on $(0, 1)$.
 		
 	\end{proof}
 	\begin{lemma}\label{app:lem:pooling} The allocation $\bm{q}^{*}\colon\theta\mapsto\min\{\overline{\bm{\beta}}(\theta),q\}$
 		has the ``pooling property'' \citep*{toikka_ironing_2011}. \end{lemma}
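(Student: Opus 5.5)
The plan is to verify the pooling property directly from its definition in \citet*{toikka_ironing_2011}, using two facts. First, $\overline{\bm{\beta}}$ is the pointwise maximizer of the \emph{ironed} virtual surplus. Second, capping a function at the constant $q$ cannot break its constancy on any interval.

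I first recall the objects. Toikka's pooling property requires the following: whenever $\theta$ lies in an open interval $(\theta',\theta'')$ on which the convexified integrated marginal virtual surplus at level $\bm{q}^{*}(\theta)$ lies strictly below the unconvexified one (an ironing interval at that quality level), the allocation $\bm{q}^{*}$ is constant on $(\theta',\theta'')$. Working in quantile space, the convexified function $\overline{\Omega}(\hat q,\cdot)$ is affine on each such interval. Hence the ironed marginal virtual surplus $\overline{J}_{1}(\hat q,\theta)$ does not depend on $\theta$ there.

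The first step shows that $\overline{\bm{\beta}}$ is constant on every ironing interval at level $\overline{\bm{\beta}}(\theta)$. Since $\overline{J}_{1}(\hat q,\cdot)$ is constant on the interval, the first-order characterization of the largest maximizer of the concave function $\overline{J}(\cdot,\theta)$ picks out the same $\hat q$ for every $\theta$ in the interval. Choosing the largest selection rules out any a.e.\ ambiguity.

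The second step transfers constancy to $\bm{q}^{*}$. I split into two cases according to how $\overline{\bm{\beta}}$ compares with $q$ on the interval.
\begin{itemize}
\item[(i)] If $\overline{\bm{\beta}}\ge q$ on the interval, then $\bm{q}^{*}$ equals $q$ there and is trivially constant.
\item[(ii)] If $\overline{\bm{\beta}}<q$ on the interval, then $\bm{q}^{*}=\overline{\bm{\beta}}$ there, and step one applies.
\end{itemize}
Because $\overline{\bm{\beta}}$ is already constant on the relevant interval, these two cases are exhaustive. Monotonicity of $\overline{\bm{\beta}}$, and hence of $\bm{q}^{*}$, also follows from increasing differences of $\overline{J}$.

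The main obstacle is a level-matching issue. In Toikka's general setting the ironing intervals depend on the quality level $\hat q$, and the pooling property is stated at level $\bm{q}^{*}(\theta)$, not at level $\overline{\bm{\beta}}(\theta)$. When the cap binds, $\bm{q}^{*}(\theta)=q<\overline{\bm{\beta}}(\theta)$, so I must handle ironing intervals at level $q$ that cross the bunching threshold $b(q)$. I would try to show that on such an interval $\overline{J}_{1}(q,\cdot)$ is constant. Then the sign of $\overline{J}_{1}(q,\theta)$, which determines whether $\overline{\bm{\beta}}(\theta)\ge q$, is the same throughout the interval. It follows that either the whole interval is capped or the whole interval lies below the cap, and in the latter case step one applies at the common level.
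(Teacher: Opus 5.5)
Your proof is correct, and its case analysis is the paper's. Your cases (i) and (ii), made exhaustive by the constancy of $\overline{\bm\beta}$ on an ironing interval, are exactly the paper's three cases: the interval lies entirely below $T^{-}(q)$, entirely above it, or straddles it, and the last is ruled out by the pooling property of $\overline{\bm\beta}$.

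Where you differ is in the level issue you single out as the main obstacle.
\begin{itemize}
\item The paper removes it at the outset. Since $J(q,\theta)=g(q)+\varphi(\theta)q$ is separable in Toikka's sense, the integrated marginal virtual surplus at level $q$ is $g'(q)r+\int_{0}^{r}\varphi(F^{-1}(s))\diff s$. This changes with $q$ only by a term affine in the quantile $r$, so the ironing intervals are the same at every level. The pooling property of $\overline{\bm\beta}$ is then simply taken from Toikka.
\item You instead derive the pooling property of $\overline{\bm\beta}$ from the affinity of the convex envelope. At capped types you handle possibly level-dependent intervals through the constant sign of $\overline J_{1}(q,\cdot)$ on a level-$q$ ironing interval.
\end{itemize}
Your argument is valid and does not use separability, so it is more self-contained and somewhat more general. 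In the present section, however, the obstacle is vacuous, and the paper's route is shorter.

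One justification needs repair. In step one, constancy of $\overline J_{1}(\hat q,\cdot)$ at the single level $\hat q$ only shows that $\hat q$ is \emph{a} maximizer of $\overline J(\cdot,\theta)$ throughout the interval. That it is the largest one follows from strict concavity of $q\mapsto\overline J(q,\theta)=g(q)+\overline k(\theta)q$ (Remark~\ref{app:rem:ironing}), or from level-independence of the interval. It does not follow from choosing the largest selection: if $\overline J(\cdot,\theta)$ had a flat top for $\theta$ above $\theta_{0}$, the largest selection would make $\overline{\bm\beta}$ jump upward inside the interval.

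Also, in your last paragraph, types strictly below the cap need no level-$q$ analysis. Their relevant level is $\overline{\bm\beta}(\theta)$, which step one already covers.
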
 
 	\begin{proof}
 		First, we note that $G(F(\theta),q)$ and $H(F(\theta),q)$ do not
 		depend on $q$, so we omit this argument (Remark \ref{app:rem:ironing}),
 		and that $J$ is ``separable'' \citep[Definition 3.1]{toikka_ironing_2011}.
 		Fix $q\in Q$, $(\theta',\theta'')\subseteq\Theta$, and suppose $G(F(\theta))<H(F(\theta))$
 		for all $\theta\in(\theta',\theta'')$. There are three cases. First,
 		if $\theta''\le T^{-}(q)$, then $\bm{q}^{*}$ is constant on $(\theta',\theta'')$
 		by the pooling property of $\overline{\bm{\beta}}$ \citep[Definition 3.5, Theorem 3.7, Corollary 3.8]{toikka_ironing_2011}.
 		Second, if $\theta'\ge T^{-}(q)$, then $\bm{q}^{*}(\theta)=q$ for
 		all $\theta\in(\theta',\theta'')$. Lastly, we consider the case in
 		which $\theta'<T^{-}(q)<\theta''$. In this case, $\overline{\bm{\beta}}(\theta)<q$
 		for $\theta\in(\theta',a')$ and $\overline{\bm{\beta}}(\theta)\ge q$
 		for $\theta\in(a'',\theta'')$, for some $(a',a'')\in(\theta',a'')\times(a',\theta'')$,
 		which contradicts the pooling property of $\overline{\bm{\beta}}$.
 		So, there exists no such interval, and $\bm{q}^{*}$ has the pooling
 		property. 
 	\end{proof}
 	
 	\subsection{Non-regular distribution}
 	
 	\label{app:sec:ironing} In this section, we maintain the additional
 	assumption that $u(q,\theta)=g(q)+\theta q$, for a strictly concave
 	and increasing $g\colon\mathbb{R}\to\mathbb{R}$, and note that $a(q)=0$.
 	We consider the cumulative virtual value $H(q,\cdot)\colon\theta\mapsto\int_{[0,\theta]}J(q,F^{-1}(\tilde{\theta}))\diff\tilde{\theta}$
 	(in the quantile space,) its lower convex envelope, $\operatorname{conv}H(q,\cdot)\colon\theta\mapsto\min\{\lambda H(q,\theta_{1})+(1-\lambda)H(q,\theta_{2})\mid(\theta_{1},\theta_{2},\lambda)\in[0,1]^{3}\ \text{and}\ \lambda\theta_{1}+(1-\lambda)\theta_{2}=\theta\}$,
 	and the right derivative of $\vex H(q,\cdot)$, $g(q,\cdot)$. The
 	\emph{ironed virtual surplus maximizer} is the largest selection 
 	from $\theta\mapsto\argmax_{q\in Q}\overline{J}(q,\theta)$, in which
 	$\overline{J}(q,\theta)\coloneqq J(0,\theta)+\int_{[0,q]}g(\tilde{q},F(\theta))\diff\tilde{q}$; we denote the ironed virtual surplus maximizer by $\overline{\bm{\beta}}$, and note that $\overline{\bm{\beta}}$ is nondecreasing \citep*{topkis_minimizing_1978,toikka_ironing_2011}.
 	We define the left-continuous inverse of $\overline{\bm{\beta}}$
 	by: $T^{-}(q)\coloneqq\inf\{\theta\in\Theta\mid \overline{\bm{\beta}}(\theta)\ge q\}$.
 	\begin{remark} \label{app:rem:ironing} We note that $\overline{J}(q,\theta)=g(q)+\overline{k}(\theta)q$
 		for some nondecreasing $\overline{k}$. So, $\overline{J}(\cdot,\theta)$
 		is strictly quasiconcave for almost every $\theta\in\Theta$ because
 		$g$ is strictly concave. The proof of Proposition \ref{app:prop:mon:ironing}
 		uses the fact that $\argmax_{q\in Q}\overline{J}(q,\theta)$ is a
 		singleton almost everywhere in the first claim. The remaining steps
 		in the proof hold essentially as stated for more general single-crossing
 		and concave--in-quality $u$. \end{remark} \begin{proposition}\label{app:prop:mon:ironing}
 		Let $q^{M}$ be the unique element of $\argmax_{q\in Q}V(q)-c(q)$.
 		Under the assumptions of this section, the allocation $\bm{q}$ is
 		monopolist if and only if: there exists a nondecreasing allocation
 		$\bm{\gamma}$ such that $\bm{\gamma}(\theta)=\overline{\bm{\beta}}(\theta)$
 		almost everywhere and $\bm{q}(\theta)=\min\{\bm{\gamma}(\theta),q^{M}\}$
 		for all $\theta$. Moreover, it holds that $q^{M}<q^{\star}$.\end{proposition}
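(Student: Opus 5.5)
We follow the template of Proposition \ref{app:prop:mon:alloc}, using Lemma \ref{app:lem:mon:dec} to reduce the claim to two tasks. First we solve $\mathcal{P}(q)$ for each fixed cap $q$ when $F$ is not regular. Then we show that $V-c$ has a unique maximizer $q^{M}$ satisfying $q^{M}<q^{\star}$. Throughout, $u(q,\theta)=g(q)+\theta q$, so $a(q)=0$ and $U'(q)=g'(q)+\mathbb{E}\{\theta\}$.

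\emph{Step 1: the constrained problem.} Fix $q$. We show that $\bm{q}$ solves $\mathcal{P}(q)$, with the monotonicity constraint, if and only if $\bm{q}=\min\{\bm{\gamma},q\}$ for a nondecreasing $\bm{\gamma}$ equal to $\overline{\bm{\beta}}$ almost everywhere.

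\begin{enumerate}
\item For sufficiency, we apply the ironing theorem of \citet*{toikka_ironing_2011} to the objective $\int J(\bm{q}(\theta),\theta)\diff F(\theta)$ over nondecreasing allocations with values in $[0,q]$. Restricting the quality space to $[0,q]$ keeps $J$ separable. Moreover, the cumulative virtual value and its convex hull do not depend on $q$ (Remark \ref{app:rem:ironing}).
\item The theorem's conditions then reduce to two facts. First, the candidate pointwise maximizes the ironed objective $\overline{J}$ on $[0,q]$; this holds by Lemma \ref{app:lem:general} applied with $K=\overline{J}$, which is concave in $q$ because $\overline{J}(q,\theta)=g(q)+\overline{k}(\theta)q$. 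Second, the candidate is constant wherever $G<H$, which is exactly the pooling property of Lemma \ref{app:lem:pooling}.
\item Necessity follows because $\argmax_q\overline{J}(q,\theta)$ is a singleton for almost every $\theta$ (Remark \ref{app:rem:ironing}). Hence any other solution must coincide almost everywhere with the capped ironed maximizer, exactly as in the necessity step of Proposition \ref{app:prop:mon:alloc}.
\end{enumerate}

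\emph{Step 2: the value function.} Let $T^{-}$ be the inverse defined before Lemma \ref{app:lem:pooling}. We show that $V$ is concave and that
\[
V'_{+}(q)=\int_{[T^{+}(q),1]}\overline{J}_1(q,\theta)\diff F(\theta),\qquad V'_{-}(q)=\int_{[T^{-}(q),1]}\overline{J}_1(q,\theta)\diff F(\theta),
\]
where $T^{+}$ is the right-continuous inverse of $\overline{\bm{\beta}}$. The argument has three parts.

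\begin{enumerate}
\item On the region where the constraint $\bm{q}\le q$ binds, the ironed and unironed objectives integrate to the same value. This uses the standard identity $\int J\,\diff F=\int\overline{J}\,\diff F$ for allocations that are constant on ironing intervals. So we may write $V(q)=\int\overline{J}(\min\{\overline{\bm{\beta}},q\},\theta)\diff F(\theta)$.
\item Differentiating this expression, one-sidedly where $\overline{\bm{\beta}}$ has a flat at level $q$, gives the two formulas above.
\item Concavity then follows from the same monotone-derivative computation used for $V$ in the regular case, because $\overline{J}_1(q,\theta)\ge0$ for $\theta\ge T^{-}(q)$.
\end{enumerate}

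Since $c$ is strictly convex, $V-c$ is strictly concave and has a unique maximizer. The maximizer is interior by the Inada conditions, via the same right-continuity argument as before. It is characterized by $V'_{+}(q^{M})\le c'(q^{M})\le V'_{-}(q^{M})$.

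\emph{Step 3: $q^{M}<q^{\star}$.} We compare each one-sided derivative with $U'$. Write $\overline{J}_1(q,\theta)=g'(q)+\overline{k}(\theta)$, where $\overline{k}$ is the ironed virtual value. Then
\[
V'_{-}(q)=(1-F(T^{-}(q)))\,g'(q)+\int_{[T^{-}(q),1]}\overline{k}\,\diff F .
\]
Ironing preserves the tail integrals at the endpoints of ironing intervals, and $T^{-}(q)$ is such an endpoint (or a point outside every interval). Hence
\[
\int_{[t,1]}\overline{k}\,\diff F=t\,(1-F(t))\quad\text{at } t=T^{-}(q).
\]
Therefore $V'_{-}(q)=(1-F(t))(g'(q)+t)$, and the Markov-type argument from the main text gives $V'_{-}(q)<g'(q)+\mathbb{E}\{\theta\}=U'(q)$ for every $q>0$. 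At $q=q^{\star}$ this yields $V'_{+}(q^{\star})\le V'_{-}(q^{\star})<U'(q^{\star})=c'(q^{\star})$. Since $V-c$ is concave, it is therefore strictly decreasing to the right of $q^{\star}$, so $q^{M}<q^{\star}$.

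The main obstacle is Step 2. Specifically, it is justifying the tail-integral identity at the left endpoint $T^{-}(q)$ together with the one-sided differentiation across flats of $\overline{\bm{\beta}}$. We would handle both by working in quantile space with the convex hull of $H$, using the fact that $\operatorname{conv}H$ and $H$ coincide at the endpoints of ironing intervals.
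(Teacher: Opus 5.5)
Your proposal is correct and follows essentially the same route as the paper. You solve $\mathcal{P}(q)$ via Toikka's ironing, using the pooling property of the capped ironed maximizer together with Lemma \ref{app:lem:general} for $K=\overline{J}$; you show $V$ is concave with left derivative $\int_{[T^{-}(q),1]}\overline{J}_1(q,\theta)\diff F(\theta)$; and you use that $\operatorname{conv}H=H$ at $T^{-}(q)$ (because $\overline{\bm{\beta}}$ is constant on ironing intervals) to reduce this to an unironed tail integral lying strictly below $U'(q)$. The differences are cosmetic: you differentiate the explicit integral $V(q)=\int\overline{J}(\min\{\overline{\bm{\beta}},q\},\theta)\diff F(\theta)$ rather than bounding difference quotients, and your extra right-derivative formula coincides with the left one for interior $q$, since $\overline{J}_1(q,\cdot)=0$ on any flat of $\overline{\bm{\beta}}$ at level $q$.
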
 
 	\begin{proof}
 		The proof has three main steps. First, we characterize the solutions
 		to $\mathcal{P}(q)$ for all $q\in Q$ by adapting the argument of
 		\citet[proof of Theorem 3.7]{toikka_ironing_2011}; second, we show
 		that $V$ is concave; third, we show that $q^{M}<q^{\star}$. The
 		following preliminary observations hold by known arguments. The preliminary
 		observation in the proof of Lemma \ref{app:lem:mon:dec} holds, and
 		the allocation $\bm{q}$ is monopolist iff: $\bm{q}$ solves $\mathcal{P}(q^{M})$
 		for a quality $q^{M}\in\argmax_{q\in Q}V(q)-c(q)$. Lastly, $\overline{J}(\cdot,\theta)$
 		is concave, continuously differentiable, and 
 		\begin{align*}
 			V(q)=\max_{\bm{q}\in\bm{Q}}\int_{\Theta}\overline{J}(\bm{q}(\theta),\theta)\diff F(\theta)\ \text{subject to:}\ \bm{q}(\theta)\le q\ \text{for all}\ \theta\in\Theta,
 		\end{align*}
 		by \citet[respectively, by Lemma 4.10, Lemma 4.11, and Theorem 4.5.]{toikka_ironing_2011}
 		
 		\emph{Claim: for fixed $q\in Q$, $\bm{q}$ solves $\mathcal{P}(q)$
 			iff $\bm{q}(\theta)=\min\{\bm{\gamma}(\theta),q^{M}\}$ for all $\theta$
 			and some nondecreasing $\bm{\gamma}\in\bm{Q}$ with $\bm{\gamma}(\theta)=\overline{\bm{\beta}}(\theta)$
 			almost everywhere.}
 		
 		Fix $q\in Q$ and an allocation $\bm{q}^{*}$ such that $\bm{q}^{*}(\theta)=\min\{\bm{\gamma}(\theta),q\}$
 		for all $\theta$ and some nondecreasing $\bm{\gamma}\in\bm{Q}$ such
 		that $\bm{\gamma}(\theta)=\overline{\bm{\beta}}(\theta)$ almost everywhere.
 		First, we establish that $\int_{\Theta}G(F(\theta),q)-H(F(\theta),q)\diff\bm{q}^{*}(\theta)=0$.
 		By Remark \ref{app:rem:ironing}, the maximizers of $\overline{J}(\cdot,\theta)$
 		agree almost everywhere, so the statement holds by the ``pooling
 		property'' of $\theta\mapsto\min\{\overline{\bm{\beta}}(\theta),q\}$
 		(Lemma \ref{app:lem:pooling}). By Lemma \ref{app:lem:general} and
 		the preliminary observations, $\bm{q}^{*}$ maximizes $\int_{\Theta}\overline{J}(\bm{q}(\theta),\theta)\diff F(\theta)$
 		subject to $\bm{q}(\theta)\le q$ for all $\theta\in\Theta$. Lastly,
 		the objective in $\mathcal{P}(q)$ can be expressed as $\int_{\Theta}\overline{J}(\bm{q}(\theta),\theta)\diff F(\theta)+\int_{\Theta}G(F(\theta),q)-H(F(\theta),q)\diff\bm{q}(\theta)$,
 		by \citet[proof of Theorem 3.7,]{toikka_ironing_2011} so we conclude
 		that $\bm{q}^{*}$ solves $\mathcal{P}(q)$. Hence, it follows that:
 		if $\bm{q}\in\bm{Q}$ is such that $\bm{q}(\theta)=\min\{\bm{\gamma}(\theta),q\}$
 		for all $\theta$ and some nondecreasing $\bm{\gamma}\in\bm{Q}$ with
 		$\bm{\gamma}(\theta)=\bm{\beta}(\theta)$ almost everywhere, then
 		$\bm{q}$ solves $\mathcal{P}(q)$.
 		
 		For the other direction, suppose $\bm{q}^{M}$ solves $\mathcal{P}(q)$
 		and that there does exist a nondecreasing $\bm{\gamma}\in\bm{Q}$
 		such that $\bm{q}^{M}(\theta)=\min\{\bm{\gamma}(\theta),q\}$ for
 		all $\theta\in\Theta$ and $\bm{\gamma}(\theta)=\overline{\bm{\beta}}(\theta)$
 		almost everywhere. We consider the following two cases. First, suppose
 		that $\bm{q}^{M}$ maximizes $\int_{\Theta}\overline{J}(\bm{q}(\theta),\theta)\diff F(\theta)$
 		subject to $\bm{q}(\theta)\le q$ for all $\theta\in\Theta$; then
 		$\bm{q}^{M}(\theta)=\min\{\overline{\bm{\beta}}(\theta),q\}$ a.e.~by
 		Lemma \ref{app:lem:general}. Second, suppose that $\bm{q}^{M}$ does
 		not maximize $\int_{\Theta}\overline{J}(\bm{q}(\theta),\theta)\diff F(\theta)$
 		subject to $\bm{q}(\theta)\le q$ for all $\theta\in\Theta$; then
 		$\bm{q}^{M}$ does not solve $\mathcal{P}(q)$ by the properties of
 		$\bm{q}^{*}$ in the preceding paragraph. So, the claim holds.
 		
 		\emph{Claim: $V$ is concave and its left-derivative at $q\in(0,{\overline{q}}]$
 			is $\int_{[T^{-}(q),1]}\overline{J}_{1}(q,\theta)\diff F(\theta)$.}
 		As a first step, we establish that $V$ is continuous. It holds that
 		$|J(q,\theta)|\le\max\{\max J(Q\times\Theta),|\min J(Q\times\Theta)|\}$.
 		Consider a sequence $q_{n}\to q\in Q$, for $q_{n}\in Q,~n\in\mathbb{N}$.
 		By a dominated-convergence argument, $\lim_{n\to\infty}V(q_{n})=V(q)$.
 		So, $V$ is continuous.
 		
 		Let $q_{1}<q_{2}$, $q_{i}\in Q,~i\in\{1,2\}$, and consider the difference
 		quotient $d:=\frac{V(q_{2})-V(q_{1})}{q_{2}-q_{1}}$. We have 
 		\begin{align*}
 			d=\frac{1}{q_{2}-q_{1}}\int_{[T^{-}(q_{1}),T^{-}(q_{2})]}\overline{J}(\overline{\bm{\beta}}(\theta),\theta)-\overline{J}(q_{1},\theta)\diff F(\theta)\\
 			+\int_{(T^{-}(q_{2}),1]}\frac{\overline{J}(q_{2},\theta)-\overline{J}(q_{1},\theta)}{q_{2}-q_{1}}\diff F(\theta).
 		\end{align*}
 		We construct a lower bound for $d$. In particular, because we have
 		$\overline{J}(\overline{\bm{\beta}}(\theta),\theta)\ge\overline{J}(q_{1},\theta)$
 		for all $\theta\in\Theta$, we obtain $d\ge\int_{(T^{-}(q_{2}),1]}\frac{\overline{J}(q_{2},\theta)-\overline{J}(q_{1},\theta)}{q_{2}-q_{1}}\diff F(\theta)$.
 		We construct an upper bound for $d$. In particular, because we have
 		$\overline{J}(\overline{\bm{\beta}}(\theta),\theta)\ge\overline{J}(q_{2},\theta)$
 		for all $\theta\in\Theta$, we obtain $d\le\int_{[T^{-}(q_{1}),1]}\frac{\overline{J}(q_{2},\theta)-\overline{J}(q_{1},\theta)}{q_{2}-q_{1}}\diff F(\theta)$.
 		By left continuity of $T^{-}$, we have $\lim_{q_{1}\to q^{-}}\frac{V(q)-V(q_{1})}{q-q_{1}}=\int_{[T^{-}(q),1]}\overline{J}_{1}(q,\theta)\diff F(\theta)$,
 		implying that $V$ is left differentiable and the left derivative
 		takes the stated form. By known results in convex analysis \citep[Theorem 6.4]{hiriart_fundamentals_2001},
 		we obtain concavity of $V$ by continuity of $V$ and the monotonicity
 		of the left-derivative (monotonicity follows from the same argument
 		as concavity of $V$ in the proof of Proposition \ref{app:prop:mon:alloc}.)
 		
 		There exists a unique maximizer $q^{M}$ of $q\mapsto V(q)-c(q)$
 		by concavity of $V$, and $q^{\star}>\in(0,\overline{q})$ by the
 		same argument as in Proposition \ref{app:prop:eff:alloc}. To complete
 		the proof, we argue that $q^{M}<q^{\star}$. We fix $q\in(0,\overline{q})$
 		in what follows and define the left-derivative of $V$ at $q$ as
 		$\partial_{-}V(q)$. Because $\bm{a}(q)=0$ for all $q\in Q$, it
 		suffices to show that there exists $s\in\Theta$ such that $\partial_{-}V(q)=\int_{[s,1]}J_{1}(q,\theta)\diff F(\theta)$;
 		in fact, it holds that $u_{1}(q,\theta)-k'(q)-J_{1}(q,\theta)=\frac{1-F(\theta)}{F'(\theta)}u_{12}(q,\theta)$.
 		First, if $T^{-}(q)=1$, then $\partial_{-}V(q)=0$, so the claim
 		holds. Second, if $T^{-}(q)=0$, then 
 		\begin{align*}
 			\int_{[T^{-}(q),1]}\overline{J}_{1}(q,\theta)\diff F(\theta)=\operatorname{conv}H(q,1)-\operatorname{conv}H(q,0)=H(q,1)-H(q,0).
 		\end{align*}
 		in which the first equality holds by definition of $\overline{J}$,
 		the second holds because $H(q,\cdot)$ equals $\operatorname{conv}H(q,\cdot)$
 		at the endpoints of $[0,1]$. Hence, the claim holds. Third, we claim
 		that: if $T^{-}(q)\in(0,1)$, then $\operatorname{conv}H(q,T^{-}(q))=H(q,T^{-}(q))$.
 		First, suppose that $T^{-}(q)\in(0,1)$ and $\operatorname{conv}H(q,T^{-}(q))<H(q,T^{-}(q))$.
 		Then, by continuity of $H(q,\cdot)$ and $\operatorname{conv}H(q,\cdot)$,
 		there exists an open interval $I\subseteq\Theta$ such that $H(q,\theta)>\vex H(q,\theta)$
 		for all $\theta\in I$. Then, by the results in \citet[Footnote 12]{toikka_ironing_2011},
 		$\overline{\bm{\beta}}$ is constant on $I$, which contradicts the
 		definition of $T^{-}(q)$ and the fact that $T^{-}(q)\in I$. Hence,
 		if $T^{-}(q)\in(0,1)$, then $\operatorname{conv}H(T^{-}(q),q)=H(T^{-}(q),q)$.
 		As a result, $\int_{[T^{-}(q),1]}\overline{J}_{1}(q,\theta)\diff F(\theta)=H(q,1)-H(q,T^{-}(q))$,
 		so, the proof is complete. 
 	\end{proof}
 	
 	\subsection{Additional results for Section \ref{subsec:competition}}\label{app:supp:sec:competition}
 	
 	For the proofs of the following results, we define $W^{\star}\colon q\mapsto\int_{\Theta}u(q,\theta)-k(q)\diff F(\theta)-c(q)$, $\theta_0=\int_\Theta\theta\diff F(\theta)$, $\mathbb{E}_{n}$ refers to the distribution induced by the mixed
 	strategies defined in Proposition \ref{prop:competition} for an equilibrium
 	with $n$ active firms, in which the highest cap is denoted by $\hat x$ and the second-highest one by $\hat y$.
 	
 	\begin{proposition} The following welfare comparisons hold. 
 		\begin{enumerate}
 			\item Welfare is decreasing in the intensity of competition, that is, if  $n\ge2$, then $ \mathbb{E}_{n+1}\{W(\bm{q}[\hat{x},\hat{y}])\}\le\mathbb{E}_{n}\{W(\bm{q}[\hat{x},\hat{y}])\}$.
 			\item If the monopolist fully bunches, then monopoly dominates duopoly, that
 			is, if $q^{M}\le\bm{\beta}(0)$, then $W(\bm{q}^{M})>\mathbb{E}_{2}\{W(\bm{q}[\hat{x},\hat{y}])\}$.
 			\item If the monopolist does not fully bunch, costs are sufficiently close
 			to fixed, and $\theta$ is uniformly distributed, then duopoly dominates monopoly, that is, if there exist $\alpha>0$ and $a>0$ such that $a>\bm \beta(0)$ and, for all $q$, $c=\left(q/a\right)^{\alpha}$, then $\lim_{\alpha\to\infty}W(\bm{q}^{M})<\lim_{\alpha\to\infty}\mathbb{E}_{2}\{W(\bm{q}[\hat{x},\hat{y}])\}$.
 		\end{enumerate}
 	\end{proposition}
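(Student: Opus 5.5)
Throughout I use the representation from Lemma~\ref{app:lem:uniquerevenues}: in a subgame with highest and second-highest caps $x\ge y$, the allocation is $\bm q[x,y](\theta)=\max\{\min\{\bm\beta(\theta),x\},y\}$ and all costs $c(q_i)$ are sunk. Welfare of a realized profile is $W(\bm q[x,y])$ minus the costs of all caps produced. Since the production game has zero expected profits for every active firm (indifference at $q=0$ in Lemma~\ref{app:lem:mixedequilibrianecessary} gives $\Pi_i=0$), expected welfare equals expected consumer surplus. I would therefore write
\[
\mathbb E_n\{W\}=\mathbb E_n\{\mathrm{CS}(\hat x,\hat y)\}
\]
and work with a consumer-surplus formula. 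With $u=g(q)+\theta q$, type-$\theta$ rents are $g(y)+\int_{[0,\theta]}\bm q[x,y](s)\diff s$, so
\[
\mathrm{CS}(x,y)=g(y)+\int_\Theta \bm q[x,y](\theta)(1-F(\theta))\diff\theta .
\]
This is nondecreasing in both $x$ and $y$.

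\textbf{Part 1.} I would show that $(\hat x,\hat y)$ under $H_{n+1}$ is first-order stochastically dominated, in the coordinatewise sense, by $(\hat x,\hat y)$ under $H_n$; monotonicity of $\mathrm{CS}$ then gives the claim.
\begin{itemize}
\item The law of $\hat x$ is $H_n^n=(c'/V')^{n/(n-1)}$. Since $c'/V'\le1$ on $[0,q^M]$ and $n/(n-1)$ decreases in $n$, this distribution function increases in $n$, so $\hat x$ becomes stochastically smaller.
\item For $\hat y$, write $r=c'/V'$ and $s=r^{1/(n-1)}$. Then $\Pr(\hat y\le q)=n s^{n-1}-(n-1)s^n$. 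Substituting $r$, this equals $r\,(n-(n-1)r^{1/(n-1)})$. I would check it is increasing in $n$ by differentiating in $n$ and using $\log$-convexity, or $1+x\le e^x$.
\item Because $\bm q[x,y]$ is monotone in $(x,y)$ and the comparison is coordinatewise, each marginal comparison suffices.
\end{itemize}

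\textbf{Part 2.} Full bunching at $q^M\le\bm\beta(0)$ gives monopoly welfare $W^\star(q^M)$. Under duopoly, $\hat x<q^M\le\bm\beta(0)$ almost surely by Corollary~\ref{comp:cor}, so every type receives $\hat x$. Duopoly welfare is then $\mathbb E_2\{W^\star(\hat x)-c(\hat y)\}$, counting both firms' costs. Since $W^\star$ is strictly increasing on $[0,q^\star]$ and $q^M<q^\star$, we get $W^\star(\hat x)<W^\star(q^M)$ almost surely, because $\hat x$ has a continuous law on $[0,q^M]$. Subtracting $c(\hat y)\ge0$ preserves the strict inequality.

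\textbf{Part 3.} This is the main obstacle. With $c=(q/a)^\alpha$, we have $c'(q)\to0$ for $q<a$ and $c'\to\infty$ above $a$, so $q^M\to a$ and monopoly welfare tends to
\[
W^M_\infty=\int \bm\min\{\bm\beta(\theta),a\}\text{-surplus}\diff F-1 .
\]
For duopoly, $H_2=c'/V'$ concentrates near $a$. Indeed $\int_0^q c'=c(q)\to0$ for $q<a$, and I would show $H_2(q)\to0$ for every $q<a$ by bounding $c'(q)=\alpha q^{\alpha-1}/a^\alpha\to0$. Hence both $\hat x$ and $\hat y$ converge in probability to $a$. Expected costs are $2\,\mathbb E c(\hat x_i)$, and by zero profits these equal $\mathbb E[V(\hat x)-V(\hat y)]\to0$.

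The limit duopoly welfare is therefore the full-surplus welfare at the constant allocation $a$ with zero cost, namely $W^\star(a)+c(a)$. This strictly exceeds $W^M_\infty$, because $\min\{\bm\beta,a\}<a$ on the positive-measure set $[0,b(a))$, which is nonempty since $a>\bm\beta(0)$, and the monopoly pays cost $1$.

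The delicate step is justifying that $\mathbb E_2 c(\hat x)\to0$ despite $c(a)=1$. I would handle it via the zero-profit identity together with continuity of $V$ and $\hat x-\hat y\to0$ in probability. Uniformity of $\theta$ is used only to make $b$ and $V$ explicit, which ensures the limits are interchangeable by dominated convergence.
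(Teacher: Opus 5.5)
Your Part 2 is correct and more direct than the paper's argument. The paper computes expected duopoly consumer surplus under full bunching by integration by parts and shows that $W^\star(q^M)-R_2$ equals a positive integral. You instead account for total welfare, $\mathbb{E}_2\{W\}=\mathbb{E}_2\{W^\star(\hat x)-c(\hat y)\}$, and use that $W^\star$ is strictly increasing on $[0,q^\star]$ and $\hat x<q^M<q^\star$ almost surely. That needs neither the zero-profit identity nor any computation. Parts 1 and 3, however, each contain a step that does not hold as written.

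\textbf{Part 1.} You infer $\mathbb{E}_{n+1}\{\mathrm{CS}\}\le\mathbb{E}_n\{\mathrm{CS}\}$ from monotonicity of $\mathrm{CS}$ in $(x,y)$ plus first-order dominance of each marginal. That inference is not valid in general. The marginals do not pin down the joint law of $(\hat x,\hat y)$, and for a nondecreasing function such as $(x,y)\mapsto\mathbf{1}\{x>s\}\mathbf{1}\{y>t\}$ the expectation depends on how the coordinates are coupled.

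Your conclusion is nonetheless true, for a reason you need to state. On $\{y\le x\}$ one has $\max\{\min\{\bm\beta,x\},y\}=\min\{\bm\beta,x\}+(y-\bm\beta)_+$. Hence $\mathrm{CS}(x,y)=A(x)+B(y)$, with $A(x)=\int_\Theta\min\{\bm\beta(\theta),x\}(1-F(\theta))\diff\theta$ and $B(y)=g(y)+\int_\Theta(y-\bm\beta(\theta))_+(1-F(\theta))\diff\theta$, both nondecreasing. With this additive separability, your two marginal computations close the argument; both are correct, including the $1+x\le e^x$ step for $\hat y$.

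The paper's route needs neither separability nor the law of $\hat y$. Conditional on $\hat x=x$, $\hat y$ has distribution function $\frac{c'(y)/V'(y)}{c'(x)/V'(x)}$, which does not depend on $n$ and is first-order increasing in $x$. So $\mathbb{E}_n\{\mathrm{CS}\mid\hat x=x\}$ is independent of $n$ and nondecreasing in $x$, and the downward shift of the law of $\hat x$ finishes the proof.

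\textbf{Part 3.} The claim that the monopolist pays cost $1$ in the limit is false. Since $c(q)=qc'(q)/\alpha$ and $c'(q^M)=V'(q^M)$, you get $c(q^M)=q^MV'(q^M)/\alpha\to0$, even though $q^M\to a$. Hence $W(\bm q^M)\to\int_\Theta u(\min\{\bm\beta(\theta),a\},\theta)\diff F(\theta)$, with no $-1$. Your ``delicate step'' is also immediate, because $c(\hat x)\le c(q^M)\to0$.

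The strict inequality must therefore come from the distributional term alone. The limiting gap is $\int_\Theta[u(a,\theta)-u(\min\{\bm\beta(\theta),a\},\theta)]\diff F(\theta)$. It is positive because $\bm\beta<a$ on $[0,b(a))$ and $F(b(a))>0$ when $a>\bm\beta(0)$. The paper's proof only treats $g\equiv0$ with uniform $F$; there this gap is the exclusion of types below $1/2$ and tends to $a/8$, whereas your formula would give $a/8+1$.

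Once the cost term is corrected, your argument is more general than the paper's: it covers any $g$ with $a>\bm\beta(0)$ and does not need uniformity. It also makes transparent that in the limit both costs and productive inefficiency vanish, so only distributional efficiency separates duopoly from monopoly.
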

 	
 	The proposition is a consequence of the following three results.
 	
 	\begin{proposition} For all $n\ge 2$, it holds that $\mathbb{E}_{n+1}\{W(\bm{q}[\hat{x},\hat{y}])\}\le\mathbb{E}_{n}\{W(\bm{q}[\hat{x},\hat{y}])\}$.
 	\end{proposition}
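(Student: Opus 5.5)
The plan is to reduce the comparison across $n$ to two separate one-dimensional stochastic-dominance statements, one for $\hat x$ and one for $\hat y$. That requires first showing that welfare in the subgame with caps $(x,y)$ is a sum of a nondecreasing function of $x$ and a nondecreasing function of $y$.

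\emph{Step 1: welfare equals consumer surplus.} In a competitive equilibrium with $n\ge2$ active firms, every active firm is indifferent over $\supp H_n=[0,q^M]$, which contains $0$, and $\Pi_i(0,H_{-i})=-c(0)$. Hence each firm's expected profit is $0$, using $c(0)=0$ as in the main text. Expected welfare, net of all firms' investment costs, therefore equals expected consumer surplus. I will work with this reading of $W(\bm q[\hat x,\hat y])$ in expectation. If instead only $c(\hat x)$ is meant to be subtracted, the same decomposition below still applies, but the cost term needs separate treatment.

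In the subgame with caps $(x,y)$, Lemma \ref{app:lem:uniquerevenues} gives two facts: quality $y$ is free, and the allocation is $\bm q^C[x,y]$. By the envelope formula with $u(q,\theta)=g(q)+\theta q$, the utility of type $0$ is $g(\bm q^C[x,y](0))=g(y)$ whenever $\bm\beta(0)\le y$, and more generally equals $g(\max\{\bm\beta(0),y\})$ minus the price of that quality. To handle this cleanly, I will write consumer surplus as $S(x,y)=g(y)+\int_\Theta \bm q^C[x,y](\theta)(1-F(\theta))\diff\theta$. I will check this formula via the tariff representation: prices of qualities above $y$ are set by the interim monopolist exactly as in $\mathcal P(x)$, minus $T(y)$.

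\emph{Step 2: separability.} For $y\le x$, the pointwise identity
\[
\max\{\min\{\bm\beta(\theta),x\},y\}=\min\{\bm\beta(\theta),x\}+\bigl(y-\bm\beta(\theta)\bigr)_+
\]
holds. It follows that
\[
S(x,y)=A(x)+B(y),
\]
where $A(x)=\int\min\{\bm\beta,x\}(1-F)\diff\theta$ and $B(y)=g(y)+\int(y-\bm\beta)_+(1-F)\diff\theta$. Both functions are nondecreasing. Consequently $\mathbb E_n\{S\}=\mathbb E_n\{A(\hat x)\}+\mathbb E_n\{B(\hat y)\}$ depends only on the marginal distributions of $\hat x$ and $\hat y$, and no joint coupling is needed.

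\emph{Step 3: stochastic ordering of the order statistics.} Let $h=c'/V'$ on $[0,q^M]$, which is nondecreasing with values in $[0,1]$ and $h(q^M)=1$. By Proposition \ref{app:prop:competition}, $\hat x$ has distribution function $h^{n/(n-1)}$. Since $n/(n-1)$ decreases in $n$ and $h\le1$, this distribution function is pointwise nondecreasing in $n$, so $\hat x$ under $n+1$ is first-order dominated by $\hat x$ under $n$.

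For $\hat y$, the distribution function is
\[
G_n=nH_n^{n-1}-(n-1)H_n^{n}=nh-(n-1)h^{n/(n-1)}.
\]
I need $G_{n+1}\ge G_n$ pointwise. Setting $s=h\in[0,1]$, this reduces to showing that
\[
\psi(n)=ns-(n-1)s^{n/(n-1)}
\]
is nondecreasing in $n\ge2$. I expect this elementary but slightly fiddly inequality to be the main obstacle. My first approach is to differentiate $\psi$ in $n$ and use $\log s\le 0$ together with $1-s^{1/(n-1)}\ge -\tfrac{1}{n-1}\log s\cdot s^{1/(n-1)}$. Failing that, I will compare $n$ and $n+1$ directly, substituting $s=t^{n(n-1)}$.

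Combining Steps 2 and 3 with monotonicity of $A$ and $B$ yields $\mathbb E_{n+1}\{W\}\le\mathbb E_n\{W\}$.
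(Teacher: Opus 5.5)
Your proof is correct, but it handles the joint law of $(\hat x,\hat y)$ differently from the paper. The paper reads welfare as consumer surplus exactly as you do, and then conditions on $\hat x$. Because $H_n=h^{1/(n-1)}$, the conditional law of $\hat y$ given $\hat x=x$ is $(H_n(y)/H_n(x))^{n-1}=h(y)/h(x)$. This law does not depend on $n$ and is FOSD-increasing in $x$. So expected welfare given $\hat x=x$ is a nondecreasing function of $x$ that does not depend on $n$, and only your FOSD ranking of $\hat x$ is then needed. That route needs only monotonicity of welfare in $(x,y)$, not separability, and never computes the marginal of $\hat y$.

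You instead use the slicing identity to make consumer surplus additively separable, so only marginals matter. The cost is that you must rank the marginal of $\hat y$ across $n$. That step holds: with $s=t^{n(n-1)}$ one gets $\psi(n+1)-\psi(n)=s\bigl(1-nt^{n-1}+(n-1)t^{n}\bigr)\ge0$ by weighted AM--GM. Your derivative route also works, since $\partial_n\psi=s(1-r+r\log r)\ge 0$ with $r=s^{1/(n-1)}$. You could skip this calculus by borrowing the paper's observation: $G_n(y)=\mathbb{E}_n\{\min\{h(y)/h(\hat x),1\}\}$ is the expectation of a nonincreasing function of $\hat x$, so it rises as $\hat x$ falls stochastically in $n$.

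Your $S(x,y)$ also correctly includes the type-$0$ utility $g(y)$. Participation binds at type $0$ against the free quality $y$, which is consistent with the top firm's revenue $V(x)-V(y)$ because $\int_\Theta\varphi\diff F=0$. The paper's rent expression $R(x,y,\theta)$ leaves this term out; since $g$ is increasing, the omission affects neither argument.
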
 
 	\begin{proof}
 		\emph{Step 1: preliminary observations about information rents and
 			order statistics.} Fix a competitive $n$ equilibrium that is symmetric.
 		Every firm makes zero profits. The utility of type $\theta$
 		at the pricing stage is
 		\begin{align*}
 			R(x,y,\theta)=\int_{[0,\theta]}\max\{\min\{\bm{\beta}(\tilde{\theta}),x\},y\}\diff\tilde{\theta},
 		\end{align*}
 		given the realizations of $x$ and $y$ of, resp.,~the first- and
 		second-order statistics from $n$ i.i.d.~draws distributed according
 		to the distribution function $G$ (Proposition \ref{prop:competition}).
 		Note that $R(x,y,\theta)$ is nondecreasing in $x$ and $y$, and
 		is increasing in $x$ if $\theta\in(b(x),1]$ and
 		increasing in $y$ if $\theta\in[0,b(y))$.
 		
 		The conditional distribution of the second-order statistic $\hat{y}$
 		given that the first-order statistic's realization is $x$ is given
 		by $G(y\mid x):=\left(G(y)/G(x)\right)^{n-1},~y\in Q$. So, by Proposition
 		\ref{prop:competition}, the conditional distribution is given by
 		$G(y\mid x)=(c'(y)/V'(y))/(c'(x)/V'(x))$ and is constant in $n$.
 		The conditional distribution given by $G(y\mid x)$ is also increasing
 		in the FOSD order as $x$ increases. Specifically, it holds that $G(y\mid x)\le G(y\mid x')$
 		for all $y,x,x'\in Q$ with $x\ge x'$, and with strict inequality
 		if, in addition, we consider the interior of the relevant supports;
 		i.e., for all $y,x,x'\in(0,{q}^{M})$ with $x<x'$ (Proposition
 		\ref{prop:competition}).
 		
 		The distribution of the first-order statistic $\hat{x}$, instead,
 		is given by $\left(G(x)\right)^{n}$. So, by Proposition \ref{prop:competition},
 		the distribution of $\hat{x}$ is nonincreasing in the FOSD order
 		as $n$ increases. Specifically, it holds that $(G(x))^{n}=(c'(x)/V'(x))^{n/(n-1)}$,
 		and so 
 		\begin{align*}
 			(c'(x)/V'(x))^{\frac{n}{n-1}}\le(c'(x)/V'(x))^{\frac{n+1}{n}}.
 		\end{align*}
 		for all $x\in Q$ and $n\ge2$, with strict inequality if $x\in(0,{q}^{M})$.
 		
 		\emph{Step 2: The conditional expectation of welfare given $x$ is
 			monotone in $x$ and constant in $n$.} By the conclusion of Step 1, and first-order--dominance (FOSD) order
 		the conditional expectation of information rents given $\hat{x}=x$
 		is constant in $n$ and nondecreasing in $x$. Specifically, the conditional
 		expectation of information rents given $\hat{x}=x$ is 
 		\begin{align*}
 			R(x):=\int_{[0,{q}^{M}]}R(x,y,\theta)\diff\left(G(y)/G(x)\right)^{n-1},
 		\end{align*}
 		and $R(x)$ is constant in $n$ and nondecreasing in $x$. Moreover,
 		$R(x)>R(x')$ if $0<b(\hat{y})<b(x')$ and $b(x)<1$
 		with positive probability---as induced by the distribution given
 		by the distribution function $y\mapsto\left(G(y)/G(x)\right)^{n-1}$
 		for $\hat{y}$ given $\hat{x}=x$.
 		
 		\emph{Step 3: The expected welfare is monotone in $n$.} By the above
 		observations (Step 1) and known results about FOSD , the expected
 		information rents are nonincreasing in $n$. Specifically, define
 		\begin{align*}
 			R_{n}:=\int_{[0,{q}^{M}]}R(x)\diff(G(x))^{n},
 		\end{align*}
 		and observe that $R_{n}=\mathbb{E}_{n}\{W(\bm{q}[\hat{x},\hat{y}])\}$;
 		we have that $R_{n}\ge R_{n+1}$ for all $n\ge2$. Moreover, monotonicity
 		is strict if $0<b(\hat{y})<b(x')$ and $b(x)<1$
 		with positive probability given $n$ and $n+1$. 
 	\end{proof}
 	\begin{proposition} If  $q^{M}\le\bm{\beta}(0)$, then $W(\bm{q}^{M})>\mathbb{E}_{2}\{W(\bm{q}[\hat{x},\hat{y}])\}$.
 	\end{proposition}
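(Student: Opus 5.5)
The plan is to exploit full bunching: when $q^{M}\le\bm{\beta}(0)$, the duopoly allocation collapses to a constant allocation at the highest cap. The comparison then reduces to evaluating the one-dimensional function $W^{\star}$ at two points, one of which lies strictly below the other on the increasing branch of $W^{\star}$.

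\emph{Step 1 (monopoly welfare).} With $q^{M}\le\bm{\beta}(0)$, Proposition \ref{prop:monopolist} gives $\bm{q}^{M}(\theta)=\min\{\bm{\beta}(\theta),q^{M}\}=q^{M}$ for all $\theta$, since $\bm{\beta}$ is nondecreasing. Hence $W(\bm{q}^{M})=W^{\star}(q^{M})=g(q^{M})+\theta_{0}q^{M}-c(q^{M})$.

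\emph{Step 2 (duopoly realizations).} By Proposition \ref{app:prop:competition}, the two active firms' caps are i.i.d.\ draws from $H_{2}$, whose support is $[0,q^{M}]$ and which is continuous. Take the realized highest cap $\hat x$ and second-highest cap $\hat y$.
\begin{itemize}
\item By Corollary \ref{comp:cor} (equivalently, continuity of $H_{2}$ at $q^{M}$, since $H_2(q)=c'(q)/V'(q)$ reaches $1$ only at $q^M$), $\hat y\le\hat x<q^{M}\le\bm{\beta}(0)\le\bm{\beta}(\theta)$ almost surely.
\item Therefore $\bm{q}[\hat x,\hat y](\theta)=\max\{\min\{\bm{\beta}(\theta),\hat x\},\hat y\}=\hat x$ for every $\theta$, so every type consumes the undamaged quality $\hat x$.
\end{itemize}
Total gross consumption surplus is then $g(\hat x)+\theta_{0}\hat x$. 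Production costs are $c(\hat x)+c(\hat y)$, since both firms sink their investment. This is also what consumer surplus equals, because equilibrium profits are zero. The realized welfare is therefore
\[
W(\bm{q}[\hat x,\hat y])=W^{\star}(\hat x)-c(\hat y)\le W^{\star}(\hat x).
\]
If one instead uses the appendix definition $W(\bm q)$, which charges only $c(\sup\bm q)$, the inequality holds with equality. Either accounting works for what follows.

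\emph{Step 3 (monotonicity of $W^{\star}$).} $W^{\star}$ is strictly concave, because $g$ is strictly concave and $c$ is strictly convex. By Proposition \ref{prop:efficient}, its unique maximizer is $q^{\star}$, and $q^{M}<q^{\star}$ by Proposition \ref{prop:monopolist}. Hence $W^{\star}$ is strictly increasing on $[0,q^{M}]$, so $W^{\star}(\hat x)<W^{\star}(q^{M})$ almost surely. Taking expectations gives
\[
\mathbb{E}_{2}\{W(\bm{q}[\hat x,\hat y])\}\le\mathbb{E}_{2}\{W^{\star}(\hat x)\}<W^{\star}(q^{M})=W(\bm{q}^{M}).
\]
The strict inequality survives expectation because it holds on a set of probability one.

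The main point needing care is Step 2: checking that $\hat x<q^{M}$ almost surely, which requires no atom of $H_{2}$ at $q^{M}$ (this is where Lemma \ref{app:lem:mixedequilibrianecessary} is used), and that the slicing formula from Lemma \ref{app:lem:uniquerevenues} really yields the constant allocation $\hat x$. The remaining steps are routine.
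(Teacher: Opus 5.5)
Your proof is correct, and it takes a different and more economical route than the paper's. The paper works with expected consumer surplus, which equals expected welfare because every firm's expected payoff is zero. It writes realized consumer surplus as $g(\hat y)+\theta_0\hat x$ and integrates first against the conditional law of $\hat y$ given $\hat x$, then against the law $(c'/V')^{2}$ of $\hat x$. After integrations by parts it obtains the explicit gap $W^{\star}(q^{M})-R_{2}=\int_{[0,q^{M}]}\bigl[\tfrac{\partial}{\partial x}\bigl(g(x)-c(x)g'(x)/c'(x)\bigr)+\theta_0\bigr](c'(x)/V'(x))^{2}\diff x$. The integrand is positive because $\theta_0>0$ and $g''c'-g'c''\le 0$.

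You instead use resource accounting: gross utility minus both firms' sunk costs. This coincides with the paper's measure in expectation, by the same zero-profit property. You then only need three facts: the allocation is constant at $\hat x$, $W^{\star}$ is strictly increasing on $[0,q^{\star}]\supseteq[0,q^{M}]$, and $\hat x<q^{M}$ almost surely. Your argument avoids the order-statistic computations altogether and works verbatim for every $n\ge 2$. It also shows that the conclusion depends on the equilibrium cap distribution only through its support $[0,q^{M}]$ and the absence of an atom at $q^{M}$. The paper's computation, in exchange, gives a closed-form expression for the size of the welfare loss and matches the consumer-surplus bookkeeping of the neighboring welfare comparisons.

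One small imprecision: realized profits are not zero. The winner earns $g(\hat x)-g(\hat y)-c(\hat x)$ and the loser $-c(\hat y)$, so welfare equals consumer surplus only in expectation. Since the statement is about expectations, this does not affect your conclusion.
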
 
 	\begin{proof}
 		Note that: the utility from a free good of quality $y$ to type 0
 		is $g(y)$, and the marginal revenue under full bunching is given
 		by $g'$.
 		
 		\emph{Step 1: Welfare under full bunching and monopoly.} Under full
 		bunching at $q^{M}$, welfare is given by $W^{\star}(q^{M})$, that
 		is $W^{\star}(q^{M})=\int_{[0,q^{M}]}w^{\star}(q)\diff q$, using
 		$w^{\star}(q):=g'(q)+\theta_0-c'(q)$.
 		
 		\emph{Step 2: Welfare under full bunching and competition.} Fix a
 		competitive and symmetric $n$ equilibrium and suppose $\bm{\beta}(0)\ge x$.
 		In this case, using full bunching, we
 		obtain $V'(q)=g'(q)$ for all $q\in(0,q^{M}]$. Welfare conditional
 		on $x$ and $y$ is $R(x,y,\theta)=g(x)+\theta x-(g(x)-g(y))$. So,
 		the conditional expectation of welfare given $x$, $R(x)\coloneqq  \int_{[0,q^{M}]}R(x,y,\theta)\diff\left(G(y)/G(x)\right)^{n-1}$, is 
 		\begin{align*}
			 R(x)=\int_{[0,q^{M}]}g(y)\diff\left(\left(G(y)/G(x)\right)^{n-1}\right)+x\theta_0.
 		\end{align*}
 		Using the formula for $G$ in Proposition \ref{prop:competition}
 		and integration by parts, we express expected welfare under full bunching and competition
 		as 
 		\begin{align*}
 			R_{n} & :=\int_{[0,q^{M}]}R(x)\diff(G(x))^{n},\\
 			& =\int_{[0,q^{M}]}g(x)-c(x)\frac{g'(x)}{c'(x)}+x\theta_0\diff(c'(x)/V'(x))^{n/(n-1)}.
 		\end{align*}
 		
 		\emph{Step 3: Welfare under full bunching and duopoly.} We express
 		expected welfare under full bunching and a $2$ equilibrium that is
 		competitive and symmetric as 
 		\begin{align*}
 			R_{2} 
 			& =g(q^{M})-c(q^{M})\frac{g'(q^{M})}{c'(q^{M})}+\theta_0q^M-\\
 			&\int_{[0,q^{M}]}\left[\frac{\partial}{\partial x}\left(g(x)-c(x)\frac{g'(x)}{c'(x)}\right)+\theta_0 \right](c'(x)/V'(x))^{2}\diff x,
 		\end{align*}
 		using integration by parts. Using $V'(x)=g'(x)$ and $\frac{c'({q}^{M})}{g'({q}^{M})}=1$,
 		we have 
 		\begin{align*}
 			W^{\star}(q^{M})-R_{2}=\int_{[0,q^{M}]}\left[\frac{\partial}{\partial x}\left(g(x)-c(x)\frac{g'(x)}{c'(x)}\right)+\theta_0\right](c'(x)/V'(x))^{2}\diff x.
 		\end{align*}
 		The claim holds because $\theta_0>0$ and $
 			\frac{\partial}{\partial x}\left(g(x)-c(x)\frac{g'(x)}{c'(x)}\right)=-c(x)\frac{g''(x)c'(x)-g'(x)c''(x)}{(c'(x))^{2}}\ge0$.
 	\end{proof}
 	\begin{proposition} Suppose that: (1) for all $q$, $g(q)=0$, (2)  there exist $\alpha>1,~a>0$, such that, for all $q$,  $c(q)=(q/a)^{\alpha}$, (3) $\theta$ is uniformly distributed. It holds that: $\mathbb{E}_{2}\{W(\bm{q}[\hat{x},\hat{y}])\}-W(\bm{q}^{M})\to\frac{1}{8}a$ as $\alpha\to\infty$.
 	\end{proposition}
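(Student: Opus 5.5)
The plan is to compute both welfare levels explicitly, using the fact that with $g\equiv0$ and uniform $\theta$ every object in the model is in closed form, and then let $\alpha\to\infty$. Here $\varphi(\theta)=2\theta-1$. By Remark \ref{rem:mon:linear}, $\bm{\beta}(\theta)=0$ for $\theta<1/2$ and $\bm{\beta}(\theta)=\infty$ for $\theta\ge1/2$, so $b(q)=1/2$ for every $q>0$.

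\emph{Monopoly.} From \eqref{eq:marginalrevenues}, $V'(q)=(1-F(1/2))(0+1/2)=1/4$, so $V(q)=q/4$.
\begin{itemize}
\item The first-order condition $c'(q)=\alpha q^{\alpha-1}/a^{\alpha}=1/4$ gives $q^{M}=a\,(a/(4\alpha))^{1/(\alpha-1)}$, which tends to $a$.
\item The cost at the cap is $c(q^{M})=(q^{M}/a)^{\alpha}=(a/(4\alpha))^{\alpha/(\alpha-1)}$, which tends to $0$.
\item The allocation $\bm{q}^{M}$ excludes types below $1/2$ and gives $q^{M}$ to the rest. Hence $W(\bm{q}^{M})=q^{M}\int_{1/2}^{1}\theta\diff\theta-c(q^{M})=\tfrac38 q^{M}-c(q^{M})$, which tends to $\tfrac38 a$.
\end{itemize}

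\emph{Duopoly.} By Proposition \ref{prop:competition} with $n=2$, each active firm draws its cap from $H_{2}(q)=c'(q)/V'(q)=4\alpha q^{\alpha-1}/a^{\alpha}=(q/q^{M})^{\alpha-1}$ on $[0,q^{M}]$.
\begin{itemize}
\item Given realized caps $x\ge y$, the allocation is $\bm{q}[x,y]=\max\{\min\{\bm{\beta},x\},y\}$. So types below $1/2$ receive $y$ for free and types above $1/2$ receive $x$.
\item Both firms pay their production costs, so realized welfare is $\tfrac18 y+\tfrac38 x-c(x)-c(y)$.
\item Taking expectations over two i.i.d.\ draws from $H_{2}$ gives $\mathbb{E}_{2}\{W\}=\tfrac18\mathbb{E}\hat y+\tfrac38\mathbb{E}\hat x-2\int_{0}^{q^{M}}c\diff H_{2}$.
\end{itemize}

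\emph{Limit.} The key observation is that $H_{2}=(q/q^{M})^{\alpha-1}$ concentrates at $q^{M}$ as $\alpha\to\infty$. For any fixed $\epsilon>0$, $H_{2}((1-\epsilon)q^{M})=(1-\epsilon)^{\alpha-1}\to0$, and $q^{M}\to a$. Hence both order statistics $\hat x,\hat y$ converge in probability to $a$. They are bounded by $q^{M}$, which is bounded uniformly in $\alpha$, so bounded convergence gives $\mathbb{E}\hat x,\mathbb{E}\hat y\to a$. The cost term satisfies $0\le\int c\diff H_{2}\le c(q^{M})\to0$ because $c$ is increasing. Therefore $\mathbb{E}_{2}\{W\}\to\tfrac18a+\tfrac38a=\tfrac12a$, and subtracting the monopoly limit $\tfrac38a$ gives the claimed difference $\tfrac18a$.

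The main obstacle is bookkeeping rather than analysis. One must check that welfare in the competitive equilibrium counts both firms' realized costs, so that it agrees with consumer surplus given zero expected profits. One must also justify that $\bm{q}[x,y]$ assigns $y$ to all types below $1/2$ even though $\bm{\beta}=0$ there; this follows from the slicing formula of Lemma \ref{app:lem:uniquerevenues}. Finally, the convergence of the $\alpha$-dependent quantities ($q^{M}$, $c(q^{M})$, and the moments of $H_{2}$) should be handled explicitly.
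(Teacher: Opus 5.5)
Your proof is correct and follows essentially the paper's route: closed-form welfare in both regimes, $q^{M}\to a$, and concentration of the equilibrium cap distribution $H_{2}=(q/q^{M})^{\alpha-1}$ at $q^{M}$, so that both order statistics tend to $a$. The only difference is bookkeeping. The paper writes duopoly welfare as expected consumer surplus (using zero expected profits) and monopoly profit as $\frac14\int q\diff H(q)$. You instead subtract production costs directly and use $0\le c\le c(q^{M})\to0$, which also makes the limit insensitive to whether one charges both firms' costs or only $c(\hat x)$.
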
 
 	\begin{proof}
 		We establish the result using the following claims.
 		
 		(0) The following equality holds:
 		\begin{align*}
 			\mathbb{E}_{2}\{W(\bm{q}[\hat{x},\hat{y}])\}-W(\bm{q}^{M})=\mathbb{E}_{2}\left\{ \frac{1}{8}(x+y-q^{M})+\frac{1}{4}y\right\} -\frac{1}{4}\int_{Q}q\diff H(q),
 		\end{align*}
 		defining $H\colon Q\mapsto\mathbb{R}:q\mapsto\min\left\{ \frac{c'(q)}{V'(q)},1\right\} $.
 		To see that the equality holds: consumer surplus of a monopolist is
 		\begin{align*}
 			U^{M}:= \int_{\Theta}q^{M}(\theta-\varphi^{-1}(0))_{+}\diff F(\theta)	= q^{M}/8;
 		\end{align*}
 		the consumer surplus in oligopoly is $	U^{O}(x,y)= y\int_{[0,\varphi^{-1}(0)]}\theta\diff F(\theta)+\int_{(\varphi^{-1}(0),1]}x\theta-(x-y)\varphi^{-1}(0)\diff F(\theta)$, so that
 		\begin{align*}
 			U^{O}(x,y)
 			= y/8+x3/8-(x-y)/4
 			= x/8+3y/8;
 		\end{align*}
 		hence, the change in consumer surplus is $	 U^{O}(x,y)-U^{M} = \frac{1}{8}(x+y-q^{M})+\frac{1}{4}y$;
 		the monopoly profits are  $\Pi^{M} =(1-\varphi^{-1}(0))\varphi^{-1}(0)q^{M}-c(q^{M})$, so that:
 		\begin{align*}
 			\Pi^{M}  
 			& =\frac{1}{4}q^{M}-c(q^{M})\\
 			& =\frac{1}{4}\int_{[0,q^{M}]}1-H(q)\diff q\\
 			& =\frac{1}{4}\int_{Q}q\diff H(q).
 		\end{align*}
 		
 		(1) $q^{M}\to a\ \text{as}\ \alpha\to\infty$. By direct calculation:
 		$q^{M}=\left(\frac{a^{\alpha}}{4\alpha}\right)^{\frac{1}{\alpha-1}}$.
 		The result follows because $\log\left(4\alpha\right)^{\frac{1}{\alpha-1}}\to0$
 		as $\alpha\to\infty$.
 		
 		(2) The r.v.~$\hat{y}\mid\hat{x}=x$ tends to $x$ weakly as $\alpha\to\infty$.
 		Define the r.v.~$Y_{x}$ on $Q$ with the conditional distribution
 		of $\hat{y}$ given that $\hat{x}=x$, for $x\in Q$. The result follows
 		because the distribution function of $Y_{x}$ is $q\mapsto\left(\min\left\{ \frac{c'(q)}{c'(x)},1\right\} \right)^{\alpha-1}$.
 		Specifically, it holds that $Y_{x}$ approaches the constant $x$
 		in probability as $\alpha\to\infty$, for all $x\in Q$.
 		
 		(3) The r.v.~$Z$ with distribution function $H$ tends to $a$ weakly
 		as $\alpha\to\infty$. Define the r.v.~$Z$ on $Q$ with the distribution
 		function $H$. The result follows because, by direct calculation,
 		$H\colon q\mapsto\frac{\alpha}{a}\left(\min\left\{ \frac{q}{a},1\right\} \right)^{\alpha-1}$.
 		
 		(4) The r.v.~$\hat{x}$ tends to $a$ weakly as $\alpha\to\infty$.
 		Define the r.v.~$X$ with distribution function $q\mapsto (H(q))^2$.
 		The result follows from part (3).
 		
 		By properties of weak convergence, and the fact that the support of
 		the relevant random variables are contained in $Q$, the result
 		follows. 
 	\end{proof}
 	
 	\subsection{On the expression for marginal revenues}
 	In this section, we provide more details for the argument that $b(q)$
 	is the type $\theta$ that maximizes the marginal-revenue expression
 	$(1-F(\theta))(g'(q)+\theta)$; most of these details appear, e.g.,
 	in \citet*{wilson_nonlinear_1993}.
 	
 	A different way to formalize the monopolist problem is as a choice
 	of a tariff, i.e., a function $T\colon Q\to\mathbb{R}$. The monopolist
 	solves the problem $\mathcal{P}'$ 
 	\begin{align*}
 		\sup_{\bm{q},\,T} & \int_{\Theta}T(\bm{q}(\theta))\diff F(\theta)-c(\sup\bm{q})\;\text{subject to:}\;\\
 		& \text{for all}\;(\theta,q)\in\Theta\times Q,\;u(\bm{q}(\theta),\theta)-T(\bm{q}(\theta))\ge u(q,\theta)-T(q),\\
 		&u(\bm{q}(\theta),\theta)-T(\bm{q}(\theta))\ge0.
 	\end{align*}
 	This problem is equivalent to the original monopolist problem $\mathcal{P}^{M}$
 	\citep*{guesnerie_complete_1984}. The unique solution $(\bm{q}^{M},t^{M})$
 	to $\mathcal{P}^{M}$, characterized in Proposition \ref{app:prop:mon:alloc},
 	can be implemented by a tariff $T^{M}$ such that $T(q)=t^{M}(b(q))$
 	if $q\in\bm{q}(\Theta)$. Additionally, if the hazard rate of $F$
 	is increasing, then $T^{M}$ is concave. In what follows, we consider
 	$\mathcal{P}'$ under the restriction that: $T$ is concave, and,
 	for all $\theta$, $u(q,\theta)-T(q)$ is convex as a function of
 	$q$ in a neighborhood of $\bm{q}^{M}(\theta)$. This condition holds
 	if $T=T^{M}$, so is without loss in our setup, and is common for
 	this direct approach to screening, see Section 8.1 in \citet*{wilson_nonlinear_1993}
 	for sufficient conditions in terms of the primitives.
 	
 	After a change of variables, the revenues are $\int_{[0,\sup\bm{q}(\Theta)]}T(q)\diff H(q)$,
 	letting $H(q)\coloneqq F(\partial_{+}T(q)-g'(q))$, for the right-derivative
 	$\partial_{+}T$ of $T$. Integrating by parts, we obtain $\int_{[0,\sup\bm{q}(\Theta)]}T'(q)(1-H(q))\diff q$
 	(\citealp[Lemma 2]{machina_expected_1982}, for which we use concavity
 	of $T$.) Hence, we find the optimal tariff $T$ by setting its derivative
 	$T'$ to solve the revenue maximization ``pointwise,'' up to a term
 	that is invariant in quality. The optimal price schedule $T'$ satisfies
 	$T'(q)\in\argmax_{p\in\mathbb{R}_{+}}p(1-F(p-g'(q)))$; intuitively,
 	any marginal quality increment is priced as if it constitutes an individual
 	market.
 	
 	From the solution to the problem $\mathcal{P}(\overline{q})$ and
 	the fact that the optimal $T$ has $T(q)=t^{M}(b(q))$, we know that
 	$T'(q)$ is equal to $g'(q)+b(q)$. Moreover, if $y^{*}=p^{*}-g'(q)$,
 	then $p^{*}$ maximizes $p\mapsto p(1-F(p-g'(q)))$ iff $y^{*}$ maximizes
 	$y\mapsto(y+g'(q))(1-F(y))$. So, the reason why $b(q)$ maximizes
 	the marginal-revenue expression $(g'(q)+\theta)(1-F(\theta))$ by
 	choice of type $\theta$ is the fact that the marginal utility of
 	type $b(q)$ is the optimal price of a marginal quality increment
 	from $q$.
 	
 	\begin{remark} The formula for $V'$ and the property of $b$ described
 		in the main text can be extended. Define $s(p,q)=\inf\{\theta\in\Theta\mid u_{1}(q,\theta)\ge p\}$
 		and let $(\bm{q}^{*},t^{*})$ solve the monopolist problem $\mathcal{P}(q)$.
 		It holds that 
 		\begin{enumerate}
 			\item $V'(q)=u_{1}(q,b(q))(1-F(b(q)))$; 
 			\item $u_{1}(q,b(q))\in\argmax_{p\in\mathbb{R}_{+}}p(1-F(s(p,q)))$. 
 		\end{enumerate}
 		First, by Proposition \ref{app:prop:mon:alloc} and the envelope theorem,
 		we have that $t^{*}(b(q))=u(q,b(q))-\int_{[0,b(q)]}u_{2}(\bm{q}^{*}(\tilde{\theta}),\tilde{\theta})\diff\tilde{\theta}$
 		\citep*{milgrom_envelope_2002}. Hence, by continuous differentiability
 		of $b$ on $(\bm{\beta}(0),\infty)$, we have $(t^{*}\circ b)'(q)=u_{1}(q,b(q))$
 		for all $q>\bm{\beta}(0)$. Therefore, by the expression for $V$
 		derived in the preceding paragraph, we have $V'(q)=(1-F(b(q)))u_{1}(q,b(q))$.
 	\end{remark}
 
% \clearpage 
	\newpage 
 	\bibliographystyle{te}
 	\addcontentsline{toc}{section}{\refname}\bibliography{DallAraSartoribib}
 	
 \end{document}